\newtheorem{definition}{Definition}[section]
\newtheorem{theorem}{Theorem}[section]
\newtheorem{proposition}{Proposition}[section]
\newtheorem{corollary}{Corollary}[section]
\newtheorem{remark}{Remark}[section]
\newtheorem{lemma}{Lemma}[section]
\numberwithin{equation}{section}
\title{Dual of the Hopf Algebra Consisting of the Adjacency Matrices}
\author{Zhou Mai \footnote{Address: Colleague of Mathematical Science, Nankai University, Weijin Road, Tianjin City, Republic China; Email address: zhoumai@nankai.edu.cn}}
\begin{document}

\maketitle

\begin{abstract}
In this article we discuss the Hopf algebras spanned
by the adjacency matrices in detail. We show that there
two Hopf algebraic structures concerning the adjacency
matrices, one is the copy of Connes-Kreimer Hopf algebra,
another one is the copy of the dual of Connes-Kreimer
Hopf algebra. 
\end{abstract}

\tableofcontents

\section{Introduction}

It is well known that the adjacency matrices
indecate the multigraphs (see \cite{1}) which
can be regarded as Feynman diagrams without
external lines. To indecate the general Feynman 
diagrams with the external lines, we introduce 
the notation of the extended adjacency matrices.
In the present article we discuss the Hopf 
algebras over $\mathbb{C}$ spanned by
the set of the all adjacency matrices. More 
precisely, the vector spaces under consideration
denoted by $\mathcal{H}_{adj}$ (or  
$\mathcal{H}_{adj(e)}$ in the situation of the 
extended adjacency matrices)
are the ones spanned by the equivalent classes 
of the adjacency matrices. The equivalent relations 
are usual and natural ones (\cite{1}) to describe
the isomorphic classes of the graphs (or of Feynman
diagrams). Due to the correspondence between
the adjacency matrices and Feynman diagrams (\cite{1,5}),
the vector spaces in our setting is another
version of the ones in Connes-Kreimer theory (\cite{2,3,4}).

We prove that there are two Hopf 
algebraic structures on $\mathcal{H}_{adj}$
(or on $\mathcal{H}_{adj(e)}$). The first Hopf
algebra denoted by $(\mathcal{H}_{adj},\oplus,
u,\bigtriangleup,\eta,S)$ (or $(\mathcal{H}_{adj(e)},
\oplus,u,\bigtriangleup,\eta,S)$) is the copy of 
Connes-Kreimer Hopf algebra (\cite{2,3,4}). 
The commutative multiplication $\oplus$ is reduced from
the direct sum of the matrices corresponding to
the disjoint union of the graphs. 
The coproduct $\bigtriangleup$ is defined in 
terms of the quotient which is the copy of the 
quotient of Feynman diagrams. $u$ and $\eta$
are the unit and the co-unit respectivly.
$S$ is the antipode. In this article we focus on
the second Hopf algebra
denoted by $(\mathcal{H}_{adj},\bullet,
u,\bigtriangleup_{1},\eta,S_{1})$
(or $(\mathcal{H}_{adj(e)},\bullet,
u,\bigtriangleup_{1},\eta,S_{1})$) which is isomorphic
to the dual hopf algebra of $(\mathcal{H}_{adj},\oplus,
u,\bigtriangleup,\eta,S)$. The multiplication
$\bullet$ in $(\mathcal{H}_{adj},\bullet,u,
\bigtriangleup_{1},\eta,S_{1})$ is defined 
with the help of the notion of the insertion
which is the copy of the insertion of Feynman
diagrams (\cite{2,3,4}). We detail the multiplication
$\bullet$ and the coproduct $\bigtriangleup_{1}$. 
Moreover, the structure of $(\mathcal{H}_{adj},
\bullet,u,\bigtriangleup_{1},\eta,S_{1})$
is described in a explicit way. The unit $u$ and
the co-unit $\eta$ of $(\mathcal{H}_{adj},
\bullet,u,\bigtriangleup_{1},\eta,S_{1})$ are 
same as ones of $(\mathcal{H}_{adj},\oplus,
u,\bigtriangleup,\eta,S)$. Because $\oplus$
is commutative, $\bigtriangleup_{1}$ is
co-commutative. Both $\bigtriangleup$
and $\bigtriangleup_{1}$ are conilpotent,
therefore the antipodes $S$ and $S_{1}$
can be given by the standard formula concerning
the products and reduced coproducts (\cite{6}).

The present paper is organized as follows.
In the section 2 we discuss the Hopf algebra
consisting of the adjacency matrices which
is a different version of Connes-Kreimer
Hopf algebra by means of the matrix. At
beginning of this section we talk about 
some basic subjects concerning the adjacency 
matrices (or the extended adjacency matrices),
for example, the equivalent relation,
the direct sum and the connectivity. Then
we discuss the quotient of the adjacency
matrices which is parallel to the quotient
of Feynman diagrams in Connes-Kreimer
theory. In addition, based on the notation
of the quotient, we can define the coproduct
on $\mathcal{H}_{adj}$, or on $\mathcal{H}_{adj(e)}$,
such that they become the Hopf algebras
$(\mathcal{H}_{adj},\oplus,
u,\bigtriangleup,\eta,S)$ or $(\mathcal{H}_{adj(e)},
\oplus,u,\bigtriangleup,\eta,S)$.
In the section 3 we consider the insertion
of the adjacency matrices, or the extended
adjacency matrices, which can be regarded as
the translation of the insertion of Feynman
diagrams into the language of the matrix. The
properties of the insertion are discussed in
detail. In the section 4 we turn to 
the dual of $(\mathcal{H}_{adj},\oplus,
u,\bigtriangleup,\eta,S)$ (or $(\mathcal{H}_{adj(e)},
\oplus,u,\bigtriangleup,\eta,S)$). We prove
that the dual of $(\mathcal{H}_{adj},\oplus,
u,\bigtriangleup,\eta,S)$ can be realized
on $\mathcal{H}_{adj}$, i.e. there is a Hopf
algebra $(\mathcal{H}_{adj},\bullet,
u,\bigtriangleup_{1},\eta,S_{1})$ being
isomorphic to the dual of $(\mathcal{H}_{adj},\oplus,
u,\bigtriangleup,\eta,S)$. The product $\bullet$
and the coproduct $\bigtriangleup_{1}$ are
described in detail. Moreover we have
$\mathcal{H}_{adj}=U(\mathbf{P}(\mathcal{H}_{adj}))$,
where $\mathbf{P}(\mathcal{H}_{adj})$ is the
Lie algebra consisting of primitive elements
of $(\mathcal{H}_{adj},\bullet,
u,\bigtriangleup_{1},\eta,S_{1})$. The situation
of the extended adjacency matrices is similar.

\section{Hopf algebras of adjacency matrices}

In this section we will discuss the 
Hopf algebra consisting of the adjacency matrices.
For simplification we focus on the adjacency
matrices with zero diagonal. The general
situation is similar. To indecate Feynman
diagrams with external lines, we introduce
the notation of the extended adjacency matrices
which are also the adjacency matrices divided 
into internal part and external part. 
Actually, a more general situation, 
the complex matrices with zero diagonal,
was discussed in \cite{7}.

\subsection{The basic notations and the connectivity 
about the adjacency matrices}

At the beginning of this subsection we introduce some
notations. In this article we set $[m]=\{1,\cdots,m\}$
for a positive integer $m$. For a finite
set $I$, we let $|I|$ denote the number of the
elements in $I$, and $\mathbf{Part}(I)$ denotes the
set of all partitions of $I$, i.e.

$$
\mathbf{Part}(I)=\{\{I_{i}\}_{i=1}^{k}|I_{i}\subset I,
I=\bigcup_{i=1}^{k}I_{i},I_{i}\cap I_{i^{\prime}}
=\emptyset,i\not=i^{\prime},1\leq i,i^{\prime}\leq k,
k\leq|I|\}.
$$
The symbol $\mathbf{part}(I)$ denotes the set
of all sequences of disjoint subsets in $I$, i.e.

$$
\mathbf{part}(I)=\{\{I_{i}\}_{i=1}^{k}|
\{I_{i}\}_{i=1}^{k}\in\mathbf{Part}
(\bigcup_{i=1}^{k}I_{i}),
\bigcup_{i=1}^{k}I_{i}\subset I\}.
$$
For two sequences of the disjoint subsets
$\{I_{i}\},\{J_{j}\}\in\mathbf{part}(I)$, 
we say $\{I_{i}\}\subset\{J_{j}\}$, if for
each $I_{i}$ there is a $J_{j}$ such that
$I_{i}\subset J_{j}$.

We now turn to the discussion of the adjacency matrices.

\begin{definition}

\begin{itemize}
\item An adjacency matrix
is a symmetric matrix with non-negative integer entries
and zeros along the main diagonal. 
We call $\sum_{i<j}m_{ij}$ the
degree of $M$ denoted by $\mathbf{deg}M$.  
The set of adjacency matrices of
$m\times m$ is denoted by $M_{adj}(m,\mathbb{N})$.
\item Let $M\in M_{adj}(m,\mathbb{N})$ be an
adjacency matrix, 
$a=(a_{1},\cdots,a_{m})\in\mathbb{N}^{m}$ be a 
multiple index. Then, an extended adjacency 
matrix $(M,a)$ is defined to be an adjacency 
matrix of order $m+1$ with the following form,

\begin{equation}
(M,a)=
\begin{pmatrix}
M & a^{T} \\
a & o
\end{pmatrix},
\end{equation}
where $M$ is called the internal part of
$(M,a)$, and $a$ is called the external part
of $(M,a)$. The degree of an extended 
adjacency matrix $(M,a)$ is same as one 
of its internal part, i.e.
$\mathbf{deg}(M,a)=\mathbf{deg}M$.
The set of the all extended adjacency of order
$m+1$ is denoted by $M_{adj}(m+1,\mathbb{N})_{(e)}$.
\end{itemize}	
\end{definition}

\begin{remark}
An adjacency matrix
$M\in M_{adj}(m,\mathbb{N})$ indecates a
Feynman diagram without
external lines and loops, or a graph without
loops. For an extended adjacency
matrix $(M,b)$, $b$ indecates $|b|=b_{1}+
\cdots+b_{m}$ external lines, where $i$th
vertex of the Feynman diagram is assigned 
to $b_{i}$ external lines ($i=1,\cdots,m$).
\end{remark}

\begin{proposition}
Under the addition of the matrices,
$M_{adj}(m,\mathbb{N})$ is a monoid with generators
$\{M(i,j)\}$, where $M(i,j)=(m_{kl})_{m\times m}$ 
satisfies $m_{kl}=m_{lk}=\delta_{ik}\delta_{jl}$, 
$i\leq j,\,k\leq l$.
\end{proposition}

Recalling every row and every column of a permutation 
matrix contain exactly one nonzero entry, which is 1.
Now we define a equivalent relation on $M_{adj}(m,\mathbb{N})$ 
as follows. Let $M_{1},M_{2}\in M_{adj}(m,\mathbb{N})$, then 

\begin{equation}
M_{1}\sim M_{2}\,\Longleftrightarrow\,
M_{1}=PM_{2}P^{T},
\end{equation}
where $P$ is a permutation matrix.
The equivalent relation mentioned above
can be described in a different way. 
Let $M=(m_{ij})_{m\times m}\in M_{adj}(m,\mathbb{N})$,
$\pi\in \textbf{S}_{m}$ be a permutation
$\pi:\{1,\cdots,m\}\to\{1,\cdots,m\}$,

$$
\pi=
\begin{pmatrix}
1 & 2 & \cdots & m \\
\pi(1) & \pi(2) & \cdots & \pi(m)
\end{pmatrix}.
$$
Then, the action of $\pi$ on $M$
is defined to be an adjacency matrix
$\pi(M)=(m_{ij}^{\prime})_{m\times m}$
satisfying $m_{ij}^{\prime}=m_{\pi(i)\pi(j)}$.
Let $M_{1},M_{2}\in M_{adj}(m,\mathbb{N})$, 
then $M_{1}\sim M_{2}$ if and only if
there is a $\pi\in \textbf{S}_{m}$ such that 
$M_{1}=\pi(M_{2})$. Thus, the equivalent classes under
above equivalent relation are the orbits
of the permutation group $\mathbf{S}_{m}$
acting on $M_{adj}(m,\mathbb{N})$.
Let $M\in M_{adj}(m,\mathbb{N})$, we denote
the equivalent class of $M$, or an orbit 
of $M$, by $\{M\}$, then, 
$\{M\}=\{\pi(M)|\pi\in \textbf{S}_{m}\}$.
The set of equivalent class is denoted
by $M_{adj}(m,\mathbb{N})\diagup{\sim}$.
It is obvious that $\mathbf{deg}M=
\mathbf{deg}(PMP^{T})$, where $P$ is
a permutation matrix. Thus we define
$\mathbf{deg}\{M\}=\mathbf{deg}M$.
We will mainly focus on the equivalent classes
,or orbits, below. 

The equivalent relation concerning the adjacency
matrices can be generalized to the situation
of the extended adjacency matrices. Let

$$
(M_{i},b_{i})=
\begin{pmatrix}
M_{i} & b_{i}^{T} \\
b_{i} & 0
\end{pmatrix}
\in M_{adj}(m+1,\mathbb{N})_{e}
$$
be two extended adjacency matrices of 
oeder $m+1$ ($i=1,2$), we say $(M_{1},b_{1})$
is equivalent to $(M_{2},b_{2})$ if and only if
there is a permutation matrix $P$ of order $m$
such that

$$
\begin{pmatrix}
M_{1} & b_{1}^{T} \\
b_{1} & 0
\end{pmatrix}
=
\begin{pmatrix}
P & 0 \\
0 & 1
\end{pmatrix}
\begin{pmatrix}
M_{2} & b_{2}^{T} \\
b_{2} & 0
\end{pmatrix}
\begin{pmatrix}
P^{T} & 0 \\
0 & 1
\end{pmatrix}.
$$
Let $(M,b)\in M_{adj}(m+1,\mathbb{N})_{(e)}$,
$\pi\in\mathbb{S}_{m}$, we define
$\pi((M,b))=(\pi(M),\pi(b))$, where
$\pi(b)=(b_{\pi(1)}.\cdots,b_{\pi(m)})$.
Similar to the previous situation,
we consider the equivalent class

$$
\{(M,b)\}=\{\pi((M,b))|\pi\in\mathbb{S}_{m}\}.
$$
Thus, each equivalent class is the orbit
of the action of $\mathbb{S}_{m}$.

Let

\begin{equation}
M_{adj}(+\infty,\mathbb{N})=
(\bigcup\limits_{m\geq 2}(M_{adj}(m,\mathbb{N})
\diagup\sim)\setminus\{0\})\cup\{0\}.
\end{equation}
In $M_{adj}(+\infty,\mathbb{N})$, we do not
distinguish the zero matrices with different order.
Actually, from the viewpoint of the graphic
theory, zero matrix corresponding to the
empty set.

Let $M_{i}\in M_{adj}(m_{i},\mathbb{N})$
($i=1,2$), then direct sum $M_{1}\oplus M_{2}
\in M_{adj}(m_{1}+m_{2},\mathbb{N})$. Actually,
the direct sum $M_{1}\oplus M_{2}$ can be 
realized by a block diagonal matrix 

$$
M_{1}\oplus M_{2}=
\mathbf{diag}(M_{1}, M_{2})=
\begin{pmatrix}
M_{1} & 0 \\
0 & M_{2}
\end{pmatrix}.
$$

The direct sum mentioned above can be extened into
$M_{adj}(+\infty,\mathbb{N})$.  Let
$M_{1}\in M_{adj}(m_{1},\mathbb{N})$,
$M_{2}\in M_{adj}(m_{2},\mathbb{N})$,
It is obvious that

$$
\mathbf{diag}(M_{1}, M_{2})\sim
\mathbf{diag}(M_{2}, M_{1}).
$$
Furthermore, we have

$$
\begin{array}{c}
\{\mathbf{diag}(\pi_{1}(M_{1}),\pi_{2}(M_{2}))|
\pi_{i}\in\mathbb{S}_{m_{i}},i=1,2\} \\
\subset
\{\pi(\mathbf{diag}(M_{1},M_{2}))|\pi\in
\mathbb{S}_{m_{1}+m_{2}}\}.
\end{array}
$$
Therefore, we can define

\begin{equation}
\{M_{1}\}\oplus\{M_{2}\}=\{M_{1}\oplus M_{2}\}.
\end{equation}
Based on the previous discussion,
we have

$$
\{M_{1}\}\oplus\{M_{2}\}=
\{M_{2}\}\oplus\{M_{1}\}.
$$
Moreover, it is easy to check that
for $M_{i}\in M_{adj}(m_{i},\mathbb{N})$,
$i=1,2,3$, we have 

$$
(\{M_{1}\}\oplus \{M_{2}\})\oplus\{M_{3}\}=
\{M_{1}\}\oplus(\{M_{2}\}\oplus\{M_{3}\})=
\{\mathbf{diag}(M_{1},M_{2},M_{3})\}.
$$
Thus the direct sum (2.2) is associative and 
commutative. On the other hand, it is obvious that

$$
\mathbf{deg}\{M_{1}+M_{2}\}=
\mathbf{deg}\{M_{1}\}+\mathbf{deg}\{M_{2}\}.
$$

Similarly, in the situation of the extended
adjacency matrices, we take

\begin{equation}
M_{adj}(+\infty,\mathbb{N})_{(e)}=
(\bigcup\limits_{m\geq 2}(M_{adj}(m,\mathbb{N})_{(e)}
\diagup\sim)\setminus\{0\})\cup\{0\}.
\end{equation}

Let $a=(a_{1},\cdots,a_{m})\in\mathbb{N}^{m}$,
$b=(b_{1},\cdots,b_{n})\in\mathbb{N}^{n}$ be 
two multiple indices, we define the direct
sum of $a$ and $b$ denoted by $a\boxplus b$
to be a multiple index in $\mathbb{N}^{m+n}$,

\begin{equation}
a\boxplus b=(a_{1},\cdots,a_{m},
b_{1},\cdots,b_{n})\in\mathbb{N}^{m+n}.
\end{equation}
Especially, let $k,l\in\mathbb{N}$, we define
$k\boxplus a=(k,a_{1},\cdots,a_{m})$ and
$k\boxplus l=(k,l)$.

For two extended adjacency matrices
$(M_{i},b_{i})$ ($M_{i}\in 
M_{adj}(m_{i},\mathbb{N}),\,
b_{i}\in\mathbb{N}^{m_{i}}$), we define
their direct sum in the following way:

\begin{equation}
(M_{1},b_{1})\oplus(M_{2},b_{2})=
(M_{1}\oplus M_{2},b_{1}\boxplus b_{2}).
\end{equation} 
$(M_{1},b_{1})\oplus(M_{2},b_{2})$ is also
expressed by a block matrix as following,

$$
(M_{1},b_{1})\oplus(M_{2},b_{2})=
\begin{pmatrix}
M_{1} & 0 & b_{1}^{T} \\
0 & M_{2} & b_{2}^{T} \\
b_{1} & b_{2} & 0
\end{pmatrix}.
$$
It is obvious that

$$
\{\pi_{1}((M_{1},b_{1}))\oplus
\pi_{2}((M_{2},b_{2}))|\pi_{i}
\in\mathbb{S}_{m_{i}},i=1,2\}\subset
\{(M_{1},b_{1})\oplus((M_{2},b_{2}))\},
$$
thus, we do not need to distingush 
$\{(M_{1},b_{1})\}\oplus\{(M_{2},b_{2})\}$
and $\{(M_{1},b_{1})\oplus(M_{2},b_{2})\}$.
In the other words, we have

$$
\{(M_{1},b_{1})\}\oplus\{(M_{2},b_{2})\}
=\{(M_{1},b_{1})\oplus(M_{2},b_{2})\}.
$$
In the situation of the equivalent classes,
the direct sum is commutative, i.e.
we have

$$
\{(M_{1},b_{1})\}\oplus\{(M_{2},b_{2})\}=
\{(M_{2},b_{2})\}\oplus\{(M_{1},b_{1})\}.
$$

\begin{definition}
Let $\{M\}\in M_{adj}(m,\mathbb{N})\diagup\sim$.
\begin{itemize}
\item When $m\geq 4$, if there are 
$M_{1}\in M_{adj}(k,\mathbb{N})$,
$M_{2}\in M_{adj}(m-k,\mathbb{N})$, such that

$$
\{M\}=\{M_{1}\}\oplus\{M_{2}\},
$$
where $M_{1},M_{2}\not=0$, $k\geq 2$, $m-k\geq 2$,
we say $\{M\}$ is disconnected.
Otherwise, we say $\{M\}$ is connected.
\item When $2\leq m\leq 3$,
if on each row (colunm) of $M$ 
there is a non-zero entry, we say $\{M\}$ is connected.
\item An adjacency matrix $M\in M_{adj}(m,\mathbb{N})$
is called a connected one if $\{M\}$ is connected.
\end{itemize}
\end{definition}

\begin{remark}
\begin{itemize}

\item It is well known that the adjacency 
matrices arises from graphic theory to 
characterize the graphs. In other words, the
adjacency matrices can be regarded as "coordinates"
of the graphs, and the graphs indicate the geometric
meaning of the adjacency matrices. The condition
of zero diagonal indicates the graphs without
loops. The connectivity of the adjacency matrices
defined in definition 2.2 is equivalent to 
the connectivity of the graphs.

\item Let $M\in M_{adj}(2,\mathbb{N})$, then

$$
M\,\,is\,\,connected\,\Leftrightarrow\,M\not=0.
$$

\item We say an extended adjacency matrix
$(M,a)$ is connected, if $M$ is connected.

\item We define the zero matrix is connected.
\end{itemize}
\end{remark}

\begin{proposition}
Let $\{M\}\in M_{adj}(m,\mathbb{N})\diagup\sim$ be  
disconnected, then $\{M\}$ adapts
the decomposition as follows

\begin{equation}
\{M\}=\{M_{1}\}\oplus\cdots\oplus\{M_{k}\},
\end{equation}
where each $\{M_{i}\}\in 
M_{adj}(m_{i},\mathbb{N})\diagup\sim$
is connected ($i=1,\cdots,k,\,\,m_{1}+\cdots+m_{k}=m$).
\end{proposition}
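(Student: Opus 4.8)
The plan is to argue by strong induction on the order $m$ of the matrix, using the definition of (dis)connectivity in Definition 2.2 together with the associativity and commutativity of $\oplus$ established above. The statement I actually prove, which is slightly stronger and better suited to induction, is that \emph{every} class $\{M\}\in M_{adj}(m,\mathbb{N})\diagup\sim$ --- connected or not --- can be written as a direct sum $\{M_{1}\}\oplus\cdots\oplus\{M_{k}\}$ of connected classes; the proposition is then the case in which $\{M\}$ is disconnected, where necessarily $k\geq 2$.

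First I would settle the base cases $2\leq m\leq 3$. For $m=2$ a class is connected exactly when $M\neq 0$, and the zero class is connected by convention, so $\{M\}$ is already its own one-term decomposition. For $m=3$, if $\{M\}$ is connected there is nothing to do; if it is disconnected then by definition some row --- and, by symmetry, the corresponding column --- of $M$ vanishes. A short check shows that a nonzero symmetric $3\times 3$ matrix with zero diagonal cannot have two vanishing rows, so after conjugating by a suitable permutation matrix $M$ takes the block form $\mathbf{diag}(M',0)$ with $M'$ a nonzero $2\times 2$ block; hence $\{M\}=\{M'\}\oplus\{0\}$, a direct sum of connected classes.

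For the inductive step, assume the claim for all orders less than $m$, with $m\geq 4$, and let $\{M\}$ be disconnected. By Definition 2.2 there exist nonzero $M_{1}\in M_{adj}(k,\mathbb{N})$ and $M_{2}\in M_{adj}(m-k,\mathbb{N})$ with $2\leq k\leq m-2$ and $\{M\}=\{M_{1}\}\oplus\{M_{2}\}$. Since both $k$ and $m-k$ lie strictly between $1$ and $m$, the induction hypothesis applies to each factor, yielding decompositions $\{M_{1}\}=\{N_{1}\}\oplus\cdots\oplus\{N_{p}\}$ and $\{M_{2}\}=\{N_{p+1}\}\oplus\cdots\oplus\{N_{p+q}\}$ into connected classes. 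Using the associativity and commutativity of $\oplus$ proved above, I then concatenate these to obtain $\{M\}=\{N_{1}\}\oplus\cdots\oplus\{N_{p+q}\}$, which is the desired decomposition.

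The step I expect to require the most care is not the recursion itself --- which terminates because the orders of the factors strictly decrease --- but the bookkeeping at the bottom of the recursion, namely reconciling the two different formulations of connectivity in Definition 2.2 (the recursive direct-sum criterion for $m\geq 4$ versus the row/column criterion for $2\leq m\leq 3$) and correctly accounting for the isolated-vertex blocks $\{0\}$ that arise in the $m=3$ case, which are connected only by the stated convention. One should also verify at each leaf of the recursion that every terminal factor $\{N_{i}\}$ is genuinely connected in the sense of Definition 2.2, so that the final direct sum is a decomposition into connected classes as claimed.
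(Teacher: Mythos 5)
The paper states this proposition bare --- it is followed immediately by its corollaries, with no proof at all --- so there is nothing of the authors' to compare your argument against; your induction fills a genuine gap rather than duplicating an existing proof. The argument itself is correct. Strengthening the claim to \emph{every class of order at least two decomposes into connected classes} is exactly what makes the induction run: for $m\geq 4$, disconnectedness in the sense of Definition 2.2 hands you a splitting $\{M\}=\{M_{1}\}\oplus\{M_{2}\}$ with both blocks nonzero and of order between $2$ and $m-2$, so the inductive hypothesis applies to each factor, and the associativity and commutativity of $\oplus$ established earlier in the section let you concatenate the two decompositions. The base cases are also handled correctly: for $m=2$ disconnectedness would force $M=0$, which the paper declares connected by convention, so that case is vacuous; for $m=3$ your observation that a nonzero symmetric matrix with zero diagonal cannot have two vanishing rows pins a disconnected $M$ down to $\mathbf{diag}(M',0)$ with $M'$ a nonzero $2\times 2$ block.

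The one caveat --- which you rightly single out as the delicate point --- is a defect of the paper's framework, not of your proof. The $m=3$ disconnected case unavoidably produces a block of order $1$ (so $m_{1}=2$, $m_{2}=1$), whereas the paper builds $M_{adj}(+\infty,\mathbb{N})$ only from orders $m\geq 2$ and identifies zero matrices of all orders with the single class $\{0\}$, which simultaneously serves as the unit of the monoid; read literally, $\{M'\}\oplus\{0\}$ would then collapse to $\{M'\}$, of order $2$ rather than $3$. Relatedly, for $m\geq 4$ a matrix such as $\mathbf{diag}(M',0_{1\times 1},0_{1\times 1})$ is \emph{connected} by the letter of Definition 2.2, since no splitting into two nonzero blocks of order $\geq 2$ exists, contrary to the graph-theoretic intuition invoked in Remark 2.2. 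Your proof is correct relative to the definitions as written, and your explicit bookkeeping of the isolated-vertex blocks is the right way to handle these boundary pathologies; resolving them (e.g., by admitting order-$1$ blocks or by refining the treatment of $\{0\}$) is something the paper itself would need to do.
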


\begin{corollary}
Let $M\in M_{adj}(m,\mathbb{N})$, then
$M$ is disconnected if and only if there is
a partition $\{I_{i}\}_{i=1}^{k}\in
\mathbf{Part}([m])$ ($k\geq 2$), such that
each $M_{I_{i}}$ is connected ($i=1,\cdots,k$),
and $M\sim M_{I_{1}}\oplus\cdots\oplus M_{I_{k}}$.
\end{corollary}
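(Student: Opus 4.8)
The plan is to regard this corollary as the entry-level translation of Proposition 2.2: the abstract decomposition of the orbit $\{M\}$ into connected orbits is to be realized concretely by a partition of the index set $[m]$ into the supports of the connected components. Throughout I write $M_{I}$ for the principal submatrix of $M$ cut out by the rows and columns indexed by $I\subset[m]$, so that $M_{I}\in M_{adj}(|I|,\mathbb{N})$ whenever $M\in M_{adj}(m,\mathbb{N})$.

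For necessity, suppose $M$ is disconnected. Then $\{M\}$ is not connected, so the connected decomposition supplied by Proposition 2.2, namely $\{M\}=\{M_{1}\}\oplus\cdots\oplus\{M_{k}\}$ with each $\{M_{i}\}$ connected, has $k\geq 2$. By the definition of the direct sum of orbits in (2.4) this says $M\sim\mathbf{diag}(M_{1},\dots,M_{k})$, so there is a permutation $\pi\in\mathbf{S}_{m}$ with $M=\pi(N)$, where $N=\mathbf{diag}(M_{1},\dots,M_{k})$. The block $M_{l}$ of $N$ occupies a consecutive range $J_{l}\subset[m]$, and $J_{1},\dots,J_{k}$ partition $[m]$. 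Since $M=\pi(N)$ means $m_{ij}=n_{\pi(i)\pi(j)}$, an entry $m_{ij}$ can be nonzero only when $\pi(i)$ and $\pi(j)$ lie in the same $J_{l}$; equivalently, setting $I_{l}=\pi^{-1}(J_{l})$, the matrix $M$ has no nonzero entry linking distinct blocks $I_{l}$. Hence $\{I_{l}\}_{l=1}^{k}\in\mathbf{Part}([m])$, the restriction of $\pi$ to $I_{l}$ exhibits $M_{I_{l}}$ as a relabelling of $M_{l}$, so $M_{I_{l}}\sim M_{l}$ is connected, and reordering the indices block by block gives $M\sim M_{I_{1}}\oplus\cdots\oplus M_{I_{k}}$, which is the claimed decomposition.

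For sufficiency, the partition is already given: from $M\sim M_{I_{1}}\oplus\cdots\oplus M_{I_{k}}$ and (2.4) we obtain $\{M\}=\{M_{I_{1}}\}\oplus\cdots\oplus\{M_{I_{k}}\}$ with $k\geq 2$ connected factors. Grouping these as $\{M_{I_{1}}\}$ against $\{M_{I_{2}}\}\oplus\cdots\oplus\{M_{I_{k}}\}$ displays $\{M\}$ as a direct sum of two orbits, and I would then read off from Definition 2.2 that $\{M\}$ is disconnected.

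The permutation bookkeeping in the necessity direction is routine; the genuine obstacle is reconciling the two directions with the boundary conventions built into Definition 2.2. That definition treats $m\geq 4$ and $2\leq m\leq 3$ separately and, in the range $m\geq 4$, demands two summands that are simultaneously nonzero and of order at least $2$, whereas a connected component produced by Proposition 2.2 may be a single isolated vertex (a $1\times 1$ zero block) and the grouping above may therefore split off a summand of order $1$ or a zero summand. I would handle this by verifying the small orders and the isolated-vertex case directly against Remark 2.2 and the row/column criterion, and by checking that when $M$ is not the distinguished zero element of $M_{adj}(+\infty,\mathbb{N})$ the two grouped parts can always be chosen nonzero of admissible order; pinning down these degenerate cases, rather than the generic block argument, is where the proof needs real care.
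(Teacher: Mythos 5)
The paper offers no proof of this corollary at all (nor of Proposition 2.2, on which it rests), and your derivation is exactly the intended one: unwind $\{M\}=\{M_{1}\}\oplus\cdots\oplus\{M_{k}\}$ into $M=\pi(\mathbf{diag}(M_{1},\dots,M_{k}))$, pull the blocks back along $\pi$ to a partition $I_{l}=\pi^{-1}(J_{l})$ of $[m]$, and, for the converse, regroup the partition into two summands. Your necessity direction is complete and correct, and your permutation bookkeeping is sound.

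The gap is in the program you announce in the final paragraph, and it is worth being precise about why it cannot be carried out. You claim that whenever $M$ is not the zero element, the two grouped parts "can always be chosen nonzero of admissible order." Under the paper's literal definitions this is false. Take $m=4$ and let $M\sim E\oplus 0\oplus 0$, where $E$ is the $2\times 2$ adjacency matrix of a single edge and the two other blocks are $1\times 1$ zero matrices (two isolated vertices). The right-hand side of the corollary holds with the partition $\{1,2\},\{3\},\{4\}$: $E$ is connected by the $m=2$ criterion, the singleton blocks are connected by the convention of Remark 2.2 that zero matrices are connected, and $M$ is equivalent to the direct sum. But $M$ is \emph{not} disconnected in the sense of Definition 2.2: the $m\geq 4$ clause requires $\{M\}=\{M_{1}\}\oplus\{M_{2}\}$ with $M_{1},M_{2}\neq 0$ and both of order at least $2$, and no such splitting exists, since the only order-$(2,2)$ split that does not cut the edge leaves a zero second block, and order-$(3,1)$ splits violate the order condition. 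So in this regime the biconditional itself fails, and no amount of case-checking can discharge the verification you defer; the same difficulty appears for zero matrices of order $\geq 4$. This is a defect of the paper's conventions (the nonzero/order-$\geq 2$ requirements of Definition 2.2, Remark 2.2's declaration that zero matrices are connected, and the admission of singleton parts in $\mathbf{Part}([m])$ are mutually inconsistent) rather than of your block argument, and the paper, by omitting the proof, simply never confronts it. The honest fix is to restrict the statement to $M$ with no zero rows (equivalently, every connected part nonzero of order $\geq 2$); there your grouping does yield two nonzero summands of order $\geq 2$ and your proof closes. As written, however, the promised resolution of the degenerate cases is a step that would fail.
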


\begin{corollary}
Under the direct sum (2.2), $M_{adj}(+\infty,\mathbb{N})$
is a commutative monoid generated by all connected classes. 
\end{corollary}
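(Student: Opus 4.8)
The plan is to verify the two monoid axioms not yet recorded---existence of a unit and the fact that every class is reachable from the connected ones---while borrowing associativity and commutativity of $\oplus$ from the discussion preceding the statement.

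First I would fix the unit. Since all zero matrices are identified with the empty set in the definition of $M_{adj}(+\infty,\mathbb{N})$, the class $\{0\}$ plays the role of the empty graph; for any $\{M\}$ one has $\{0\}\oplus\{M\}=\{0\oplus M\}=\{M\}$, because the direct sum of $M$ with the empty (order-zero) matrix returns $M$ unchanged. Together with the associativity and commutativity of $\oplus$ already established on equivalence classes, this shows that $(M_{adj}(+\infty,\mathbb{N}),\oplus,\{0\})$ is a commutative monoid. The only point requiring care here is the bookkeeping of orders: one must read $\{0\}$ as the empty-set class rather than as a matrix carrying spurious isolated vertices, so that no extra rows or columns are introduced by the direct sum.

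Next I would prove generation. Let $\{M\}$ be an arbitrary element. If $\{M\}=\{0\}$ or $\{M\}$ is connected, then $\{M\}$ is itself one of the asserted generators (recall that the zero class is declared connected). If instead $\{M\}$ is disconnected, Proposition 2.2 supplies a decomposition $\{M\}=\{M_{1}\}\oplus\cdots\oplus\{M_{k}\}$ in which every $\{M_{i}\}$ is connected; since this is precisely a $\oplus$-product of connected classes, $\{M\}$ again lies in the submonoid generated by the connected classes. As every element is of one of these two types, the connected classes generate the whole monoid.

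The argument is therefore short once Proposition 2.2 is in hand, and I expect the main obstacle to be not the generation step---which is essentially a restatement of that proposition---but the correct treatment of the unit: confirming that identifying the zero classes of all orders with the single empty-set class is consistent with $\oplus$ and indeed yields a neutral element, rather than an absorbing or order-shifting one. Once that identification is pinned down, the finiteness of the order of each matrix guarantees that the decomposition furnished by Proposition 2.2 terminates, and the proof closes.
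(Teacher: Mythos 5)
Your proof is correct and follows essentially the route the paper intends: the corollary is stated without proof precisely because it is the combination of Proposition 2.2 (decomposition of any disconnected class into connected ones) with the previously established associativity and commutativity of $\oplus$ and the identification of all zero matrices with the single empty class $\{0\}$ serving as the unit. Your extra care about reading $\{0\}$ as the order-zero (empty) matrix, so that $\{0\}\oplus\{M\}=\{M\}$ introduces no spurious isolated vertices, is exactly the right way to pin down the one point the paper glosses over.
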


\begin{corollary}
Let $M\in M_{adj}(m,\mathbb{N})$,
$I,J\subset[m]$ be two subsets
satisfuing:

\begin{itemize}
\item $J\subset I$,
\item $M_{I}\sim M_{I_{1}}\oplus\cdots\oplus M_{I_{k}}$,
where $\{I_{i}\}_{i=1}^{k}\in\mathbf{Part}(I)$, each
$M_{I_{i}}$ is connected ($i=1,\cdots,k$),
\item $M_{J}\sim M_{J_{1}}\oplus\cdots\oplus M_{J_{l}}$,
where $\{J_{j}\}_{j=1}^{l}\in\mathbf{Part}(J)$, each
$M_{J_{j}}$ is connected ($j=1,\cdots,l$).
\end{itemize}
Then, for each $J_{j}$, there a $I_{i}$ such that
$J_{j}\subset I_{i}$ ($1\leq i\leq k,\,1\leq j\leq l$).
\end{corollary}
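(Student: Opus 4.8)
The plan is to reduce the statement to the monotonicity of connectivity under enlarging the ambient index set, and then to read off each containment $J_j \subset I_i$ from the fact that $J_j$ is a single connected piece. The principle I would exploit is that the components $\{I_i\}$ appearing in $M_I \sim M_{I_1} \oplus \cdots \oplus M_{I_k}$ are exactly the reachability classes inside $I$: by Proposition 2.1 together with Remark 2.2, two indices $p, q \in I$ lie in the same $I_i$ precisely when there is a path $p = v_0, v_1, \ldots, v_r = q$ with every $v_s \in I$ and every $m_{v_s v_{s+1}} > 0$. The identical characterization applies to $J$ and its components $\{J_j\}$, since $M_J$ is itself an induced submatrix of $M$.

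First I would fix a block $J_j$ and an arbitrary pair $p, q \in J_j$. Because $M_{J_j}$ is connected, there is a path joining $p$ and $q$ all of whose intermediate indices lie in $J_j$. Since $J_j \subset J \subset I$, this same path lies in $I$, so $p$ and $q$ are reachable within $I$ and hence belong to a common component $I_i$. Fixing one index $p_0 \in J_j$, say $p_0 \in I_{i_0}$, the previous sentence forces every $q \in J_j$ into $I_{i_0}$ as well; therefore $J_j \subset I_{i_0}$, which is the asserted containment with $i = i_0$. Ranging over $j$ completes the argument. (The degenerate case $|J_j| = 1$ is immediate, since the single index already lies in some $I_i$.)

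The main obstacle is definitional rather than combinatorial. Definition 2.2 characterizes connectivity recursively, through the absence of a nontrivial direct-sum splitting, rather than through paths, so before the monotonicity argument can be invoked I must make the path characterization precise inside the matrix language. Concretely, I would prove that $M_S$ admits no splitting $M_S \sim M_A \oplus M_B$ for any nontrivial partition $A, B$ of $S$ exactly when $S$ is a single reachability class, the forward direction coming from the decomposition of Proposition 2.1 and the reverse from the observation that a direct-sum block can carry no edge between its two summands; the equivalence asserted in Remark 2.2 is what licenses replacing ``no direct-sum splitting'' by ``path-connected.'' I would be careful here, because the naive splitting argument --- writing $A = J_j \cap I_{i_1}$ and $B = J_j \setminus A$ and claiming $M_{J_j} \sim M_A \oplus M_B$ is disconnected --- can stumble when $|A|$ or $|B|$ equals $1$, owing to the order $\geq 2$ requirement in Definition 2.2; the path formulation is preferable precisely because it is insensitive to the sizes of the pieces. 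Once this translation is secured, the monotonicity step is immediate, as a path never leaves the index set it traverses and passing from $J$ to $I$ only adds indices and edges.
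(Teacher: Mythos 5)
Your argument is correct, and there is in fact no written proof in the paper to measure it against: the corollary is stated bare, immediately after Proposition 2.2 and Corollary 2.1, as though self-evident. The justification you supply --- any two indices of $J_j$ are joined by a path inside $J_j$, that path survives unchanged in the larger set $I$, and since $M_I \sim M_{I_1}\oplus\cdots\oplus M_{I_k}$ forbids edges between distinct $I_i$'s, a path beginning in $I_{i_0}$ can never leave $I_{i_0}$ --- is precisely the argument the paper leaves implicit, so in substance you have reconstructed the intended proof rather than found an alternative to it. (One small slip: the decomposition result you invoke is Proposition 2.2, not Proposition 2.1, which concerns the monoid generators.)

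Your caution about Definition 2.2, however, deserves to be taken even more seriously than you take it, because it is not merely an inconvenience of formulation. Literally read, the definition (which requires both summands of a disconnecting splitting to be nonzero and of order at least $2$) is \emph{not} equivalent to path connectivity, and the equivalence asserted in Remark 2.2 is false: a triangle together with an isolated vertex ($m=4$) admits no admissible splitting and is therefore ``connected'' by the definition, yet it has two reachability classes. Worse, under that literal reading the corollary itself fails. Take $m=7$ with triangles on $\{1,2,3\}$ and $\{5,6,7\}$ and vertex $4$ isolated; then $I=[7]$ admits the hypothesis-satisfying decomposition $I_1=\{1,2,3,4\}$, $I_2=\{5,6,7\}$ (both ``connected'' in the sense of Definition 2.2), while $J=J_1=\{3,4,5,6,7\}$ is likewise ``connected'' in that sense but is contained in neither $I_1$ nor $I_2$. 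Consequently the equivalence you propose to establish --- your ``forward direction,'' that absence of a splitting forces a single reachability class --- is not provable; it is false exactly for the order-$\geq 2$ reason you flagged. The honest repair is the posture your main argument already adopts: treat path connectivity as the operative meaning of connectedness (which is what Remark 2.2 intends and what every later use in the paper requires), drop the claim that the splitting characterization can be recovered from the decomposition proposition, and then your proof is complete as written.
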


All discussions about conncetedness
can be generalized to the situation of the
extended adjacency matrices.

\subsection{Quotient}

Let $M\in M_{adj}(m,\mathbb{N})$,
$I=\{i_{1},\cdots,i_{k}\}\subset[m]$
($k\geq 2,\,0<i_{1}<\cdots<i_{k}$), 
then $I$ determines a diagonal submatrix 
$M_{I}=(m_{i_{a},i_{b}})_{k\times k}$ 
of $M$. In fact the subset $I$ determines
a homomorphism of the monoids
 
$$
\mathfrak{R}_{I}:M_{adj}(m,\mathbb{N})
\longrightarrow M_{adj}(k,\mathbb{N}),\,
\mathfrak{R}_{I}:M\mapsto M_{I}.
$$
Conversely, for the given subset
$I\subset[m]$ as above, we 
can define an embedding $\iota_{I}:M_{adj}(k,\mathbb{N})
\hookrightarrow M_{adj}(m,\mathbb{N})$
in the following way. Let $N=(n_{ij})_
{k\times k}\in M_{adj}(k,\mathbb{N})$,
then $\iota_{I}N\in M_{adj}(m,\mathbb{N})$
with the form $\iota_{I}N=(m^{\prime}_{ij})_{m\times m}$
satisfying 
$m^{\prime}_{i_{i}i_{j}}=n_{ij}$ ($i,j=1,\cdots,k$),
$m^{\prime}_{pq}=0$ 
($p\in I^{c}$ or $q\in I^{c}$, $I^{c}=
[m]\setminus I$).
It is obvious that
$(\iota_{I}M_{I})_{I}=M_{I}$.
For another subset 
$J\subset[m]$, if $J\subset I$,
then $M_{J}=(M_{I})_{J}$. 

We now define the quotient of
$M$ by $M_{I}$ as follows.

\begin{definition}
Let $m\geq 2$ be an integer,
$I=\{i_{1},\cdots,i_{k}\}\subset[m]$,	
$I^{c}=[m]\setminus I
=\{j_{1},\cdots,j_{m-k}\}$
($m\geq k\geq 2,\,0<i_{1}<\cdots<i_{k},
\,j_{1}<\cdots<j_{m-k}$).

\begin{itemize}
\item The quotient is a map 

$$
\mathcal{Q}_{m,I}:M_{adj}(m,\mathbb{N})
\longrightarrow
M_{adj}(m-k+1,\mathbb{N}).
$$
For $M=(m_{ij})_{m\times m}
\in M_{adj}(m,\mathbb{N})$,
$\mathcal{Q}_{m,I}(M)$ is called
the quotient of $M$ by $M_{I}$  
defined by the following expression:

\begin{equation}
\mathcal{Q}_{m,I}(M)=
\begin{pmatrix}
0 & m_{1^{\ast},j_{1}}\cdots m_{1^{\ast},j_{m-k}} \\
\begin{array}{c}
m_{j_{1},1^{\ast}} \\ \vdots \\ m_{j_{m-k},1^{\ast}}
\end{array}
& M_{I^{c}}
\end{pmatrix},
\end{equation}
where $m_{j_{b},1^{\ast}}=m_{1^{\ast},j_{b}}
=\sum_{a=1}^{k}m_{i_{a},j_{b}}$ ($b=1,\cdots,m-k$).
The matrix (2.4) is also denoted by
$M\diagup M_{I}$. We define $M\diagup M=0$,
$M\diagup 0=M$.

\item Let $(M,b)\in M_{adj}(m+1,\mathbb{N})_{(e)}$, 
the quotient of $(M,b)$ by $M_{I}$  
is defined to be an extended adjacency matrix in 
$M_{adj}(m-k+2,\mathbb{N})_{(e)}$, denoted by 
$(M,b)\diagup M_{I}$, with the following form,
	
\begin{equation}
(M,b)\diagup M_{I}=
(M\diagup N_{I},b_{\ast}\boxplus b_{I^{c}}),
\end{equation}
where $b_{\ast}=\sum_{i\in I}b_{i}$,
$b_{I^{c}}=(b_{j_{1}},\cdots,b_{j_{m-k}})$.
\end{itemize}
\end{definition}

There is a basic property as follows.

\begin{lemma}
Let $M\in M_{adj}(m,\mathbb{N})$,
$I\subset K\subset[m]$.
Then we have

$$
M_{K}\diagup M_{I}=(M\diagup M_{I})_{J},
$$
where $J=(K\setminus I)\cup\{1^{\ast}\}$.
\end{lemma}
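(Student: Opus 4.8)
The plan is to prove the identity by a direct entry-by-entry comparison: I will check that the two sides are adjacency matrices of the same order, indexed by the same ordered set, and that their entries agree in every position. The point that makes the right-hand side meaningful is that $I\subset K$ forces $K\setminus I\subset I^{c}=[m]\setminus I$, so $J=(K\setminus I)\cup\{1^{\ast}\}$ is genuinely a subset of the index set $\{1^{\ast}\}\cup I^{c}$ of $M\diagup M_{I}$, and the restriction $(M\diagup M_{I})_{J}$ is well defined.

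First I would fix notation for the indices, writing $I=\{i_{1},\dots,i_{k}\}$ and $K\setminus I=\{p_{1},\dots,p_{r}\}$ with $p_{1}<\cdots<p_{r}$, so $r=|K|-k$. Both sides are then matrices of order $r+1=|K|-k+1$: the left because collapsing the $k$-element set $I$ inside the $|K|$-element matrix $M_{K}$ leaves $r$ old vertices together with the single new vertex $1^{\ast}$; the right because $|J|=|K\setminus I|+1=r+1$. I would then check that the orderings agree: in $M_{K}\diagup M_{I}$ the vertex $1^{\ast}$ is listed first and $p_{1},\dots,p_{r}$ keep their increasing order by the definition of the quotient, while in $(M\diagup M_{I})_{J}$ the submatrix operation preserves the ordering of $M\diagup M_{I}$, in which $1^{\ast}$ occupies the first position and the elements of $I^{c}$ follow increasingly. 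Hence the index orderings match on the nose, and it remains to compare entries position by position.

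Next I would run through the three kinds of entries. The diagonal entry at $1^{\ast}$ is $0$ on both sides. For the entry joining $1^{\ast}$ and some $p_{b}\in K\setminus I$, the left side yields $\sum_{a=1}^{k}(M_{K})_{i_{a}p_{b}}=\sum_{a=1}^{k}m_{i_{a}p_{b}}$ (because $i_{a},p_{b}\in K$), while the right side is the $(1^{\ast},p_{b})$ entry of $M\diagup M_{I}$, which is also $\sum_{a=1}^{k}m_{i_{a}p_{b}}$ since $p_{b}\in K\setminus I\subset I^{c}$. For the entry joining two old vertices $p_{b},p_{c}\in K\setminus I$, the left side is $(M_{K})_{p_{b}p_{c}}=m_{p_{b}p_{c}}$ and the right side is the $(p_{b},p_{c})$ entry of $M\diagup M_{I}$, again $m_{p_{b}p_{c}}$. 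Thus all entries agree and the two matrices coincide.

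I expect the only delicate point — not so much an obstacle as a bookkeeping hazard — to be the matching of the index orderings, i.e. verifying that $(M\diagup M_{I})_{J}$ really places $1^{\ast}$ and the $p_{b}$ in the same slots as $M_{K}\diagup M_{I}$, so that the equality holds as honest matrices rather than merely up to the equivalence $\sim$. Finally I would record the degenerate case $K=I$, where $K\setminus I=\emptyset$, $J=\{1^{\ast}\}$, and both sides collapse to the $1\times 1$ zero matrix (using $M\diagup M=0$ on the left and the vanishing diagonal at $1^{\ast}$ on the right); note that $|K|\geq k\geq 2$ ensures every quotient appearing in the statement is defined.
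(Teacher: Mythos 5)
Your proof is correct and takes essentially the same route as the paper's: a direct verification that both sides are matrices on the same ordered index set $\{1^{\ast}\}\cup(K\setminus I)$ with identical entries, where the paper merely organizes this computation in block form after the harmless normalization $I=[n]$, $K=[n+r]$, and then declares the comparison ``a straightforward calculation.'' Your write-up is if anything slightly more complete, since you check the index orderings explicitly (so the equality holds as matrices, not just up to $\sim$) and record the degenerate case $K=I$, both of which the paper passes over.
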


\begin{proof}
Without loss of the generality, we assume
$I=[n]$, $K=[n+r]$ ($n+r<m$). Thus we have
$J=\{1^{\ast}\}\cup\{n+1,\cdots,n+r\}$. Let

$$
M\diagup N_{I}=
\begin{pmatrix}
m_{1^{\ast}1^{\ast}} & m_{1^{\ast}2}
\cdots m_{1^{\ast}r+1} & m_{1^{\ast}r+2}
\cdots m_{1^{\ast}m-n+1} \\
\begin{array}{c}
m_{21^{\ast}} \\
\vdots \\
m_{r+11^{\ast}}
\end{array}
& M_{K\setminus I} & M_{12}^{T} \\
\begin{array}{c}
m_{r+21^{\ast}} \\
\vdots \\
m_{m-n+11^{\ast}}
\end{array}
& M_{12} & M_{K^{c}}
\end{pmatrix},
$$
then 

$$
(M\diagup N_{I})_{J}=
\begin{pmatrix}
m_{1^{\ast}1^{\ast}} & m_{1^{\ast}2}
\cdots m_{1^{\ast}r+1} \\
\begin{array}{c}
m_{21^{\ast}} \\
\vdots \\
m_{r+11^{\ast}}
\end{array}
& M_{K\setminus I} 
\end{pmatrix}.
$$
By a straightforward calculation we
can get the conclusion of lemma.
\end{proof}

\begin{remark}
\begin{itemize}
	
\item In the present article, we focus on the 
adjacency matrices with zero diagonal which 
correspond to the graphs without loops. 
In the general situation, the entrices on 
diagonal may be non-zero. The quotient in
definition 2.3 can be generalized to the situation
of the adjacency matrices with non-zero
diagonal. For instance, from the geometric 
viewpoint, we 
consider the quotient of a Feynman diagram
by a subdiagram. Recalling that a subdiagram of 
a Feynman diagram is determined by a subset
of the internal lines, or spanned by a
subset of the internal lines, and a
subgraph is spanned by a subset of the vertices,
therefore, the adjacency matrix
characterizing this quotient should be
of the form

$$
\mathbf{diag}(m_{1^{\ast}1^{\ast}},0,
\cdots,0)+M\diagup M_{I},
$$
where $M_{I}$ indeecates the subgraph with
same vertices as subdiagram metioned above,
$m_{1^{\ast}1^{\ast}}$ indecates the number
of the loops arising from the procedure
of the quotient ($0\leq m_{1^{\ast}1^{\ast}}
<\mathbf{deg}M_{I}$). 
The above adjacency matrix shows that
when we discuss the quotient, it is enough
for us to consider the situation of the
graphs without loops. The situation of the
extended adjacency matrices is similar.

\item Let $M,\in M_{adj}(m,\mathbb{N})$
(or $(M,a)\in M_{adj}(m,\mathbb{N})_{(e)}$),
$\{I_{i}\}_{i=1}^{k}$ be a sequence 
consisting of disjoint subsets of $[m]$,
i.e. $I_{i}\cap I_{i^{\prime}}=\emptyset$
($i\not=i^{\prime},\,1\leq i,i^{\prime}\leq k$).
We can make quotient repeatedly as follows,

$$
(\cdots((M\diagup M_{I_{1}})\diagup 
M_{I_{2}})\cdots)\diagup M_{I_{k}}.
$$
or

$$
(\cdots(((M,a)\diagup M_{I_{1}})\diagup 
M_{I_{2}})\cdots)\diagup M_{I_{k}}.
$$
We denote above quotient by $M\diagup(M_{I_{i}})$
($(M,a)\diagup(M_{I_{i}})$)for short. 
If $|I_{1}|+\cdots+|I_{k}|=n$, then 
$M\diagup(M_{I_{i}})\in M_{adj}(m-n+k,\mathbb{N})$.
Precisely, we have

$$
\begin{array}{c}
(\cdots((M\diagup M_{I_{1}})\diagup 
M_{I_{2}})\cdots)\diagup M_{I_{k}} \\
=
\begin{pmatrix}
\begin{array}{ccccccc}
0 & m_{1^{\ast}2^{\ast}} &\cdots & &
&\cdots & m_{1^{\ast}\,m-n+k} \\
m_{2^{\ast}\,1^{\ast}} &\ddots &\cdots & & &
\cdots & m_{2^{\ast}\,m-n+k}  \\
\vdots &\vdots &\ddots & & & &\vdots \\
& & & 0 
& m_{k^{\ast}k+1} &\cdots &
m_{k^{\ast}m-n+k} \\
& & & m_{k+1\,k^{\ast}} & & & \\
\vdots & & &\vdots & & M_{I^{c}} & \\
m_{m-n+k\,1^{\ast}} &\cdots & & 
m_{m-n+k\,k^{\ast}} & & & 
\end{array}
\end{pmatrix},
\end{array}
$$
where $I=\bigcup_{i=1}^{k}I_{i}$  and 
$1th,\cdots,ith$ rows (or
$1th,\cdots,ith$ columns) in $M\diagup(M_{I_{i}})$
consist of ideal entries arising from the
quotient. It is obvious that we have
$M\diagup(M_{I_{i}})\sim M\diagup(M_{I_{\sigma(i)}})$
for each $\sigma\in\mathbb{S}_{k}$.
The situation of the extended adjacency matrices 
is similar.
\end{itemize}

\end{remark}

The notation of the quotient can be extended
into the situation of equivalent classes.
Actually, we have the following lemma.

\begin{proposition}
Let $M\in M_{adj}(m,\mathbb{N})$, 
$I\subset[m]$. 
Then, each $\pi\in\mathbf{S}_{m}$ induces
a permutation $\pi_{I}\in\mathbf{S}_{|I|}$ 
such that
	
$$
\pi_{I}(M_{I})=\pi(M)_{\pi^{-1}(I)},
$$
and	
	
\begin{equation}
M\diagup M_{I}\sim\pi(M)
\diagup\pi_{I}(M_{I}).
\end{equation}
\end{proposition}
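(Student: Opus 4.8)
The plan is to split the statement into its two assertions: first construct $\pi_{I}$ and verify the identity $\pi_{I}(M_{I})=\pi(M)_{\pi^{-1}(I)}$, which is pure reindexing, and then establish the equivalence of the two quotients by exhibiting an explicit relabeling permutation and checking it entrywise.

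First I would construct $\pi_{I}$ explicitly. Write $I=\{i_{1}<\cdots<i_{k}\}$ and set $J=\pi^{-1}(I)=\{j_{1}<\cdots<j_{k}\}$; since $\pi$ is a bijection, $|J|=|I|=k$ and $\pi$ restricts to a bijection $J\to I$. Hence for each $a$ there is a unique index $\pi_{I}(a)$ with $\pi(j_{a})=i_{\pi_{I}(a)}$, and this defines a permutation $\pi_{I}\in\mathbf{S}_{k}=\mathbf{S}_{|I|}$. Using the action convention $\sigma(N)_{ab}=N_{\sigma(a)\sigma(b)}$, one computes $\pi_{I}(M_{I})_{ab}=m_{i_{\pi_{I}(a)}\,i_{\pi_{I}(b)}}=m_{\pi(j_{a})\pi(j_{b})}=\pi(M)_{j_{a}j_{b}}$, which is exactly the $(a,b)$ entry of $\pi(M)_{\pi^{-1}(I)}$. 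This yields $\pi_{I}(M_{I})=\pi(M)_{\pi^{-1}(I)}$.

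For the equivalence, the key observation is that, writing $K=\pi^{-1}(I)$, we have $\pi(M)\diagup\pi_{I}(M_{I})=\pi(M)\diagup\pi(M)_{K}$, so both matrices in question are quotients collapsing a $k$-element subset to a single vertex $1^{\ast}$, each of size $m-k+1$. Set $Q=M\diagup M_{I}$, indexed by $\{1^{\ast}\}\cup I^{c}$, and $P=\pi(M)\diagup\pi(M)_{K}$, indexed by $\{1^{\ast}\}\cup K^{c}$; since $\pi(K)=I$ we also have $\pi(K^{c})=I^{c}$. I would then define $\rho$ on the index set of $P$ by $\rho(1^{\ast})=1^{\ast}$ and $\rho(l)=\pi(l)$ for $l\in K^{c}$, which after identifying both index sets with $[m-k+1]$ in their sorted order is a permutation in $\mathbf{S}_{m-k+1}$. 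It then remains to check $P=\rho(Q)$, i.e. $P_{s,t}=Q_{\rho(s)\rho(t)}$, in three cases: the $(1^{\ast},1^{\ast})$ entry is $0$ on both sides; for $l,l'\in K^{c}$ the block entry $P_{l,l'}=m_{\pi(l)\pi(l')}$ equals the $M_{I^{c}}$ entry $Q_{\pi(l)\pi(l')}$; and the collapsed entries must match.

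The hard part is the last case, the collapsed off-diagonal entry, which is the only place where the summation defining the quotient interacts with the relabeling. Here $P_{1^{\ast},l}=\sum_{t\in K}\pi(M)_{t,l}=\sum_{t\in K}m_{\pi(t)\,\pi(l)}$, and this must be shown equal to $Q_{1^{\ast},\pi(l)}=\sum_{i\in I}m_{i\,\pi(l)}$; this is precisely the reindexing of the sum along the bijection $\pi\colon K\to I$. Once this identity is verified, $P=\rho(Q)$ gives $P\sim Q$, which is the asserted equivalence $M\diagup M_{I}\sim\pi(M)\diagup\pi_{I}(M_{I})$.
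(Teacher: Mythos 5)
Your proof is correct and takes essentially the same route as the paper's: construct $\pi_{I}$ from the sorted order of $\pi^{-1}(I)$, verify $\pi_{I}(M_{I})=\pi(M)_{\pi^{-1}(I)}$ by reindexing, and then relate the two quotients via the permutation induced by $\pi$ on the complement, extended by fixing the collapsed index $1^{\ast}$, with the collapsed-row sums matching because $\pi$ restricts to a bijection $\pi^{-1}(I)\to I$. The only cosmetic difference is that the paper states the final step as a conjugation $\mathbf{diag}(1,P_{\tau})(M\diagup M_{I})\mathbf{diag}(1,P_{\tau}^{T})$ with $\tau=\pi_{I^{c}}$, whereas you phrase the same fact as the entrywise identity $P=\rho(Q)$.
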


\begin{proof}
Let $|I|=k$, $I=\{i_{1},\cdots,i_{k}\}
\subset[m]$,	
$I^{c}=[m]\setminus I
=\{j_{1},\cdots,j_{m-k}\}$
($m\geq k\geq 2,\,0<i_{1}<\cdots<i_{k}$,
$0<j_{1}<\cdots<j_{m-k}$), 
$\pi\in\mathbf{S}_{m}$.
Then $\pi$ induces a permutation $\pi_{I}\in\mathbf{S}_{k}$
acting on $I$. Actually, let $\pi^{-1}(I)=
\{l_{1},\cdots,l_{k}\}$, where $\pi^{-1}(i_{a})=l_{a}$
($a=1,\cdots,k$), then $\pi(M)_{\pi^{-1}(I)}=
(m_{\pi(l_{\alpha_{a}})\pi(l_{\alpha_{b}})})_{k\times k}$,
where $0<l_{\alpha_{1}}<\cdots<l_{\alpha_{k}}$.
Thus, we get a permutation $\pi_{I}\in\mathbf{S}_{k}$,

$$
\pi_{I}=
\begin{pmatrix}
1 & \cdots & k \\
\alpha_{1} & \cdots & \alpha_{k}
\end{pmatrix}.
$$
Let $\pi_{I}$ acts on $I$ in such a way
$\pi_{I}(i_{a})=i_{\pi_{I}(a)}=i_{\alpha_{a}}$ 
($a=1,\cdots,k$),
then, it is obvious that we have

$$
\pi(M)_{\pi^{-1}(I)}=\pi_{I}(M_{I}).
$$
Similarly, $\pi$ induces a permutation 
$\pi_{I^{c}}\in\mathbf{S}_{m-k}$ such that

$$
\pi_{I^{c}}(M_{I^{c}})=\pi(M)_{\pi^{-1}(I^{c})}.
$$
By a straightforward calculation, 
we can get
	
$$
\pi(M)\diagup\pi_{I}(M_{I})=
\begin{pmatrix}
 0 & m_{1^{\ast},j_{\tau(1)}},\cdots,
m_{1^{\ast},j_{\tau(m-k)}} \\
\begin{array}{c}
m_{j_{\tau(1)},1^{\ast}} \\ \vdots \\
m_{j_{\tau(m-k)},1^{\ast}}
\end{array}
& \tau(M_{I^{c}})
\end{pmatrix},
$$
where $\tau=\pi_{I^{c}}$.
Let $P_{\tau}$ be a permutation matrix
of order $m-k$ corresponding to the permutation
$\tau\in\mathbf{S}_{m-k}$,
it is obvious that 

$$
\pi(M)\diagup \pi_{I}(M_{I})
=\mathbf{diag}(1,P_{\tau})
(M\diagup N_{I})\mathbf{diag}(1,P_{\tau}^{T}).
$$
	
\end{proof}

\begin{corollary}
Let $(M,a)\in M_{adj}(m+1,\mathbb{N})_{(e)}$, 
$I\subset\underline{m}$, 
$\pi\in\mathbb{S}_{m}$.
Then, there is $\pi_{I}\in\mathbf{S}_{|I|}$
such that
	
$$
(\pi_{I}(M_{I}),\pi_{I}(a_{I}))=
(\pi(M)_{\pi^{-1}(I)},\pi(a)_{\pi^{-1}(I)}),
$$
and
	
$$
(M,a)\diagup M_{I}\sim
(\pi(M),\pi(a))\diagup\pi_{I}(M_{I}).
$$
\end{corollary}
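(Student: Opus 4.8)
The plan is to reduce this statement to its internal-part analogue, the preceding proposition, and to dispose of the external index $a$ by the same bookkeeping that already governs the internal matrix. First I would recall from the preceding proposition the two permutations $\pi_{I}\in\mathbf{S}_{|I|}$ and $\pi_{I^{c}}\in\mathbf{S}_{m-|I|}$ attached to $\pi$, together with the identity on the internal parts
$$
\pi(M)\diagup\pi_{I}(M_{I})=\mathbf{diag}(1,P_{\tau})\,(M\diagup M_{I})\,\mathbf{diag}(1,P_{\tau}^{T}),
\qquad \tau=\pi_{I^{c}},
$$
where $P_{\tau}$ is the permutation matrix of order $m-|I|$. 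This already supplies the internal coordinate of both displayed equations, so the task is to verify their external-part counterparts and to promote the internal equivalence to an equivalence of extended matrices.

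Next I would check the first displayed equation on the external part, namely $\pi_{I}(a_{I})=\pi(a)_{\pi^{-1}(I)}$. Writing $I=\{i_{1},\dots,i_{k}\}$ and $\pi^{-1}(I)=\{l_{1},\dots,l_{k}\}$ with $\pi(l_{a})=i_{a}$, and letting $l_{\alpha_{1}}<\cdots<l_{\alpha_{k}}$ be the increasing rearrangement (so that $\pi_{I}$ is the permutation $a\mapsto\alpha_{a}$ of the preceding proof), I would unwind the definition $\pi(a)=(a_{\pi(1)},\dots,a_{\pi(m)})$ to obtain $\pi(a)_{\pi^{-1}(I)}=(a_{\pi(l_{\alpha_{1}})},\dots,a_{\pi(l_{\alpha_{k}})})=(a_{i_{\alpha_{1}}},\dots,a_{i_{\alpha_{k}}})$, which is exactly $\pi_{I}(a_{I})$. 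The same computation applied to $I^{c}$ gives $\pi(a)_{\pi^{-1}(I^{c})}=\pi_{I^{c}}(a_{I^{c}})=\tau(a_{I^{c}})$, a fact I will use below.

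Finally I would establish the equivalence of the two extended quotients. By Definition 2.3, $(M,a)\diagup M_{I}=(M\diagup M_{I},\,a_{\ast}\boxplus a_{I^{c}})$ with $a_{\ast}=\sum_{i\in I}a_{i}$, while $(\pi(M),\pi(a))\diagup\pi_{I}(M_{I})=(\pi(M)\diagup\pi_{I}(M_{I}),\,a_{\ast}'\boxplus\pi(a)_{\pi^{-1}(I^{c})})$ with $a_{\ast}'=\sum_{l\in\pi^{-1}(I)}a_{\pi(l)}$. Since $\pi$ restricts to a bijection $\pi^{-1}(I)\to I$, the starred entries coincide, $a_{\ast}'=a_{\ast}$, and by the previous paragraph the remaining external entries differ only by $\tau$. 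I would then conjugate the extended matrix of $(M,a)\diagup M_{I}$ by $\mathbf{diag}(1,P_{\tau},1)$, which has the admissible form $\mathbf{diag}(P,1)$ with $P=\mathbf{diag}(1,P_{\tau})$: on the internal block this reproduces the displayed identity from the preceding proposition, while on the external vector it fixes the starred coordinate $a_{\ast}$ and permutes $a_{I^{c}}$ by $\tau$, producing precisely the external part of $(\pi(M),\pi(a))\diagup\pi_{I}(M_{I})$. Hence the two extended adjacency matrices are equivalent, which is the second assertion.

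The routine portion is the index bookkeeping; the one point deserving care is matching the block $\mathbf{diag}(1,P_{\tau},1)$ to the definition of equivalence of extended matrices, that is, confirming that both outer $1$'s are forced: the top-left $1$ fixing the collapsed vertex $1^{\ast}$, so that the single internal permutation $P=\mathbf{diag}(1,P_{\tau})$ simultaneously conjugates the internal block and transports the external vector correctly, and the bottom-right $1$ fixing the distinguished external vertex. I expect this alignment, rather than any computation, to be the only genuine obstacle.
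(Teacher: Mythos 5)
Your proof is correct and takes exactly the route the paper intends: the corollary is stated there as an immediate consequence of Proposition 2.3 (no separate proof is given), and your argument supplies precisely the missing bookkeeping — the external-part identity $\pi_{I}(a_{I})=\pi(a)_{\pi^{-1}(I)}$, the invariance of the starred entry $a_{\ast}$ under the bijection $\pi^{-1}(I)\to I$, and the conjugation by $\mathbf{diag}(1,P_{\tau},1)$, whose block form $\mathbf{diag}(P,1)$ with $P=\mathbf{diag}(1,P_{\tau})$ is exactly what the definition of equivalence for extended adjacency matrices requires. Nothing to correct.
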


\begin{remark}
\begin{itemize}
\item  Let $m$ be a positive integer, 
$I\subset[m]$, $M\in M_{adj}(m,\mathbb{N})$, 
we call

$$
\{\pi(M)_{\pi^{-1}(I)}|\,\pi\in\mathbb{S}_{m}\}
$$
the diagonal sub-class of $\{M\}$ corresponding 
to $I$, denoted by $\{M\}_{I}$.
By proposition 2.3, we know that for 
$M\in M_{adj}(m,\mathbb{N})$, 
$I\subset[m]$ with 
$|I|=k\,(k\geq 2)$ the quotient 
$M\diagup M_{I}$ defines a map

$$
\{\pi(M)|\,\pi\in\mathbb{S}_{m}\}\mapsto
\{\pi(M)\diagup\pi(M)_{\pi^{-1}(I)}|\,\pi\in
\mathbb{S}_{m}\}\subset\{M\diagup M_{I}\},
$$
thus, a map		
$$
\{M\}\mapsto\{M\diagup M_{I}\}.
$$

\item From definition of the quotient,
it is easy to see
	
$$
\mathbf{deg}\{M\}=\mathbf{deg}\{N_{I}\}+
\mathbf{deg}\{M\diagup N_{I}\}.
$$  

\end{itemize}	
\end{remark}

In definition 2.3 we do not require $M$ and $M_{I}$
are connected. From now on, when we discuss
the quotient $M\diagup M_{I}$ given by the 
expression (2.8), we assume both $M$ and $M_{I}$
are connected. In the situation of that
$M$ is connected and $M_{I}$ is disconnected, 
$M_{I}$ will adapt a decomposition
$M_{I}\sim M_{I_{1}}\oplus\cdots\oplus M_{I_{k}}$,
$\{I_{i}\}_{I=1}^{k}\in\mathbf{Part}(I)$,
each $M_{I_{i}}$ is connnected ($i=1,\cdots,k$).
In this situation, the quotient $M\diagup M_{I}$
will be regarded as

\begin{equation}
M\diagup M_{I}=M\diagup(M_{I_{i}}).
\end{equation}
If $M$ is disconnected, but $M_{I}$ is connected,
$M$ will adapt a decomposition $M\sim M_{J_{1}}
\oplus\cdots\oplus M_{J_{l}}$, where
$\{J_{j}\}\in\mathbf{Part}([m])$, and
each $M_{J_{j}}$ is connected ($j=1,\cdots,l$).
In this situation, there is some $J_{j^{\prime}}$
such that $I\subset J_{j^{\prime}}$. The quotient
$M\diagup N_{I}$ should satisfy

\begin{equation}
M\diagup M_{I}\sim(M_{J_{j^{\prime}}}\diagup M_{I})\oplus
(\bigoplus_{j\not=j^{\prime}}M_{J_{j}}).
\end{equation}
The situation of the extended adjacency matrices
is similar. 

We now give a explicit description
about the quotient.

\begin{proposition}
Let $M\in M_{adj}(m,\mathbb{N})$, 
	
$$
M\sim M_{I_{1}}\oplus\cdots\oplus M_{I_{k}},
$$
where each $M_{I_{i}}$ is connected ($i=1,\cdots,k$),
$\{I_{i}\}_{i=1}^{k}\in\mathbf{Part}(\underline{m})$.
For a subset $J\subset\underline{m}$, let
$M_{J}\sim M_{J_{1}}\oplus\cdots\oplus M_{J_{l}}$,
$\{J_{j}\}_{j=1}^{l}\in\mathbf{Part}(J)$, and each
$M_{J_{j}}$ be connected ($j=1,\cdots,l$). 
Then, the quotient of $M$ by $M_{J}$ 
is of the following form,
	
\begin{equation}
M\diagup M_{J}\sim(\bigoplus
\limits_{I_{i}\cap J=\emptyset}M_{I_{i}})
\oplus(\bigoplus\limits_{I_{i}\cap J\not=\emptyset}
M_{I_{i}}\diagup(M_{J_{j}})_{J_{j}\subset I_{i}}).
\end{equation}
\end{proposition}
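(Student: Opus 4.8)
The plan is to derive (2.14) from three facts already established: the containment of each connected piece of $M_{J}$ in a single connected component of $M$ (Corollary 2.3), the convention (2.12) that reads a quotient by a disconnected submatrix as the iterated quotient by its connected pieces, and the order-independence of that iterated quotient recorded in the second bullet of Remark 2.3. First I would apply Corollary 2.3 with $I=[m]$: here $M_{I}=M\sim M_{I_{1}}\oplus\cdots\oplus M_{I_{k}}$ with each $M_{I_{i}}$ connected, and each $M_{J_{j}}$ is connected, so every $J_{j}$ lies in exactly one $I_{i}$. Setting $\mathcal{J}_{i}=\{\,j:J_{j}\subset I_{i}\,\}$ partitions $\{1,\dots,l\}$, and since $J\cap I_{i}=\bigcup_{j\in\mathcal{J}_{i}}J_{j}$ one has the dichotomy $I_{i}\cap J=\emptyset$ if and only if $\mathcal{J}_{i}=\emptyset$. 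This is precisely the index bookkeeping that separates the two sums on the right of (2.14).

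Next, by (2.12) the quotient $M\diagup M_{J}$ equals the iterated quotient $M\diagup(M_{J_{j}})_{j=1}^{l}$, and by the order-independence in Remark 2.3 I may perform the $l$ collapses in any order up to $\sim$; I would group them so that the collapses indexed by $\mathcal{J}_{1}$ come first, then those by $\mathcal{J}_{2}$, and so on. The crux is the local claim that collapsing a single $J_{j}\subset I_{i}$ affects only the block $M_{I_{i}}$. Writing $M$ in its block-diagonal form $\mathbf{diag}(M_{I_{1}},\dots,M_{I_{k}})$, every row and column index of $J_{j}$ sits inside the $I_{i}$-block, so the summed entries $m_{1^{\ast}p}=\sum_{a\in J_{j}}m_{ap}$ defining the new vertex in (2.9) vanish for all $p$ outside $I_{i}$, because the off-diagonal blocks of a direct sum are zero. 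Hence the collapse yields a matrix that is again block-diagonal, with the $I_{i}$-block replaced by $M_{I_{i}}\diagup M_{J_{j}}$ and every other block reproduced verbatim; this is exactly the single-piece localization (2.13), now seen to persist under further collapses. Lemma 2.1 serves as a convenient check here, since reading each intermediate quotient as a diagonal submatrix of its predecessor confirms that the vertices descended from an $I_{i'}$ with $i'\neq i$ are left untouched.

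Running the local step through all $j\in\mathcal{J}_{i}$ for a fixed $i$ collapses precisely the pieces of $J$ living in $I_{i}$ and produces $M_{I_{i}}\diagup(M_{J_{j}})_{j\in\mathcal{J}_{i}}$ in that block, while the blocks indexed by $i'\neq i$ stay block-diagonally separated from it at every stage; collecting over all $i$ gives
$$
M\diagup M_{J}\sim\bigoplus_{i=1}^{k}\bigl(M_{I_{i}}\diagup(M_{J_{j}})_{j\in\mathcal{J}_{i}}\bigr).
$$
When $\mathcal{J}_{i}=\emptyset$ there is nothing to collapse ($M_{I_{i}}\diagup 0=M_{I_{i}}$), and by the dichotomy above these are exactly the $I_{i}$ with $I_{i}\cap J=\emptyset$; splitting the sum into the two corresponding families reproduces (2.14).

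The step I expect to be the main obstacle is the local claim in the second paragraph. One must verify carefully that after a collapse inside $I_{i}$ the newly created summing row and column are genuinely zero against every other block, and that the surviving diagonal sub-blocks---including the partially collapsed image of $M_{I_{i}}$---remain intact up to the permutation implicit in $\sim$, so that the iterated quotient restricted to the vertices descended from $I_{i}$ is literally $M_{I_{i}}\diagup(M_{J_{j}})_{j\in\mathcal{J}_{i}}$. Once block-diagonality is shown to be preserved at each single collapse, the grouping by $\mathcal{J}_{i}$ and the passage to equivalence classes are routine.
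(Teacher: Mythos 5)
Your proposal is correct and takes essentially the same route as the paper: both rest on the corollary that each connected piece $J_{j}$ lies in a unique component $I_{i}$, on the localization of a single collapse to its containing block, and on iterating over the pieces $J_{1},\dots,J_{l}$. The only difference is organizational—you group the collapses by component via the order-independence of iterated quotients (Remark 2.3) and verify the block-localization explicitly, whereas the paper inducts piece by piece with a case analysis on whether the next $J_{j}$ falls in an already-touched component; these are the same argument in different bookkeeping.
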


\begin{proof}
At first, we consider $M\diagup N_{J_{1}}$. In this 
situation, by corollary 2.4, we know that 
there is some $I_{i^{\prime}}$ such that
$J_{1}\subset I_{i^{\prime}}$. From definition 2.3
we know that
	
$$
M\diagup M_{J_{1}}\sim(\bigoplus\limits_
{i\not=i^{\prime}}M_{I_{i}})\oplus
(M_{I_{i^{\prime}}}\diagup M_{J_{1}}).
$$
Because $M_{I_{i^{\prime}}}\diagup M_{J_{1}}$
is conncected, for $J_{2}$ there are two possibilities.

\begin{itemize}
\item There is some $I_{i^{\prime\prime}}$ such that
$J_{2}\subset I_{i^{\prime\prime}}$ 
($i^{\prime\prime}\not=i^{\prime}$).
Then we have
		
$$
M\diagup(M_{J_{j}})_{j=1,2}\sim(\bigoplus\limits_
{i\not=i^{\prime},i^{\prime\prime}}M_{I_{i}})\oplus
(M_{I_{i^{\prime}}}\diagup M_{J_{1}})
\oplus(M_{I_{i^{\prime\prime}}}\diagup M_{J_{2}}).
$$
		
\item $J_{2}\subset I_{i^{\prime}}\setminus J_{1}$,
then we hvae
		
$$
M\diagup(M_{J_{j}})_{j=1,2}\sim(\bigoplus\limits_
{i\not=i^{\prime}}M_{I_{i}})\oplus
(M_{I_{i^{\prime}}}\diagup(M_{J_{j}})_{j=1,2}).
$$
\end{itemize}
Repeating above procedure, inductively,
we can prove the formula (2.14).
\end{proof}

\begin{proposition}
Let $M\in M_{adj}(m,\mathbb{N})$, 
$I\subset[m]$. 
Then $M\diagup M_{I}$ is connected if
and only if $M$ is connected.
\end{proposition}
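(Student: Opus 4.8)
The plan is to pass to the graph picture recorded in Remark 2.2 and to read the quotient directly off the explicit formula (2.8). Write $V=[m]$ and regard $M$ as the graph $G$ on $V$ in which the multiplicity of the edge $\{a,b\}$ is $m_{ab}$; by Remark 2.2, $\{M\}$ is connected precisely when $G$ is connected. Let $c\colon V\to V'$ be the contraction map onto $V'=\{1^{\ast}\}\cup I^{c}$ sending every vertex of $I$ to $1^{\ast}$ and fixing $I^{c}$ pointwise. Formula (2.8) says exactly that $M\diagup M_{I}$ is the graph $G'$ on $V'$ in which, for $x\neq y$, the multiplicity of $\{x,y\}$ equals $\sum_{a\in c^{-1}(x),\,b\in c^{-1}(y)}m_{ab}$ (and the diagonal is zero); in particular $x\sim y$ in $G'$ iff some $a\in c^{-1}(x)$ and $b\in c^{-1}(y)$ satisfy $a\sim b$ in $G$. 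Throughout I use the standing hypothesis that $M_{I}$ is connected, i.e. that the induced subgraph on $I$ is connected.

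For the forward implication, assume $M$ is connected and let $x,y\in V'$. Choose preimages $\tilde x\in c^{-1}(x)$ and $\tilde y\in c^{-1}(y)$; since $G$ is connected there is a walk $\tilde x=v_{0},v_{1},\dots,v_{r}=\tilde y$ in $G$. The image $c(v_{0}),\dots,c(v_{r})$, after suppressing immediate repetitions (which occur only at $1^{\ast}$), is a walk from $x$ to $y$ in $G'$, because every non-repeated consecutive pair $c(v_{t}),c(v_{t+1})$ is an edge of $G'$ by the adjacency rule above. Hence any two vertices of $G'$ are joined, so $M\diagup M_{I}$ is connected. Note that this direction does not even use connectedness of $M_{I}$.

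For the reverse implication I argue by contraposition. Suppose $M$ is disconnected; by Corollary 2.1 write $M\sim M_{J_{1}}\oplus\cdots\oplus M_{J_{l}}$ with $l\geq 2$, $\{J_{j}\}\in\mathbf{Part}([m])$ and each $M_{J_{j}}$ connected, so the $J_{j}$ are the connected components of $G$. Because the subgraph on $I$ is connected, $I$ cannot meet two distinct components, so $I\subset J_{j'}$ for a unique $j'$ (this is the point where connectedness of $M_{I}$ is essential; it is also Corollary 2.3 applied with ambient set $[m]$). Now partition $V'$ as $A=\{1^{\ast}\}\cup(J_{j'}\setminus I)$ and $B=V\setminus J_{j'}=\bigcup_{j\neq j'}J_{j}$. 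Both classes are nonempty, since $1^{\ast}\in A$ and $B\neq\emptyset$ as $l\geq 2$; moreover $|V'|=m-|I|+1\geq 2$ because $|I|\leq|J_{j'}|\leq m-1$. By the adjacency rule an edge of $G'$ joining $A$ to $B$ would force an edge of $G$ between $J_{j'}$ and its complement, which is impossible. Thus $V'$ splits into two nonempty classes with no edge of $G'$ joining them, so after a suitable permutation $M\diagup M_{I}\sim M'_{A}\oplus M'_{B}$; refining each block into its connected components exhibits $M\diagup M_{I}$ as a direct sum of at least two connected blocks, whence $M\diagup M_{I}$ is disconnected by Corollary 2.1.

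The one place demanding care, and the main obstacle, is the boundary case $I=J_{j'}$, in which the contracted block $M_{J_{j'}}\diagup M_{I}$ equals $0$. One must not read this $0$ through the convention that identifies all zero matrices with the empty graph; the explicit formula (2.8) produces a matrix of order $m-|I|+1$ whose first vertex $1^{\ast}$ survives, now as an isolated vertex, so that $A=\{1^{\ast}\}$ and $M\diagup M_{I}$ still contains a genuine isolated vertex alongside the nonempty $B$. Keeping $1^{\ast}$ honest is exactly what makes the vertex partition $A\sqcup B$ work, and it is also what reconciles the statement with the decomposition (2.13) (equivalently Proposition 2.4). Apart from this bookkeeping, the only routine step is translating ``a nonempty partition with no joining edge'' into the matrix-level connectivity of Definition 2.2, including the small-order cases $2\leq m\leq 3$ and the isolated-vertex convention; these are immediate from Corollary 2.1.
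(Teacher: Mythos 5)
Your proof is correct, but it reaches the result by a genuinely different route than the paper's. In the forward direction ($M$ connected $\Rightarrow$ $M\diagup M_{I}$ connected) the paper argues by contradiction entirely at the matrix level: a splitting $M\diagup M_{I}\sim (M\diagup M_{I})_{J}\oplus(M\diagup M_{I})_{J^{c}}$ with $1^{\ast}\in J$ forces $(M\diagup M_{I})_{J^{c}}=M_{J^{c}}$ and then lifts to a splitting $M\sim M_{J^{\prime}\cup I}\oplus M_{J^{c}}$, $J^{\prime}=J\setminus\{1^{\ast}\}$, contradicting connectedness of $M$; you instead prove the implication directly by projecting walks under the contraction map, which avoids the contradiction format and, as you note, uses no hypothesis on $M_{I}$. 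In the reverse direction the paper is a one-line citation of Proposition 2.4 (the decomposition formula for the quotient of a disconnected $M$), whereas you re-derive exactly the instance needed, via the partition $A=\{1^{\ast}\}\cup(J_{j^{\prime}}\setminus I)$, $B=\bigcup_{j\neq j^{\prime}}J_{j}$ and the observation that no edge of the contracted graph joins $A$ to $B$. What your version buys is the explicit treatment of the degenerate case $I=J_{j^{\prime}}$, which the paper's appeal to Proposition 2.4 silently skips and which is genuinely dangerous: if one applies the conventions $M\diagup M=0$ and the identification of all zero matrices, the quotient of, say, $(\mathrm{edge})\oplus(\mathrm{triangle})$ by the edge collapses to the triangle, which is connected, and the stated equivalence would fail. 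Your insistence on reading $M\diagup M_{I}$ through the explicit quotient formula, so that $1^{\ast}$ survives as an honest isolated vertex, together with Corollary 2.1 as the criterion for disconnectedness, is precisely the reading under which the proposition is true; in this respect your proof patches a gap that the paper's own proof does not acknowledge. The price is length: the paper's argument is shorter because it delegates the whole converse to Proposition 2.4, while yours is self-contained.
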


\begin{proof}
Without loss of generality, we assume
$M_{I}$ is connected. First we assume
$M$ is connected, but $M\diagup M_{I}$ 
is disconnected, then there is a subset 
$J\subset(\{1^{\ast}\}\cup\{1,\cdots,m-k\})$, 
such that $1^{\ast}\in J$,
$(M\diagup M_{I})_{J}$ is conncted, and
$M\diagup M_{I}\sim(M\diagup M_{I})_{J}
\oplus(M\diagup M_{I})_{J^{c}}$, where
$J^{c}=\{1,\cdots,m-k\}\setminus J$, $k=|I|$. It is 
obvious that $J^{c}\subset I^{c}$, thus
$(M\diagup M_{I})_{J^{c}}=M_{J^{c}}$.
Due to definition 2.3, it is easy to check that
$M\sim M_{J^{\prime}\cup I}\oplus M_{J^{c}}$,
where $J^{\prime}=J\setminus\{1^{\ast}\}$.
Up to now, we reach a contradiction.

Suppose $M\diagup M_{I}$ is connected, 
by proposition 2.4, $M$ is also connected.  
\end{proof}

Regarding the quotient as an operation,
we will prove that the quotient is compatible
with the direct sum of the adjacency matrices.
For two adjacency matrices 
$M\in M_{adj}(m,\mathbb{N})$ and
$N\in M_{adj}(n,\mathbb{N})$, we can identify
$M\oplus N$ with $\mathbf{diag}(M,N)$,
which means we embed $[n]$ into $[m+n]$.
In this situation we will identify
$[n]$ with $\{m+1,\cdots,m+n\}$.
Thus, for a subset $J\subset[n]$,
we do not distinguish between $J$ and
$\{j+m|j\in J\}\subset\{m+1,\cdots,m+n\}$.
Conversely, for any subset $K\subset[m+n]$,
we have a decomposition $K=K_{1}\cup K_{2}$,
where $K_{1}=K\cap[m]$ and
$K_{2}=K\cap\{m+1,\cdots,m+n\}$, 
$K_{2}$ can be regarded as a subset
of $[n]$. Let $M$ adapt the decomposition 
$M\sim M_{I_{1}}\oplus\cdots\oplus M_{I_{k}}$, 
$N$ adapt the decompsition 
$N\sim N_{J_{1}}\oplus\cdots\oplus 
N_{J_{l}}$, where 
$\{I_{i}\}\in\mathbf{Part}(I)$,
$\{J_{j}\}\in\mathbf{Part}(J)$,
each $M_{I_{i}}$ and each $N_{J_{j}}$
are connected ($i=1,\cdots,k,\,j=1,\cdots,l$).
Then we have

$$
M\oplus N\sim(\bigoplus\limits_{i=1}^{k}M_{I_{i}})
\oplus(\bigoplus\limits_{j=1}^{l}N_{J_{j}}).
$$

On the other hand,
if $M_{K_{1}}\sim\bigoplus_{\alpha=1}^{p}M_{D_{\alpha}}$, 
and $N_{K_{2}}\sim\bigoplus_{\beta=1}^{q}N_{E_{\beta}}$,
where $\{D_{\alpha}\}\in\mathbf{Part}(K_{1})$,
$\{E_{\beta}\}\in\mathbf{Part}(K_{2})$,
each $M_{D_{\alpha}}$ and each $N_{E_{\beta}}$
are connected, it is obvious that 
$(M\oplus N)_{K}$
adapts the following decomposition,

$$
(M\oplus N)_{K}\sim(\bigoplus\limits
_{\alpha=1}^{p}M_{D_{\alpha}})\oplus
(\bigoplus\limits_{\beta=1}^{q}N_{E_{\beta}}).
$$
Due to proposition 2.4,
it is easy to check that

$$
\begin{array}{c}
(M\oplus N)\diagup(M\oplus N)_{K} \\
\sim(M\oplus N)\diagup((M_{D_{\alpha}})\cup(N_{E_{\beta}})) \\
\sim(M\diagup(M_{D_{\alpha}}))\oplus(N\diagup(N_{E_{\beta}})) \\
=(M\diagup M_{K_{1}})\oplus(N\diagup N_{K_{2}}).
\end{array}
$$
Summarizing the previous discussion, we 
reach the following conclusion.

\begin{lemma}
Let $M\in M_{adj}(m,\mathbb{N})$,
$N\in M_{adj}(n,\mathbb{N})$,
$K\subset[m+n]$. Then, we have

$$
(M\oplus N)\diagup(M_{K_{1}}\oplus N_{K_{2}})\sim
(M\diagup M_{K_{1}})\oplus(N\diagup N_{K_{2}}),
$$
where $K_{1}=K\cap[m]$,
$K_{2}=K\cap\{m+1,\cdots,m+n\}$,
and we identify $K_{2}$ with the set
$\{k-m|k\in K_{2}\}\subset[n]$.
\end{lemma}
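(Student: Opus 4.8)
The plan is to reduce the statement to the block-diagonal shape of $M\oplus N=\mathbf{diag}(M,N)$ and then read off the result from the explicit quotient formula of proposition 2.4.

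First I would record the structural fact that makes everything work: since $M\oplus N$ is block diagonal, every entry joining an index of $[m]$ to an index of $\{m+1,\dots,m+n\}$ is zero. Hence the diagonal submatrix indexed by $K$ splits, $(M\oplus N)_{K}\sim M_{K_{1}}\oplus N_{K_{2}}$, which both justifies the notation appearing in the statement and shows that no connected piece of the contracted subgraph can straddle the two blocks.

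Next I would pass to connected components. Writing $M\sim\bigoplus_{i}M_{I_{i}}$, $N\sim\bigoplus_{j}N_{J_{j}}$ and $M_{K_{1}}\sim\bigoplus_{\alpha}M_{D_{\alpha}}$, $N_{K_{2}}\sim\bigoplus_{\beta}N_{E_{\beta}}$ with all summands connected, the block structure forces the connected components of $M\oplus N$ to be precisely the $M_{I_{i}}$ together with the $N_{J_{j}}$; moreover each $M_{D_{\alpha}}$ lies inside some $M_{I_{i}}$ and each $N_{E_{\beta}}$ inside some $N_{J_{j}}$, and never in a component of the opposite block. Feeding this into proposition 2.4 (formula (2.14)), each component of $M\oplus N$ meeting $K$ is quotiented only by the subgraph components it contains, so the $M$-components are contracted solely against the $M_{D_{\alpha}}$ and the $N$-components solely against the $N_{E_{\beta}}$. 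The quotient therefore separates as $(M\diagup(M_{D_{\alpha}}))\oplus(N\diagup(N_{E_{\beta}}))$, which is $(M\diagup M_{K_{1}})\oplus(N\diagup N_{K_{2}})$ by the convention (2.12) for contracting a disconnected subgraph.

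The main obstacle I expect is bookkeeping rather than anything conceptual: one must check that the iterated contraction by a disconnected subgraph is independent of the order in which the components $M_{D_{\alpha}}$ and $N_{E_{\beta}}$ are collapsed, so that the $M$-contractions and $N$-contractions may be interleaved and then regrouped into a single direct sum. This is exactly the order-independence $M\diagup(M_{I_{i}})\sim M\diagup(M_{I_{\sigma(i)}})$ noted in remark 2.3, and it is what ultimately licenses factoring the whole operation through $\oplus$.
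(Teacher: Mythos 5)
Your proposal is correct and follows essentially the same route as the paper: both reduce to the block-diagonal form of $M\oplus N$, decompose $M$, $N$, $M_{K_{1}}$, $N_{K_{2}}$ into connected components, observe that no component of the contracted subgraph can straddle the two blocks, and then invoke proposition 2.4 (formula (2.14)) together with the convention (2.12) to split the quotient as $(M\diagup M_{K_{1}})\oplus(N\diagup N_{K_{2}})$. Your explicit appeal to the order-independence of iterated contraction (remark 2.3) is a point the paper uses only tacitly, but it does not change the structure of the argument.
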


Equivalently, we have

\begin{corollary}
Let $M\in M_{adj}(m,\mathbb{N})$,
$N\in M_{adj}(n,\mathbb{N})$,
$K\subset[m]$. Then, we have

$$
\{(M\oplus N)\diagup(M_{K_{1}}\oplus N_{K_{2}})\}=
\{M\diagup M_{K_{1}}\}\oplus\{N\diagup N_{K_{2}}\},
$$
where $K_{1}=K\cap[m]$,
$K_{2}=K\cap\{m+1,\cdots,m+n\}$,
and we identify $K_{2}$ with the set
$\{k-m|k\in K_{2}\}\subset[n]$.
\end{corollary}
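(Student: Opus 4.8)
The plan is to obtain this corollary directly from the preceding lemma by pushing its matrix-level equivalence down to the level of equivalence classes; no new combinatorial content is needed, so the argument is purely formal bookkeeping with the orbit notation $\{\,\cdot\,\}$.

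First I would recall that the preceding lemma supplies the matrix-level relation
\[
(M\oplus N)\diagup(M_{K_{1}}\oplus N_{K_{2}})\sim
(M\diagup M_{K_{1}})\oplus(N\diagup N_{K_{2}}),
\]
under the identification of $K_{2}$ with $\{k-m\mid k\in K_{2}\}\subset[n]$, so that $M_{K_{1}}\oplus N_{K_{2}}$ is a representative of $(M\oplus N)_{K}$. Since two $\sim$-equivalent adjacency matrices lie in the same orbit, applying the bracket to both sides yields
\[
\{(M\oplus N)\diagup(M_{K_{1}}\oplus N_{K_{2}})\}=
\{(M\diagup M_{K_{1}})\oplus(N\diagup N_{K_{2}})\}.
\]

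Next I would invoke the definition of the direct sum on equivalence classes, namely $\{A\}\oplus\{B\}=\{A\oplus B\}$, which was shown earlier in this subsection to be well defined and commutative; applying it with $A=M\diagup M_{K_{1}}$ and $B=N\diagup N_{K_{2}}$ rewrites the right-hand side as $\{M\diagup M_{K_{1}}\}\oplus\{N\diagup N_{K_{2}}\}$, which is exactly the claimed equality. For this final rewriting to be meaningful I must also note that the bracketed quotients $\{M\diagup M_{K_{1}}\}$ and $\{N\diagup N_{K_{2}}\}$ are themselves well defined as orbits; this is precisely the content of Proposition 2.3 and its following remark, which establish that the quotient descends from matrices to a map on equivalence classes, so that the right-hand side is independent of the chosen representatives $M$ and $N$.

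There is no genuine obstacle in this argument: the single point requiring care is that all three bracket operations appearing on the right-hand side ($\oplus$ on classes, together with the two quotient classes) are well defined, and both of these facts are already in hand. Thus the corollary is immediate once the preceding lemma is available, which is why it is stated as an equivalent reformulation rather than an independent result.
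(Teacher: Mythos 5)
Your proposal is correct and matches the paper exactly: the paper states this corollary with no separate proof, introducing it by ``Equivalently, we have'' right after the lemma, i.e.\ it is treated precisely as the formal passage from the matrix-level relation $(M\oplus N)\diagup(M_{K_{1}}\oplus N_{K_{2}})\sim(M\diagup M_{K_{1}})\oplus(N\diagup N_{K_{2}})$ to equivalence classes via $\{A\}\oplus\{B\}=\{A\oplus B\}$, which is the argument you give.
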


Now we turn to more complicated situation
of the quotient.

\begin{proposition}
Let $m\geq 2$ be an integer, and

\begin{itemize}
\item $\{I_{i}\}_{i=1}^{p}\in\mathbf{part}([m])$,
$\bigcup_{i=1}^{p}I_{i}=I\subset K\subset[m]$,
\item $M\in M_{adj}(m,\mathbb{N})$, 
 
\end{itemize}
If we take $J=(K\setminus I)\cup
\{1^{\ast},\cdots,p^{\ast}\}$,
then we have

\begin{itemize}

\item 

$$
(M\diagup(M_{I_{i}}))_{J}=M_{K}\diagup(M_{I_{i}}),
$$
\item

$$
(M\diagup(M_{I_{i}}))\diagup(M\diagup(M_{I_{i}}))_{J}
=M\diagup M_{K}.
$$
\end{itemize}
Particularly, if each $M_{I_{i}}$ is
connected ($i=1,\cdots,p$), 
then $(M\diagup(M_{I_{i}}))_{J}$ is connected
if and only if $M_{K}$ is connected.	
\end{proposition}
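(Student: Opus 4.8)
The plan is to treat this proposition as the multi-block generalization of Lemma 2.1: there we collapsed a single sub-block $M_I$ and tracked one ideal vertex $1^{\ast}$, whereas here we collapse the disjoint blocks $M_{I_1},\dots,M_{I_p}$ and must track the $p$ ideal vertices $1^{\ast},\dots,p^{\ast}$. Both displayed identities are equalities of concrete matrices, so I would prove them by an entry-by-entry comparison, using the explicit block form of the iterated quotient $M\diagup(M_{I_i})$ recorded in Remark 2.3. Throughout I would use that $I\subset K$ forces $K\setminus I\subset I^{c}$, so that the index set of $M\diagup(M_{I_i})$, namely $\{1^{\ast},\dots,p^{\ast}\}\cup I^{c}$, contains $J=(K\setminus I)\cup\{1^{\ast},\dots,p^{\ast}\}$ as a genuine diagonal sub-block.

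For the first identity I would read off the three types of entries of $(M\diagup(M_{I_i}))_{J}$ from Remark 2.3 and check each against $M_{K}\diagup(M_{I_i})$. The entry at a pair of ideal vertices $(i^{\ast},j^{\ast})$ is $\sum_{a\in I_{i},\,b\in I_{j}}m_{ab}$; the mixed entry at $(i^{\ast},q)$ with $q\in K\setminus I$ is $\sum_{a\in I_{i}}m_{aq}$; and the entry at $(q,q')$ with $q,q'\in K\setminus I$ is simply $m_{qq'}$. Since each of $I_{i},I_{j},q,q'$ lies inside $K$, restricting to $M_{K}$ before quotienting produces exactly these same sums, so the two matrices coincide entrywise. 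This is the direct analogue of the computation behind Lemma 2.1, now carried out simultaneously for all $p$ ideal vertices.

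For the second identity I would first substitute the first identity to replace the inner term, reducing the claim to $(M\diagup(M_{I_i}))\diagup(M_{K}\diagup(M_{I_i}))=M\diagup M_{K}$. By Definition 2.3, quotienting $M\diagup(M_{I_i})$ by its sub-block indexed by $J$ collapses all of $J=\{1^{\ast},\dots,p^{\ast}\}\cup(K\setminus I)$ into a single new ideal vertex, leaving the remaining index set $I^{c}\setminus(K\setminus I)=K^{c}$ untouched; thus both sides are indexed by $\{\ast\}\cup K^{c}$ and have the same order $m-|K|+1$. The content is then the single summation identity for the new off-diagonal entries: the edge from the collapsed vertex to $q\in K^{c}$ equals $\sum_{i=1}^{p}\sum_{a\in I_{i}}m_{aq}+\sum_{q'\in K\setminus I}m_{q'q}=\sum_{a\in K}m_{aq}$, which is exactly the corresponding entry of $M\diagup M_{K}$, while entries at pairs $(q,q')$ in $K^{c}$ are $m_{qq'}$ on both sides. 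The main obstacle is precisely this bookkeeping — confirming that collapsing a set $J$ which mixes already-collapsed ideal vertices $i^{\ast}$ with honest vertices of $K\setminus I$ yields the single collapse of $K=I\cup(K\setminus I)$, that is, that the quotient is \emph{transitive} across the two stages — together with keeping the relabelling $\{1^{\ast},\dots,p^{\ast}\}\mapsto\{\ast\}$ and the identity $I^{c}\setminus(K\setminus I)=K^{c}$ straight.

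For the connectivity assertion I would combine the first identity with Proposition 2.5. Assuming each $M_{I_i}$ is connected, I would apply Proposition 2.5 iteratively to $M_{K}\diagup(M_{I_i})=(\cdots(M_{K}\diagup M_{I_1})\cdots)\diagup M_{I_p}$: since the $I_{i}$ are pairwise disjoint, each block $I_{j}$ is untouched by the earlier quotients, so the relevant diagonal sub-block at stage $j$ is again the connected matrix $M_{I_j}$, and Proposition 2.5 gives that each intermediate quotient is connected if and only if the previous one is. Hence $M_{K}\diagup(M_{I_i})$ is connected if and only if $M_{K}$ is, and by the first identity the same conclusion transfers to $(M\diagup(M_{I_i}))_{J}$.
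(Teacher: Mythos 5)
Your proof is correct, but it follows a genuinely different route from the paper's. The paper proves this proposition by induction on $p$: the base case $p=1$ is delegated to Lemma 2.1, and the inductive step writes $M\diagup(M_{I_{i}})_{i=1}^{p+1}=\tilde{M}\diagup M_{I_{p+1}}$ with $\tilde{M}=M\diagup(M_{I_{i}})_{i=1}^{p}$, introduces the intermediate index set $J^{\prime}=((K\setminus I)\cup I_{p+1})\cup\{1^{\ast},\cdots,p^{\ast}\}$, and invokes Lemma 2.1 together with the inductive hypothesis. You instead verify both identities directly and simultaneously for all $p$ collapsed vertices, by matching the three types of entries ($(i^{\ast},j^{\ast})$, $(i^{\ast},q)$, $(q,q^{\prime})$) read off from the explicit block form in Remark 2.3, and you reduce the second identity to the first plus one summation identity $\sum_{a\in I}m_{aq}+\sum_{q^{\prime}\in K\setminus I}m_{q^{\prime}q}=\sum_{a\in K}m_{aq}$ expressing transitivity of the collapse. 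What the paper's induction buys is brevity and reuse of Lemma 2.1; what your computation buys is completeness at two points where the paper is actually thin: first, Lemma 2.1 as stated only gives the \emph{first} identity for $p=1$, so the paper's base case for the second identity is never really established, whereas your entrywise argument proves it outright; second, the paper's proof never addresses the final connectivity clause at all, whereas you derive it cleanly by iterating Proposition 2.5 along the chain $M_{K},\,M_{K}\diagup M_{I_{1}},\,\cdots,\,M_{K}\diagup(M_{I_{i}})$ (using that the disjoint blocks $I_{j}$ are untouched by earlier quotients) and then transporting the conclusion through the first identity. So your version is longer but self-contained, and it fills genuine gaps in the paper's argument.
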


\begin{proof}
We will prove the conclusion by induction on $p$.	
The situation of $p=1$ has been proven in
lemma 2.1. Assuming the conclusion is valid 
for $p>0$, we consider the situation of $p+1$.
Noting that

$$
M\diagup(M_{I_{i}})_{i=1}^{p+1}=
(M\diagup(M_{I_{i}})_{i=1}^{p})
\diagup M_{I_{p+1}}
=\tilde{M}\diagup M_{I_{p+1}},
$$
where $\tilde{M}= M\diagup(M_{I_{i}})_{i=1}^{p}$,
then there is $J^{\prime}=((K\setminus I)\cup 
I_{p+1})\cup\{1^{\ast},\cdots,p^{\ast}\}$,
such that

$$
(\tilde{M}\diagup M_{I_{p+1}})\diagup
(\tilde{M}\diagup M_{I_{p+1}})_{J}=
\tilde{M}\diagup\tilde{M}_{J^{\prime}}.
$$
By the inductive hypothesis, 
we know that thr conclusion of proposition
is valid.
	
\end{proof}

The following conclusion concerns the general
situation.

\begin{proposition}
Let $M\in M_{adj}(m,\mathbb{N})$ ($m\geq 2$), 
$\{I_{i}\}_{i=1}^{p},
\{K_{k}\}_{k=1}^{q}\in\mathbf{part}([m])$,
$\{I_{i}\}\subset\{K_{k}\}$.
If we take $\{J_{j}\}\in\mathbf{part}(([m]\setminus I)
\cup\{1^{\ast},\cdots,p^{\ast}\})$ with the
following form

$$
J_{j}=(K_{k}\setminus(\bigcup_{I_{i}\subset K_{k}}
I_{i}))\cup\{i^{\ast}\}_{I_{i}\subset K_{k}},\,
K_{k}\not=I_{i}\,(1\leq i\leq p),
$$
where $I=\bigcup_{i}I_{i}$,
then we have
	
\begin{itemize}

\item
		
$$
(M\diagup(N_{I_{i}}))_{J_{j}}=M_{K_{k}}\diagup
(I_{i})_{I_{i}\subset K_{k}},
K_{k}\not=I_{i}\,(1\leq i\leq p),
$$

\item
		
$$
(M\diagup(N_{I_{i}}))\diagup
((M\diagup(N_{I_{i}}))_{J_{j}})=
M\diagup(M_{K_{k}}).
$$
\end{itemize}
\end{proposition}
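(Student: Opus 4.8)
The plan is to reduce the general statement to the single-block case already established in Proposition 2.6, exploiting the disjointness of the $K_k$ together with the locality of the two operations involved, namely the passage to a diagonal sub-block and the quotient. Since $\{I_i\}\subset\{K_k\}$, each $I_i$ lies in exactly one $K_k$; write $I^{(k)}=\bigcup_{I_i\subset K_k}I_i$, so that $I=\bigcup_k I^{(k)}$ and the blocks $K_k$ carve $M$ into diagonal regions that do not interact across distinct $K_k$.

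First I would prove the first identity. Put $\widetilde M=M\diagup(M_{I_i})$, whose index set is $\{1^\ast,\dots,p^\ast\}\cup([m]\setminus I)$, the stars recording the contracted blocks. Fix a $K_k$ with $K_k\neq I_i$ for every $i$, with associated set $J_j=(K_k\setminus I^{(k)})\cup\{i^\ast\}_{I_i\subset K_k}$. The point is that every entry of $\widetilde M$ whose two indices both lie in $J_j$ is assembled from entries $m_{ab}$ of $M$ with $a,b\in K_k$ only: an entry between two untouched vertices is a plain entry of $M_{K_k}$; an entry between an untouched vertex $a$ and a star $i^\ast$ equals $\sum_{c\in I_i}m_{ac}$ with $a,c\in K_k$; and an entry between two stars $i^\ast,{i'}^\ast$ equals $\sum_{c\in I_i,\,d\in I_{i'}}m_{cd}$ with all indices in $K_k$. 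In particular stars and vertices coming from other blocks never enter. Hence $\widetilde M_{J_j}$ is computed from $M_{K_k}$ by contracting precisely the blocks $(M_{I_i})_{I_i\subset K_k}$, which is exactly the first identity of Proposition 2.6 applied with $K=K_k$, $I=I^{(k)}$ and the subfamily $(I_i)_{I_i\subset K_k}$. This yields $\widetilde M_{J_j}=M_{K_k}\diagup(M_{I_i})_{I_i\subset K_k}$.

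For the second identity I would argue by transitivity, reducing once more to Proposition 2.6. Since the $J_j$ are pairwise disjoint, $\widetilde M\diagup(\widetilde M_{J_j})$ contracts each $J_j$ independently, and since the $K_k$ are pairwise disjoint, contractions supported in different blocks do not interfere; this block phenomenon is exactly what underlies Lemma 2.2 and the reordering property $M\diagup(M_{I_i})\sim M\diagup(M_{I_{\sigma(i)}})$ recorded in Remark 2.3. It therefore suffices to verify, for each fixed $k$ with $K_k\neq I_i$, that first performing the quotients $(M_{I_i})_{I_i\subset K_k}$ and then the quotient by $\widetilde M_{J_j}$ returns $M\diagup M_{K_k}$; this is precisely the second identity of Proposition 2.6 with $K=K_k$, $I=I^{(k)}$. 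Running this over all such $k$, and noting that a block with $K_k=I_i$ contributes no $J_j$ because its contraction is already completed inside $\widetilde M$, assembles the global equality $\widetilde M\diagup(\widetilde M_{J_j})=M\diagup(M_{K_k})$.

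The step I expect to be the main obstacle is the careful bookkeeping of the labels under the star renaming: one must check that $\{1^\ast,\dots,p^\ast\}\cup([m]\setminus I)$ reorganizes exactly into the sets $J_j$ together with the stars of the blocks $K_k=I_i$, and that the ordering conventions fixed in Definition 2.3 for the new vertices $1^\ast,2^\ast,\dots$ are respected both when restricting to $J_j$ and when quotienting afterwards. To keep this under control I would phrase the argument as an induction on $q$: the case $q=1$ is Proposition 2.6, and at each step one peels off a single block $K_q$, applying Proposition 2.6 on the restriction to $K_q$ and the inductive hypothesis to the remaining blocks, the disjointness guaranteeing that the two computations stay independent.
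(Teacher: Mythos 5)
Your proposal is correct and takes essentially the same route as the paper's own proof: the paper also reduces to the single-block Proposition 2.6 by commuting quotients over disjoint blocks (setting $\tilde{M}=M\diagup(M_{I_{i}})_{I_{i}\not\subset K_{1}}$ so that $\tilde{M}_{K_{1}}=M_{K_{1}}$), disposes of the blocks with $K_{k}=I_{i}$ by a preliminary reduction just as you do, and then treats each $J_{k}$ one at a time. Your entry-by-entry locality check and the explicit induction on $q$ are merely a more detailed packaging of that same argument.
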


\begin{proof}
	Let $\{I_{i^{\prime}}\}=\{I_{i}\}_
	{I_{i}\cap J^{\prime}=\emptyset}$.	
	Without loss of generality, we can
	replace $M$ by $M\diagup(I_{i^{\prime}})$.	
	Thus, we can assume that each $K_{k}$	
	($1\leq k\leq q$) satisfies the following
	condition:
	
	$$
	|\{I_{i}|I_{i}\subset K_{k}\}|=1\,
	\Rightarrow\,
	K_{k}\setminus I\not=\emptyset.
	$$
	
	We now prove the conclusion for each
	
	$$
	J_{k}=(K_{k}\setminus(\bigcup_{I_{i}\subset K_{k}}
	I_{i}))\cup\{i^{\ast}\}_{I_{i}\subset K_{k}},\,
	k=1,\cdots,p.
	$$
	For instance, we consider $J_{1}$. For
	simplicity, we assume
	
	$$
	J_{1}=J_{1}^{\prime}\cup
	\{1^{\ast},\cdots,r^{\ast}\},\,
	1\leq r\leq p,
	$$
	where $J_{1}^{\prime}=K_{1}\setminus
	(\bigcup_{i=1}^{r}I_{i})$.
	We want to prove
	
	$$
	(M\diagup(N_{I_{i}}))\diagup(M\diagup(N_{I_{i}}))_{J_{1}}=
	(M\diagup(I_{i})_{i>r})\diagup M_{K_{1}},
	$$
	and
	
	$$
	(M\diagup(N_{I_{i}}))_{J_{1}}=
	M_{K_{1}}\diagup(N_{I_{i}})_{1\leq i\leq r}.
	$$
	It is obvious that we have
	
	$$
	M\diagup(I_{i})=(M\diagup(I_{i})_{i>r})
	\diagup(I_{i})_{1\leq i\leq r},
	$$
	and
	
	$$
	(M\diagup(I_{i})_{i>r})_{K_{1}}=M_{K_{1}}.
	$$
	If we take $\tilde{M}=M\diagup(I_{i})_{i>r}$, the 
	situation is reduced to Lemma 2.6.
	
\end{proof}

Conversely, we have,

\begin{proposition}
Let $M\in M_{adj}(m,\mathbb{N})$ ($m\geq 2$), 
$\{I_{i}\}_{i=1}^{p}\in\mathbf{part}([m])$,
$\{J_{j}\}_{j=1}^{q}\in
\mathbf{part}(([m]\setminus I)
\cup\{1^{\ast},\cdots,p^{\ast}\})$,
where $I=\bigcup_{i=1}^{p}I_{i}$.
If we take $\{K_{k}\}\in\mathbf{part}([m])$ 
with the following form

$$
K_{k}=\left\{
\begin{array}{cc}
I_{i}, & I_{i}\cap J^{\prime}=\emptyset, \\
J_{j}, & J_{j}\cap\{1^{\ast},
\cdots,p^{\ast}\}=\emptyset, \\
J_{j}^{\prime}\cup(\bigcup_{i^{\ast}\in J_{j}}I_{i}),
& J_{j}\cap\{1^{\ast},
\cdots,p^{\ast}\}\not=\emptyset,
\end{array}
\right.
$$
where $J^{\prime}=(\bigcup_{j}J_{j})\setminus
\{1^{\ast},\cdots,p^{\ast}\}$,
$J_{j}^{\prime}=J_{j}\cap J^{\prime}$,
then, we have
	
\begin{itemize}
		
\item
		
$$
(M\diagup(N_{I_{i}}))_{J_{j}}=
\left\{
\begin{array}{cc}
M_{K_{k}}\diagup(I_{i})_{I_{i}\subset K_{k}},
& J_{j}\cap\{1^{\ast},\cdots,p^{\ast}\}
\not=\emptyset, \\
M_{J_{j}}, & J_{j}\cap\{1^{\ast},\cdots,p^{\ast}\}
=\emptyset.
\end{array}
\right.
$$
\item
		
$$
(M\diagup(N_{I_{i}}))\diagup
((M\diagup(N_{I_{i}}))_{J_{j}})=
M\diagup(M_{K_{k}}).
$$
\end{itemize}
\end{proposition}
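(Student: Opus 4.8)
The plan is to recognize that the passage $\{J_j\}\mapsto\{K_k\}$ defined in the statement is precisely the inverse of the construction $\{K_k\}\mapsto\{J_j\}$ of Proposition 2.8, so that the present proposition follows from Proposition 2.8 once the two constructions are matched up. (In the first case of the definition of $K_k$ one must read the condition as ``the starred letter $i^{\ast}$ occurs in none of the $J_j$'', since $I_i\subset I$ and $J^{\prime}\subset[m]\setminus I$ force $I_i\cap J^{\prime}=\emptyset$ for every $i$; this is the only reading consistent with the rest of the statement.)

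First I would check that the family $\{K_k\}$ built from the three cases is a legitimate element of $\mathbf{part}([m])$ with $\{I_i\}\subset\{K_k\}$. Disjointness is automatic: first-type blocks are among the pairwise disjoint $I_i$, while second- and third-type blocks arise from the pairwise disjoint $J_j$ by deleting the stars and replacing each starred letter $i^{\ast}$ by the corresponding block $I_i$; since the $I_i$ are disjoint and the ordinary parts $J_j^{\prime}$ live in $[m]\setminus I$, this substitution preserves disjointness. One computes $\bigcup_k K_k=I\cup J^{\prime}$, which is in general a proper subset of $[m]$, exactly as required for membership in $\mathbf{part}([m])$. The inclusion $\{I_i\}\subset\{K_k\}$ holds because each $I_i$ either equals a first-type block (when $i^{\ast}$ is unused) or satisfies $I_i\subset J_j^{\prime}\cup(\bigcup_{i^{\ast}\in J_j}I_i)=K_k$ (when $i^{\ast}\in J_j$).

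Next I would verify that feeding this $\{K_k\}$ back into the construction of Proposition 2.8 returns the original $\{J_j\}$. This is a direct set-theoretic check. For a third-type block $K_k=J_j^{\prime}\cup\bigcup_{i^{\ast}\in J_j}I_i$ the blocks $I_i$ contained in $K_k$ are exactly those with $i^{\ast}\in J_j$, so Proposition 2.8 forms $(K_k\setminus\bigcup_{I_i\subset K_k}I_i)\cup\{i^{\ast}\}_{I_i\subset K_k}=J_j^{\prime}\cup(J_j\cap\{1^{\ast},\cdots,p^{\ast}\})=J_j$; for a second-type block $K_k=J_j$ no $I_i$ is contained in $K_k$, so Proposition 2.8 returns $K_k=J_j$; and for a first-type block $K_k=I_i$ the side condition $K_k\neq I_i$ excludes it, matching the fact that $i^{\ast}$ appears in no $J_j$. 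Hence the $\{J_j\}$ of the hypothesis is literally the family associated to $\{K_k\}$ by Proposition 2.8, and both displayed identities become the two conclusions of Proposition 2.8; the split $J_j\cap\{1^{\ast},\cdots,p^{\ast}\}\neq\emptyset$ versus $=\emptyset$ is precisely the split between third-type and second-type blocks, with $M_{K_k}\diagup(M_{I_i})_{I_i\subset K_k}$ degenerating to $M_{J_j}$ in the starless case.

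The step I expect to be delicate is organizational rather than mathematical: making the bijection between the two index sets airtight, and in particular correctly accounting for the letters \emph{not} used by any $J_j$ --- the unused stars $i^{\ast}$, which produce the first-type blocks $K_k=I_i$, and the ordinary letters of $[m]\setminus I$ outside $J^{\prime}$, which are fixed points of the second quotient and of $M\diagup(M_{K_k})$ alike. Once this matching is established the proof is immediate. Alternatively one could avoid the reduction and argue directly by induction on the number of $J_j$ containing at least one starred letter, peeling off one such block at a time exactly as in the proof of Proposition 2.8; but the reduction to Proposition 2.8 is the more economical route.
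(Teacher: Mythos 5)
Your proof is correct, and its core idea --- that this proposition is the converse of Proposition 2.7 --- is also what drives the paper's proof; but the two executions differ enough to be worth comparing. The paper first normalizes by absorbing the star-free blocks $\{J_{j}\}_{J_{j}\cap\{1^{\ast},\cdots,p^{\ast}\}=\emptyset}$ into the family $\{I_{i}\}$, then treats one starred block $J_{1}$ at a time by rerunning the argument of Proposition 2.6, and finally dispatches the second identity with ``arguments similar to the proofs of Propositions 2.6 and 2.7''. You skip the normalization and instead reduce globally to Proposition 2.7: you check that the assignment $\{J_{j}\}\mapsto\{K_{k}\}$ of the statement is inverse to the assignment $\{K_{k}\}\mapsto\{J_{j}\}$ of Proposition 2.7, namely that $\{K_{k}\}\in\mathbf{part}([m])$ with $\{I_{i}\}\subset\{K_{k}\}$, and that Proposition 2.7 applied to this $\{K_{k}\}$ returns exactly the given $\{J_{j}\}$ (third-type blocks return the starred $J_{j}$, second-type blocks return the star-free ones, and first-type blocks $K_{k}=I_{i}$ are excluded by the side condition $K_{k}\not=I_{i}$, matching the unused stars). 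After that, both displayed identities are literally the two conclusions of Proposition 2.7, so in particular the second identity --- precisely the step the paper leaves as a sketch --- comes for free; the cost is only the elementary set-theoretic bookkeeping you carry out. Your opening observation is also a genuine catch: as written, the condition $I_{i}\cap J^{\prime}=\emptyset$ is vacuous (since $I_{i}\subset I$ while $J^{\prime}\subset[m]\setminus I$), and it must be read as ``$i^{\ast}$ occurs in no $J_{j}$'', which is the reading the paper's own proof silently adopts.
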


\begin{proof}
	Let $\{J_{j^{\prime}}\}=\{J_{j}\}_{J_{j}
		\cap\{1^{\ast},\cdots,p^{\ast}\}=\emptyset}$, 
	then we can replace $\{I_{i}\}$ by
	$\{I_{i}\}\cup\{J_{j^{\prime}}\}$. Thus,
	we assume
	
	$$
	\{K_{k}\}=\{I_{i}\}_
	{I_{i}\cap J^{\prime}=\emptyset}
	\cup\{J_{j}^{\prime}
	(\bigcup_{i^{\ast}\in J_{j}}I_{i})\}.
	$$
	
	For instance, we consider the situation of 
	$J_{1}$, and for simplicity we assume
	$\bigcup_{i^{\ast}\in J_{1}}I_{i}=
	\{1,\cdots,r\}$ and $K_{1}=J_{1}^{\prime}
	\cup\{1,\cdots,r\}$. By the same reason as
	proposition 2.6, we know that
	$(M\diagup(I_{i}))_{J_{1}}=M_{K_{1}}
	\diagup(I_{i})_{1\leq i\leq r}$. With the
	arguments which are similar to the ones in
	the the proofs of proposition 2.6 and proposition
	2.7, we can prove
	
	$$
	(M\diagup(M_{I_{i}}))\diagup((M\diagup(M_{I_{i}}))_{J_{j}})=
	M\diagup(M_{K_{k}}).
	$$
	
\end{proof}

\begin{corollary}
Let $M\in M_{adj}(m,\mathbb{N})$ be a 
connected adjacency matrices. For two
subsets $I,K\subset\underline{m}$, If 
	
\begin{itemize}
\item $I\subset K$, 
\item $M_{K}\sim M_{K_{1}}\oplus\cdots
\oplus M_{K_{p}}$, where $\{K_{k}\}\in\mathbf{Part}(K)$,
and each $M_{K_{k}}$ is connected ($k=1,\cdots,p$),
\item $M_{I}\sim M_{I_{1}}\oplus\cdots
\oplus M_{I_{q}}$, where $\{I_{i}\}\in\mathbf{Part}(I)$,
each $M_{I_{i}}$ is connected ($i=1,\cdots,q$),
\end{itemize}
then, there is a subset
$J\subset(K\setminus I)\cup
\{1^{\ast},\cdots,q^{\ast}\}$ such that

\begin{itemize}
\item 

$$
M_{K}\diagup M_{I}=(M\diagup M_{I})_{J},
$$
\item

\begin{equation}
(M\diagup M_{I})\diagup(M\diagup N_{I})_{J}
=M\diagup M_{K}.
\end{equation}
\item $(M\diagup M_{I})_{J}\sim
(M\diagup M_{I})_{J_{1}}\oplus\cdots\oplus
(M\diagup M_{I})_{J_{l}}$, where
$\{J_{j}\}$ is same as one in proposition
2.7, and each $(M\diagup M_{I})_{J_{j}}$ is
connected ($j=1,\cdots,l$).
\end{itemize}	
\end{corollary}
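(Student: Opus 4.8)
The plan is to obtain the corollary as a specialization of Proposition 2.7, whose proof already carries the combinatorial content; the remaining work is to check that the hypotheses of that proposition hold in the present setting and then to reassemble its block-by-block conclusions into the three stated equalities. Throughout I regard the sequences $\{I_{i}\}_{i=1}^{q}$ and $\{K_{k}\}_{k=1}^{p}$ as elements of $\mathbf{part}([m])$, so that, apart from a relabelling in which the roles of $p$ and $q$ are interchanged relative to the statement of Proposition 2.7, only the containment $\{I_{i}\}\subset\{K_{k}\}$ must be verified before that proposition can be invoked.

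First I would record that each $I_{i}$ lies inside some $K_{k}$. Since $I\subset K$, $M_{I}\sim\bigoplus_{i}M_{I_{i}}$ and $M_{K}\sim\bigoplus_{k}M_{K_{k}}$ with all summands connected, Corollary 2.3 applies directly (taking its $J$ to be our $I$ and its $I$ to be our $K$) and yields, for every $I_{i}$, a $K_{k}$ with $I_{i}\subset K_{k}$. Hence $\{I_{i}\}\subset\{K_{k}\}$ and Proposition 2.7 becomes available. I then set
$$
J=\bigcup_{j}J_{j},\qquad
J_{j}=\Bigl(K_{k}\setminus\bigcup_{I_{i}\subset K_{k}}I_{i}\Bigr)
\cup\{\,i^{\ast}:I_{i}\subset K_{k}\,\},\quad K_{k}\not=I_{i},
$$
the very family produced by Proposition 2.7; a block $K_{k}$ that coincides with a single $I_{i}$ contributes no $J_{j}$, which matches the convention $M_{K_{k}}\diagup M_{K_{k}}=0$.

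With this $J$, the second bullet is literally the second conclusion of Proposition 2.7, once one recalls that $M\diagup(M_{K_{k}})=M\diagup M_{K}$ because $M_{K}\sim\bigoplus_{k}M_{K_{k}}$. For the first bullet I would show that $(M\diagup M_{I})_{J}$ splits as a direct sum over the blocks $J_{j}$. By Proposition 2.4, $M_{K}\diagup M_{I}\sim\bigoplus_{k}\bigl(M_{K_{k}}\diagup(M_{I_{i}})_{I_{i}\subset K_{k}}\bigr)$, and Proposition 2.7 identifies each summand with $(M\diagup M_{I})_{J_{j}}$; it then remains to see that the entries of $(M\diagup M_{I})_{J}$ joining distinct blocks $J_{j},J_{j'}$ vanish. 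This is the one genuine computation: because $M_{K}\sim\bigoplus_{k}M_{K_{k}}$ is block diagonal for the partition $\{K_{k}\}$, every entry $m_{ab}$ of $M$ with $a\in K_{k}$, $b\in K_{k'}$, $k\not=k'$, is zero, so both the plain off-block entries and the starred sums $m_{i^{\ast}b}=\sum_{a\in I_{i}}m_{ab}$ and $m_{i^{\ast}i'^{\ast}}=\sum_{a\in I_{i},\,b\in I_{i'}}m_{ab}$ crossing two blocks vanish. Therefore $(M\diagup M_{I})_{J}=\bigoplus_{j}(M\diagup M_{I})_{J_{j}}=M_{K}\diagup M_{I}$ as a genuine equality.

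Finally, the third bullet follows from the connectivity results already in hand: the block identification $(M\diagup M_{I})_{J_{j}}=M_{K_{k}}\diagup(M_{I_{i}})_{I_{i}\subset K_{k}}$ together with the connectedness of each $M_{K_{k}}$ gives, by Proposition 2.5 applied iteratively inside $M_{K_{k}}$ (equivalently, by the connectivity clause closing Proposition 2.6), that each $(M\diagup M_{I})_{J_{j}}$ is connected. The main obstacle is not conceptual but one of bookkeeping: tracking the correspondence $I_{i}\subset K_{k}$ through the relabelling $p\leftrightarrow q$, correctly discarding the blocks with $K_{k}=I_{i}$, and justifying the vanishing of the cross-block entries so that the first equality is genuine rather than merely up to $\sim$.
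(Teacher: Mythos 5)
Your proposal is correct and follows what is evidently the paper's intended route: the paper states this corollary without any explicit proof, as an immediate specialization of Proposition 2.7, and your argument is exactly that specialization --- Corollary 2.3 (with its $I,J$ playing the roles of your $K,I$) supplies the containment $\{I_{i}\}\subset\{K_{k}\}$, Proposition 2.7 then gives the blockwise identities, and Propositions 2.4--2.6 handle the connectivity claims. Your explicit verification that the cross-block entries of $(M\diagup M_{I})_{J}$ vanish (so that the blockwise equalities assemble into the stated equality for the whole $J$) is a detail the paper leaves implicit, and it is the right way to close that step.
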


Conversely, we have the following conclusion.

\begin{corollary}
Let $M,\in M_{adj}(m,\mathbb{N})$,
be a connected adjacency matrix,
$I\subset[m]$, 
$M_{I}\sim M_{I_{1}}\oplus\cdots\oplus M_{I_{p}}$,
$\{I_{i}\}_{i=1}^{p}\in\mathbf{Part}(I)$,
each $M_{I_{i}}$ is connected ($i=1,\cdots,p$),
Then, for each subset
$J\subset([m]\setminus I)\cup
\{1^{\ast},\cdots,p^{\ast}\}$,
there is a subset $K\subset[m]$ 
satisfying the following conditions:
	
\begin{itemize}
\item $K=I\cup(J\cap[m])$,
		
\item 
		
\begin{equation}
(M\diagup M_{I})\diagup(M\diagup M_{I})_{J}
=M\diagup M_{K}.
\end{equation}
\end{itemize}
More precisely, if 
	
$$
(M\diagup M_{I})_{J}\sim
(M\diagup M_{I})_{J_{1}}\oplus\cdots\oplus 
(M\diagup M_{I})_{J_{q}},
$$
where $\{J_{j}\}_{j=1}^{q}\in\mathbf{Part}(J)$ 
and each $(M\diagup M_{I})_{J_{j}}$ is connected
($j=1,\cdots,q$), then 
	
$$
M_{K}\sim M_{K_{1}}\oplus\cdots\oplus M_{K_{r}},
$$
where $\{K_{k}\}\in\mathbf{Part}(K)$ and each $M_{K_{k}}$
is connected ($k=1,\cdots,r$), additionally, 
$\{K_{k}\}$ and $M_{K_{k}}$ satisfy:
	
\begin{itemize}
\item 
		
$$
\{K_{k}\}=\{I_{i}\}_{i^{\ast}\notin J}
\cup\{J_{j}\}_{J_{j}\cap\{1^{\ast},\cdots,l^{\ast}\}=\emptyset}
\cup\{L_{J_{j}}\}_{J_{j}\cap
\{1^{\ast},\cdots,l^{\ast}\}\not=\emptyset},  
$$
where $L_{J_{j}}=J_{j}^{\prime}\cup
(\bigcup_{i^{\ast}\in J_{j}}I_{i})$,
$J_{j}^{\prime}=J_{j}\cap[m]$.
		
\item

$$
(M\diagup M_{I})_{J_{j}}=\left\{
\begin{array}{cc}
M_{L_{J_{j}}}\diagup(I_{i})_{i^{\ast}\in J_{j}},
 & J_{j}\cap\{1^{\ast},\cdots,l^{\ast}\}
\not=\emptyset, \\
M_{J_{j}}, & J_{j}\cap\{1^{\ast},\cdots,l^{\ast}\}
=\emptyset.
\end{array}\right.
$$
 
\end{itemize}
\end{corollary}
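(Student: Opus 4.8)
The plan is to derive this corollary directly from proposition 2.8, using the decomposition of $(M\diagup M_I)_J$ into connected components to supply the sequence $\{J_j\}$ required there. Since $(M\diagup M_I)_J$ is itself an adjacency matrix, corollary 2.1 (equivalently proposition 2.2) provides a partition $\{J_j\}_{j=1}^q\in\mathbf{Part}(J)$ of its index set with
$$
(M\diagup M_I)_J\sim(M\diagup M_I)_{J_1}\oplus\cdots\oplus(M\diagup M_I)_{J_q},
$$
each $(M\diagup M_I)_{J_j}$ connected. As a family of pairwise disjoint subsets of $([m]\setminus I)\cup\{1^\ast,\cdots,p^\ast\}$, this $\{J_j\}$ lies in $\mathbf{part}(([m]\setminus I)\cup\{1^\ast,\cdots,p^\ast\})$, so it is exactly the input consumed by proposition 2.8.

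Feeding $\{I_i\}_{i=1}^p$ and $\{J_j\}_{j=1}^q$ into proposition 2.8 yields the sequence $\{K_k\}$ of the stated form, together with the componentwise identities; in particular the two cases for $(M\diagup M_I)_{J_j}$ in the ``more precisely'' part are transcribed from proposition 2.8, with $L_{J_j}=J_j'\cup(\bigcup_{i^\ast\in J_j}I_i)$ playing the role of the mixed $K_k$ and $\{I_i\subset K_k\}=\{i^\ast\in J_j\}$. It then remains to replace the iterated quotient by the single quotient by $(M\diagup M_I)_J$. Because $(M\diagup M_I)_J$ splits into its connected pieces $(M\diagup M_I)_{J_j}$, the convention (2.12) identifies the outer quotient with the iterated one, and the second identity of proposition 2.8 gives
$$
(M\diagup M_I)\diagup(M\diagup M_I)_J=(M\diagup M_I)\diagup\big((M\diagup M_I)_{J_j}\big)_{j=1}^q=M\diagup(M_{K_k}).
$$
Applying (2.12) once more, this time to the disconnected $M_K$, rewrites the right-hand side as $M\diagup M_K$, which is (2.17).

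Finally I would read off $K=\bigcup_k K_k$ from the three families making up $\{K_k\}$. The terms $K_k=I_i$ with $i^\ast\notin J$ contribute $\bigcup_{i^\ast\notin J}I_i$; the star-free components contribute $\bigcup_{J_j\cap\{1^\ast,\cdots,p^\ast\}=\emptyset}J_j$; and the mixed components contribute $\bigcup_{J_j\cap\{1^\ast,\cdots,p^\ast\}\neq\emptyset}\big(J_j'\cup\bigcup_{i^\ast\in J_j}I_i\big)$. Gathering the blocks $I_i$ over $i^\ast\notin J$ and $i^\ast\in J$ reassembles $\bigcup_iI_i=I$, while gathering every ordinary part $J_j'$ together with the star-free $J_j$ reassembles $(\bigcup_jJ_j)\cap[m]=J\cap[m]$; hence $K=I\cup(J\cap[m])$. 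Connectedness of each $M_{K_k}$ is then automatic: for $i^\ast\notin J$ it is the hypothesis on $M_{I_i}$, for a star-free $J_j$ one has $M_{K_k}=M_{J_j}=(M\diagup M_I)_{J_j}$ connected, and for a mixed $J_j$ the connectedness clause of propositions 2.5 and 2.6 transfers connectedness of $(M\diagup M_I)_{J_j}=M_{L_{J_j}}\diagup(I_i)_{i^\ast\in J_j}$ back to $M_{L_{J_j}}=M_{K_k}$. The only genuinely delicate point is bookkeeping: one must verify that the connected-component decomposition of $(M\diagup M_I)_J$ agrees block by block with the $\{J_j\}$ consumed by proposition 2.8, and that invoking convention (2.12) on both the inner submatrix $(M\diagup M_I)_J$ and the outer result $M_K$ leaves the relevant equivalence classes intact. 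Once this matching is secured, the corollary is a direct transcription of proposition 2.8.
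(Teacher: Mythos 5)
Your strategy --- decompose $(M\diagup M_I)_J$ into its connected components, feed that family into Proposition 2.8, and then convert the iterated quotients back into single quotients via convention (2.12) --- is exactly the derivation the paper intends: the paper in fact states this corollary with no proof at all, as an immediate consequence of Proposition 2.8, so your proposal is the natural fleshing-out of it. The reduction to Proposition 2.8 and the connectedness checks for the individual blocks $M_{K_k}$ are fine. The genuine gap is in the very last conversion, which you dismiss as bookkeeping ``to be secured'': to rewrite $M\diagup(M_{K_k})$ as $M\diagup M_K$ using (2.12), you need $\{K_k\}$ to be the connected decomposition of $M_K$, i.e.\ not only that each $M_{K_k}$ is connected (which you verify) but also that $M_K\sim M_{K_1}\oplus\cdots\oplus M_{K_r}$, i.e.\ that $M$ has no edges joining distinct blocks $K_k$. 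You never prove this, and it cannot be proved: an edge of $M$ from a block $I_i$ with $i^{\ast}\notin J$ to any other block is incident, inside $M\diagup M_I$, to the star $i^{\ast}$, which lies outside $J$; such an edge is therefore invisible to the component structure of $(M\diagup M_I)_J$, which is the only information your argument (or Proposition 2.8) controls.

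Concretely, let $M$ be the $8$-cycle with edges $\{1,2\},\{2,3\},\ldots,\{7,8\},\{8,1\}$, and take $I_1=\{1,2\}$, $I_2=\{5,6\}$, so that $M_I\sim M_{I_1}\oplus M_{I_2}$ and $M\diagup M_I$ is a $6$-cycle on $\{1^{\ast},2^{\ast},3,4,7,8\}$; take $J=\{3,4,7,8\}$. Then $(M\diagup M_I)_J$ is the disjoint union of the two edges $J_1=\{3,4\}$ and $J_2=\{7,8\}$, and the left-hand side $(M\diagup M_I)\diagup(M\diagup M_I)_J$ is a $4$-cycle (contract $J_1$ and $J_2$ in the $6$-cycle; this first use of (2.12) is legitimate). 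Your recipe gives $K=I\cup(J\cap[m])=[8]$ and $\{K_k\}=\{I_1,I_2,J_1,J_2\}$; each block is indeed a connected single edge, and $M\diagup(M_{K_k})$ is indeed the same $4$-cycle, so Proposition 2.8 holds. But $M_K=M$ is the connected $8$-cycle, so the asserted decomposition $M_K\sim M_{K_1}\oplus\cdots\oplus M_{K_4}$ is false, and by Definition 2.3 the right-hand side is $M\diagup M_K=M\diagup M=0$, not a $4$-cycle. So both the displayed quotient identity and the ``more precisely'' clause fail for this $J$: the corollary itself is false as stated (a defect inherited by the paper's proof of Theorem 2.1, which cites it). The statement, and your proof essentially verbatim, become correct under the additional hypothesis $\{1^{\ast},\cdots,p^{\ast}\}\subseteq J$: then every $I_i$ is absorbed into some mixed block $L_{J_j}$, any edge of $M$ between distinct blocks would descend to an edge of $(M\diagup M_I)_J$ joining two of its connected components --- a contradiction --- and both invocations of (2.12) are then legitimate.
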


\subsection{The coproduct}

Let

\begin{equation}
\mathcal{H}_{adj}=\mathbf{Span}_{\mathbb{C}}
(M_{adj}(+\infty,\mathbb{N})).
\end{equation}
The direct sum in $M_{adj}(+\infty,\mathbb{N})$
can be extended to the situation of tensor, thus, the tensor
of $\mathcal{H}_{adj}$. Let $\{M_{i}\},\{N_{i}\}\in
M_{adj}(+\infty,\mathbb{N})$ ($i=1,2$), 
it is natural for us to define the direct sum
of the tensor in the following way.

$$
(\{M_{1}\}\otimes\{M_{2}\})\oplus
(\{N_{1}\}\otimes\{N_{2}\})=
(\{M_{1}\}\oplus\{N_{1}\})\otimes
(\{M_{2}\}\oplus\{N_{2}\}).
$$
Above direct sum is obviously well defined
and can be extened to the situation of the
tensor with multiple factors.

we now define the coproduct on $\mathcal{H}_{adj}$.

\begin{definition}
The coproduct on $\mathcal{H}_{adj}$ is
defined as follows.	
	
\begin{itemize}
\item Let $\{M\}\in M_{adj}(m,\mathbb{N})\diagup\sim$
be connected, $\{M\}\not=0$, we define the 
coproduct as follows,

\begin{equation}
\bigtriangleup\{M\}=\{M\}\otimes 
\{0\}+\{0\}\otimes\{M\}+
\sum\limits_{I\subset[m],\,I\not=[m]}
\{M_{I}\}\otimes\{M\diagup M_{I}\}.
\end{equation}

\item 

$$
\bigtriangleup\{0\}=
\{0\}\otimes\{0\}.
$$
\item Let $\{M_{i}\}\in M_{adj}(m_{i},\mathbb{N})$,
each $\{M_{i}\}$ be connected ($i=1,\cdots,n$).
Then we define

$$
\bigtriangleup(\bigoplus_{i=1}^{n}\{M_{i}\})
=\bigoplus_{i=1}^{n}\bigtriangleup\{M_{i}\}.
$$
\end{itemize}	
\end{definition}

We now prove the co-associativity of the
coproduct $\bigtriangleup$.

\begin{theorem}
The coproduct $\bigtriangleup$ satisfies
the following formula,

$$
(1\otimes\bigtriangleup)\bigtriangleup=
(\bigtriangleup\otimes 1)\bigtriangleup.
$$
\end{theorem}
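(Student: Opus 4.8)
The plan is to verify co-associativity by computing both $(1\otimes\bigtriangleup)\bigtriangleup\{M\}$ and $(\bigtriangleup\otimes 1)\bigtriangleup\{M\}$ for a connected class $\{M\}$ and matching them term by term, since both sides extend to disconnected classes via the direct-sum compatibility already built into Definition 2.4. The combinatorial heart is the standard observation that both iterated coproducts are naturally indexed by nested pairs of subsets $I\subset K\subset[m]$: expanding the left side one applies $\bigtriangleup$ to the quotient factor $\{M\diagup M_I\}$, while on the right one applies $\bigtriangleup$ to the subobject factor $\{M_I\}$. The goal is to exhibit a degree-preserving bijection between the indexing data of the two expansions so that the corresponding tensor summands coincide as elements of $\mathcal{H}_{adj}^{\otimes 3}$.

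First I would write out the left-hand side. Applying $\bigtriangleup$ and then $1\otimes\bigtriangleup$, the generic summand is $\{M_I\}\otimes\bigtriangleup\{M\diagup M_I\}$; since $\{M\diagup M_I\}$ has vertex set indexed by $([m]\setminus I)\cup\{1^{\ast}\}$ (after collapsing $I$ to the starred vertex), its coproduct runs over subsets $J$ of that collapsed index set. The middle tensor factor is then $\{(M\diagup M_I)_J\}$ and the right factor is $\{(M\diagup M_I)\diagup(M\diagup M_I)_J\}$. This is exactly the situation analyzed in Lemma 2.1 and Corollary 2.6: for each such $J$ there is a subset $K\subset[m]$ with $K=I\cup(J\cap[m])$ satisfying $(M\diagup M_I)_J\sim M_K\diagup M_I$ and, crucially, $(M\diagup M_I)\diagup(M\diagup M_I)_J = M\diagup M_K$. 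Thus the left-hand generic summand becomes $\{M_I\}\otimes\{M_K\diagup M_I\}\otimes\{M\diagup M_K\}$, indexed by pairs $I\subset K$.

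Next I would expand the right-hand side symmetrically: $(\bigtriangleup\otimes 1)\bigtriangleup\{M\}$ has generic summand $\bigtriangleup\{M_K\}\otimes\{M\diagup M_K\}$, and the coproduct of $\{M_K\}$ runs over subsets $I\subset K$, producing $\{(M_K)_I\}\otimes\{M_K\diagup M_I\}\otimes\{M\diagup M_K\}$. Since $(M_K)_I=M_I$ for $I\subset K$ (noted just before Definition 2.3), this is $\{M_I\}\otimes\{M_K\diagup M_I\}\otimes\{M\diagup M_K\}$ — term-for-term identical to the left side. The boundary contributions involving $\{0\}$ (the $I=\emptyset$, $I=[m]$, $K=[m]$ edge cases) must be checked to match as well, using $M\diagup M=0$ and $M\diagup 0=M$; these account precisely for the two primitive terms $\{M\}\otimes\{0\}+\{0\}\otimes\{M\}$ propagating correctly through each side.

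The main obstacle is bookkeeping the index translation cleanly across the equivalence classes: the collapsed vertex $1^{\ast}$ and the re-indexing of $[m]\setminus I$ mean that $J$ is a subset of a different ground set than $I$ and $K$, so I must invoke Corollary 2.6 (and the connectivity analysis of Propositions 2.6–2.8) to certify that the map $J\mapsto K=I\cup(J\cap[m])$ is a genuine bijection onto $\{K : I\subset K\subset[m]\}$ and that it respects the decomposition into connected pieces — otherwise the disconnected quotients would not match class-by-class. Once that correspondence is established and verified to be degree-preserving, equality of the two sides follows immediately by comparing summands, and the extension to disconnected $\{M\}$ is automatic from the multiplicativity of $\bigtriangleup$ over $\oplus$.
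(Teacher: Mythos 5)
Your proposal is correct and takes essentially the same route as the paper's own proof: both expand the two iterated coproducts, index the surviving summands by nested pairs $I\subset K\subset[m]$, and invoke the quotient--subset correspondence (Corollaries 2.6 and 2.7, resting on Propositions 2.6--2.8) to identify $\{M_I\}\otimes\{M_K\diagup M_I\}\otimes\{M\diagup M_K\}$ on both sides, with the boundary $\{0\}$-terms checked separately. The one small imprecision is that the collapsed index set is $([m]\setminus I)\cup\{1^{\ast},\cdots,p^{\ast}\}$, with one starred vertex for each of the $p$ connected components of $M_I$, rather than a single $1^{\ast}$ --- but you effectively account for this by requiring the correspondence $J\mapsto K$ to respect the decomposition into connected pieces.
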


\begin{proof}
Let $\{M\}\in M_{adj}(m,\mathbb{N})\diagup\sim$
be connected, $\{M\}\not=0$, we first consider 
the left side of the formula in theeorem 2.1. It is

$$
\begin{array}{c}
(1\otimes\bigtriangleup)\bigtriangleup\{M\} \\
=(1\otimes\bigtriangleup)(\{M\}\otimes\{0\}+
\{0\}\otimes\{M\}+\sum\limits_{I\subset[m],\,I\not=[m]}
\{M_{I}\}\otimes\{M\diagup M_{I}\})  \\
=\{M\}\otimes\{0\}\otimes\{0\}+
\{0\}\otimes\bigtriangleup\{M\}+
\sum\limits_{I\subset[m],\,I\not=[m]}
\{M_{I}\}\otimes
\bigtriangleup\{M\diagup M_{I}\},
\end{array} 
$$
where

$$
\begin{array}{c}
\bigtriangleup\{M\diagup M_{I}\}=
\{M\diagup M_{I}\}\otimes\{0\}+
\{0\}\otimes\{M\diagup M_{I}\}+ \\
\sum\limits_{J\subset
([m]\setminus I)\cup\{1^{\ast},\cdots,p^{\ast}\}}
\{(M\diagup M_{I})_{J}\}
\otimes\{(M\diagup M_{I})\diagup(M\diagup M_{I})_{J}\}.
\end{array}
$$
In above sum, the positive integer $p$ arises 
from the decomposition of $M_{I}$,
that is $M_{I}\sim M_{I_{1}}\oplus\cdots\oplus M_{I_{p}}$,
where each $M_{I_{i}}$ is connected ($i=1,\cdots,p$).
According to corollary 2.6 and corollary
2.7, we know that for each 
$J\subset([m]\setminus I)\cup
\{1^{\ast},\cdots,p^{\ast}\})$
there is $K\subset[m]$
such that $(M\diagup M_{I})_{J}=M_{K}\diagup M_{I}$,
$(M\diagup M_{I})\diagup
(M\diagup M_{I})_{J}=M\diagup M_{K}$,
and vice versa. In summary, we have

$$
\begin{array}{c}
(1\otimes\bigtriangleup)\bigtriangleup\{M\} 
=\{M\}\otimes\{0\}\otimes\{0\}+
\{0\}\otimes\bigtriangleup\{M\}+ \\
\sum\limits_{I\subset[m],\,I\not=[m]}
\{M_{I}\}\otimes
(\{M\diagup M_{I}\}\otimes\{0\}+
\{0\}\otimes\{M\diagup M_{I}\})+ \\
\sum\limits_{I\subset K\subset[m],\,K\not=[m]}
\{M_{I}\}\otimes\{M_{K}\diagup M_{I}\}
\otimes\{M\diagup M_{K}\}.
\end{array}
$$

We now consider the right side of the formula
in theorem 2.1. We have

$$
\begin{array}{c}
(\bigtriangleup\otimes 1)
\bigtriangleup\{M\} \\
=(\bigtriangleup\otimes 1)(\{M\}\otimes\{0\}+
\{0\}\otimes\{M\}+\sum\limits_{K\subset[m],K\not=[m]}
\{M_{K}\}\otimes\{M\diagup M_{K}\}) \\
=\bigtriangleup\{M\}\otimes\{0\}+
\{0\}\otimes\{0\}\otimes\{M\}+
\sum\limits_{K\subset[m],K\not=[m]}
\bigtriangleup\{M_{K}\}\otimes\{M\diagup M_{K}\},
\end{array}
$$
where

$$
\bigtriangleup\{M_{K}\} 
=\{M_{K}\}\otimes\{0\}+
\{0\}\otimes\{M_{K}\}+ 
\sum\limits_{I\subset K,\,I\not= K}
\{M_{I}\}\otimes
(\{M_{K}\diagup M_{I}\})
$$
Comparing both sides of the formula in
theorem 2.1 we reach the conclusion of
theorem 2.1. 

\end{proof}

The unit $u$ and counit $\eta$ of $\mathcal{H}_{adj}$ are defined
as follows:

\begin{equation}
u:c\mapsto c\{0\},\,\,c\in\mathbb{C},
\end{equation}

\begin{equation}
\eta:\{0\}\mapsto 1,\eta:M\mapsto 0,
for\,\,M\not=0.
\end{equation}
It is easy to check that tuple
$(\mathcal{H}_{adj},\oplus,u,\bigtriangleup,\eta)$
is a bialgebra.

Let $\overline{\mathcal{H}}_{adj}=\mathbf{ker}(\eta)$,
and $\overline{\bigtriangleup}$ denote
the reduced coproduct on $\overline{\mathcal{H}}_{adj}$,

$$
\overline{\bigtriangleup}\{M\}=
\bigtriangleup\{M\}-\{M\}\otimes\{0\}
-\{0\}\otimes\{M\}.
$$
Then we have the following conclusion.

\begin{proposition}
$\overline{\bigtriangleup}$ is conilpotent,
i.e. for any connected 
$\{M\}\in M_{adj}(\infty,\mathbb{N})$ ($M\not= 0$),
there is an positive integer $n$ such that

$$
\overline{\bigtriangleup}^{n}\{M\}=0,
$$
where $\overline{\bigtriangleup}^{n}$
is defined to be

$$
\begin{array}{ccc}
\overline{\bigtriangleup}^{n+1}=
(\overline{\bigtriangleup}\otimes &
\underbrace{1\otimes\cdots\otimes 1} &)
\overline{\bigtriangleup}^{n}. \\
 & n-times &
\end{array}
$$
\end{proposition}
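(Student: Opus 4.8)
The plan is to exhibit $\mathcal{H}_{adj}$ as a connected graded coalgebra and read off conilpotency from the grading, which is the standard mechanism for combinatorial Hopf algebras. First I would record the degree grading $\mathcal{H}_{adj}=\bigoplus_{n\geq 0}\mathcal{H}_{adj}^{(n)}$, where $\mathcal{H}_{adj}^{(n)}=\mathbf{Span}_{\mathbb{C}}\{\{M\}\mid\mathbf{deg}\{M\}=n\}$. Since a class has degree $0$ precisely when it carries no edges, and in $M_{adj}(+\infty,\mathbb{N})$ every edgeless class is identified with $\{0\}$, the bottom piece is one-dimensional, $\mathcal{H}_{adj}^{(0)}=\mathbb{C}\{0\}$; this is the connectedness hypothesis I need.

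The heart of the argument is the claim that $\overline{\bigtriangleup}$ strictly drops degree in each tensor slot, i.e.
$$
\overline{\bigtriangleup}\bigl(\mathcal{H}_{adj}^{(n)}\bigr)\subseteq\bigoplus_{\substack{p+q=n\\ p,q\geq 1}}\mathcal{H}_{adj}^{(p)}\otimes\mathcal{H}_{adj}^{(q)}.
$$
For a connected class $\{M\}$ of degree $n\geq 1$ I would expand $\overline{\bigtriangleup}\{M\}=\sum_{I}\{M_{I}\}\otimes\{M\diagup M_{I}\}$ and invoke the degree additivity $\mathbf{deg}\{M\}=\mathbf{deg}\{M_{I}\}+\mathbf{deg}\{M\diagup M_{I}\}$ from Remark 2.4. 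It then suffices to see that every surviving summand has both factors of positive degree. The second factor is handled by connectivity: if $I\subsetneq[m]$ and $\mathbf{deg}\{M\diagup M_{I}\}=0$ then all edges of $M$ lie inside $I$, so every vertex of $I^{c}$ is isolated, contradicting that $M$ is connected with $m\geq 2$; hence $\mathbf{deg}\{M\diagup M_{I}\}\geq 1$. For the first factor, a summand with $\mathbf{deg}\{M_{I}\}=0$ has edgeless $M_{I}$, so by the convention $M\diagup 0=M$ it collapses to $\{0\}\otimes\{M\}$, which is exactly (a copy of) the primitive term subtracted when passing from $\bigtriangleup$ to $\overline{\bigtriangleup}$; thus no such term survives. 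The disconnected case $\{M\}=\bigoplus_{i}\{M_{i}\}$ follows because $\bigtriangleup$ is defined componentwise along $\oplus$ and $\oplus$ is degree-additive, so the containment for the connected pieces assembles to the same containment for $\{M\}$.

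With the key containment in hand the conclusion is a counting argument. By induction on $k$, using the recursion $\overline{\bigtriangleup}^{k+1}=(\overline{\bigtriangleup}\otimes 1^{\otimes k})\overline{\bigtriangleup}^{k}$ together with the displayed containment applied in the first slot, the iterate $\overline{\bigtriangleup}^{k}$ lands in $(k+1)$-fold tensors whose factors all have degree $\geq 1$ and whose degrees sum to $\mathbf{deg}\{M\}$. Consequently a nonzero $(k+1)$-fold tensor forces $k+1\leq\mathbf{deg}\{M\}$, so taking $n=\mathbf{deg}\{M\}$ gives $\overline{\bigtriangleup}^{\,n}\{M\}=0$.

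I expect the main obstacle to be the careful bookkeeping of the degenerate summands in the second step — precisely, verifying that every $I$ with edgeless $M_{I}$ (or with trivial $M\diagup M_{I}$) contributes only copies of the primitive terms $\{0\}\otimes\{M\}$ and $\{M\}\otimes\{0\}$, so that $\overline{\bigtriangleup}$ genuinely has image in strictly-positive degrees in both slots. This rests on the quotient conventions $M\diagup M=0$, $M\diagup 0=M$ and on Proposition 2.5 (the quotient of a connected matrix is connected), and is the one place where the combinatorics of the adjacency-matrix quotient, rather than pure grading formalism, must be checked.
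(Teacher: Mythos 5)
The paper states this proposition \emph{without proof}: it appears bare, and the text immediately moves on to ``According to the previous discussion we know that $\mathcal{H}_{adj}$ is a conilpotent bialgebra, thus, a Hopf algebra.'' So there is no argument of the paper's to compare yours against; your write-up fills a genuine gap. Your route is correct and is the standard connected-graded-coalgebra argument: the edge-count grading has bottom piece $\mathbb{C}\{0\}$ (all edgeless classes are identified with $\{0\}$ in $M_{adj}(+\infty,\mathbb{N})$), degree additivity $\mathbf{deg}\{M\}=\mathbf{deg}\{M_{I}\}+\mathbf{deg}\{M\diagup M_{I}\}$ is exactly Remark 2.4, and once $\overline{\bigtriangleup}$ lands in strictly positive bidegrees the counting bound $n=\mathbf{deg}\{M\}$ is immediate. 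This also matches what the paper itself observes only later, in Section 4.1, namely that $\bigtriangleup:\mathcal{H}_{adj,n}\to\bigoplus_{p+q=n}\mathcal{H}_{adj,p}\otimes\mathcal{H}_{adj,q}$ and that $\mathcal{H}_{adj}$ is a connected graded Hopf algebra; your proof makes that remark do the work it is supposed to do.

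Two small points of care, neither fatal. First, the degenerate summands you flag are indeed the crux: the coproduct formula (2.18) does not say explicitly whether subsets $I$ with $M_{I}$ edgeless are admitted in the sum, and if they were, each such $I$ would contribute its own copy of $\{0\}\otimes\{M\}$ (via the convention $M\diagup 0=M$), which would break not only your strict-positivity containment but the counit axiom itself. So the only consistent reading is that such $I$ are excluded from the sum outright; your phrasing that each such term ``is (a copy of) the primitive term subtracted'' is slightly off, since passing to $\overline{\bigtriangleup}$ subtracts only one copy of $\{0\}\otimes\{M\}$ --- the correct statement is that these $I$ never appear, not that they cancel. Second, your step ``every vertex of $I^{c}$ is isolated, contradicting connectivity'' uses graph connectivity; Definition 2.2, read literally, is silent about isolated vertices when $m\geq 4$, but Remark 2.2 makes clear that graph connectivity is the intended meaning, so this is a defect of the paper's definition rather than of your argument.
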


According to the previous discussion we know that
$\mathcal{H}_{adj}$ is a conilpotent bialgebra,
thus, a Hopf algebra. Actually, when
the reduced coproduct is conilpotent,
the antipode $S$ can be expressed by the redeced
coproduct (see ?). Setting

$$
\begin{array}{c}
\oplus^{n}:\mathcal{H}_{adj}^{\otimes^{n}}
\rightarrow\mathcal{H}_{adj},\,\,
\oplus^{n}:\{M_{1}\}\otimes\cdots\otimes
\{M_{n}\}\mapsto\{M_{1}\}\oplus\cdots\oplus
\{M_{n}\}, \\
\{M_{i}\}\in M_{adj}(m_{i},\mathbb{N})
\diagup\sim,\,
i=1,\cdots,n,
\end{array}
$$
then the antipode $S$ will be of the following form,

$$
S(\{M\})=-\{M\}+\sum\limits_{n\geq 1}(-1)^{n}
\oplus^{n}(\overline{\bigtriangleup}^{n-1}\{M\}),
\,\,\{M\}\in M_{adj}(+\infty,\mathbb{N}).
$$

All previous discussions in this subsection 
can be generalized to the situation of the 
extended adjacency matrices. Let

$$
\mathcal{H}_{adj(e)}=\mathbf{Span}_{\mathbb{C}}
(M_{adj}(+\infty,\mathbb{N})_{(e)}).
$$
Replacing the adjacency matrices with the 
extended adjacency matrices in definition 2.4 
and in all previous conclusions, we can prove that
$\mathcal{H}_{adj(e)}$ is a Hopf algebra.

\section{Insertion of the adjacency matrices}

In this section we will discuss the insertion
of the adjacency matrices. To define the insertion
of the adjacency matrices we need to introduce
the decomposing map for the non-negative integers or
mutiple indices. The decomposing map for the 
non-negative integers is map $\iota:\mathbb{N}
\rightarrow\mathbb{N}^{l}$, 

$$
\iota:a\mapsto(a_{1},\cdots,a_{l}),\,
a,a_{1},\cdots,a_{l}\in\mathbb{N},\,
a_{1}+\cdots+a_{l}=a,
$$
where $l$ is a positive integer.
In the situation of the multiple indices,
the decomposing map can be defined in 
terms of the matrices as follows.
Let $(m_{1},\cdots,m_{k})\in\mathbb{N}^{k}$,
then

$$
\iota:
\begin{pmatrix}
m_{1} \\
\vdots \\
m_{k}
\end{pmatrix}
\mapsto
\begin{pmatrix}
a_{11} &\cdots & a_{1l} \\
\vdots &\ddots & \vdots \\
a_{k1} &\cdots & a_{kl}
\end{pmatrix},
$$
where $\iota(m_{i})=(a_{i1},\cdots,a_{il})
\in\mathbb{N}^{l}$, $a_{i1}+
\cdots+a_{il}=m_{i}$ ($i=1,\cdots,k$).

\begin{definition}
Let $M\in M_{adj}(m,\mathbb{N})$ 
(or $(M,b)\in M_{adj}(m+1,\mathbb{N})_{(e)}$),
$N\in M_{adj}(n,\mathbb{N})$,
we define the insertion
of the adjacency matrices as follows.
	
\begin{itemize}
\item $\mathbf{The\,\,situation\,\,of\,\,
N\,\,being\,\,connected:}$
Let $1\leq i\leq m$, $\iota_{i}$ be a decomposing
map,

$$
\iota_{i}:
\begin{pmatrix}
m_{1i} \\
\vdots \\
m_{i-1i}
\end{pmatrix}
\mapsto
A_{12}=
\begin{pmatrix}
a_{11} & \cdots & a_{1n} \\
\vdots & \ddots & \vdots \\
a_{i-1\,1} & \cdots & a_{i-1\,n}
\end{pmatrix},
$$

$$
\iota_{i}:
\begin{pmatrix}
m_{i+1i} \\
\vdots \\
m_{mi}
\end{pmatrix}
\mapsto
A_{32}=
\begin{pmatrix}
a_{i+1\,1} & \cdots & a_{i+1\,n} \\
\vdots & \ddots & \vdots \\
a_{m1} & \cdots & a_{mn}
\end{pmatrix},
$$
where $(m_{1i},\cdots,m_{i-1i},0,
m_{i+1i},\cdots,m_{mi})^{T}$ is the $i$th
column of $M$.
The insertion of $N$ into $M$ at $i$
by $\iota_{i}$ is an adjacency matrix in 
$M_{adj}(m+n-1,\mathbb{N})$ with the following form:
		
\begin{equation}
\begin{pmatrix}
M_{11} & A_{12} & M_{13} \\
A_{12}^{T} & N & A_{32}^{T} \\
M_{13}^{T} & A_{32} & M_{33}
\end{pmatrix},
\end{equation}
where $M_{11}=M_{I_{1}}$ ($I_{1}=\{1,\cdots,i-1\}$),
$M_{33}=M_{I_{2}}$ ($I_{2}=\{i+1,\cdots,m\}$),
		
$$
M_{13}=
\begin{pmatrix}
m_{1i+1} & \cdots & m_{1m} \\
\vdots & \ddots & \vdots \\
m_{i-1\,i+1} & \cdots &
m_{i-1\,m}
\end{pmatrix},
$$
The block matrix
(3.1) is denoted by $(N\hookrightarrow_{(i,\iota_{i})}M)$
and $i$ is called the position of the above insertion.

The stuation of the extended adjacency matrices
is similar. Let $(M,a)\in M_{adj}(m+1,\mathbb{N})_{(e)}$, 
$(N,b)\in M_{adj}(n+1,\mathbb{N})_{(e)}$,
the insertion $(N,b)$ into $(M,a)$ at $i$ by
$\iota_{i}$ is defined to be

\begin{equation}
((N,b)\hookrightarrow_{(i,\iota_{i})}(M,a))=
((N\hookrightarrow_{(i,\iota_{i})}M),
a_{I_{1}},\iota_{i}(a_{i}),a_{I_{2}}),
\end{equation}
where $\iota_{i}(a_{i})=(a_{i1},
\cdots,a_{in})$ ($a_{i1}+
\cdots+a_{in}=a_{i}$). 

We define

$$
\begin{array}{c}
(0\hookrightarrow_{i}M)=M,\,
(M\hookrightarrow 0)=M, \\
(0\hookrightarrow_{i}(M,a))=(M,a),\,
((M,a)\hookrightarrow 0)=(M,a).
\end{array}
$$

\item $\mathbf{The\,\,situation\,\,of\,\,N
\,\,being\,\,disconnected:}$
Let $N\sim N_{1}\oplus\cdots\oplus N_{k}$ ($2\leq k\leq n-1$) , 
each $N_{j}$ be connected ($j=1,\cdots,k$), then we define
the insertion of $N$ into $M$, or $(M,b)$, by 
$\iota_{i_{1}},\cdots,\iota_{i_{k}}$ at
$i_{1},\cdots,i_{k}$ as
		
\begin{equation}
(N_{k}\hookrightarrow_{(i_{k},\iota_{i_{k}})}(\cdots
(N_{2}\hookrightarrow_{(i_{2},\iota_{i_{2}})}
(N_{1}\hookrightarrow_{(i_{1},\iota_{i_{1}})}M))\cdots)),
\end{equation}
or

\begin{equation}
(N_{k}\hookrightarrow_{(i_{k},\iota_{i_{k}})}(\cdots
(N_{2}\hookrightarrow_{(i_{2},\iota_{i_{2}})}
(N_{1}\hookrightarrow_{(i_{1},\iota_{i_{1}})}(M,b)))\cdots)),
\end{equation}
where $i_{a}\not=i_{b}$ ($a\not=b$).
We denote the matrix in (3.3), or (3.4), by
		
$$
(N_{1}\oplus\cdots\oplus N_{k}\hookrightarrow
_{(i_{1},\cdots,i_{k},\iota_{i_{1}},\cdots,\iota_{i_{k}})}M),
$$
or

$$
(N_{1}\oplus\cdots\oplus N_{k}\hookrightarrow
_{(i_{1},\cdots,i_{k},\iota_{i_{1}},\cdots,\iota_{i_{k}})}(M,b)).
$$
\end{itemize}
\end{definition}

The elmentray subjects concerning the adjacency 
matrices, or the extended adjacency matrices in 
this article are connectedness, quotient and
insertion. We have seen and we will see the 
properties of the adjacency 
matrices and the extended adjacency matrices
are almost same. Thus, we will foucs on the situation
of the adjacency matrices below.

\begin{remark}
\begin{itemize}
\item For convenience, we introduce some
compact symbols about the direct sum and
insertion. Some time the direct sum
$N_{1}\oplus\cdots\oplus N_{n}$ will be denoted
by $(N)_{[n]}$ below. Similarly, for
a subset $\varLambda\subset[n]$,
the direct sum $\bigoplus_{j\in\varLambda}N_{j}$
will be denoted by $(N)_{\varLambda}$ for short.
Furthermore, the insertion

$$
(N_{k}\hookrightarrow_{(i_{k},\iota_{i_{k}})}(\cdots
(N_{2}\hookrightarrow_{(i_{2},\iota_{i_{2}})}
(N_{1}\hookrightarrow_{(i_{1},\iota_{i_{1}})}M))\cdots))
$$
will be denoted by 

$$
((N)_{[n]}\hookrightarrow_
{(i_{[n]},\iota_{i_{[n]}})}M),
$$
where $i_{[n]}=(i_{1},\cdots,i_{n})$,
$\iota_{i_{[n]}}=(\iota_{i_{1}},\cdots,\iota_{i_{n}})$.

\item If $M\in M_{adj}(m,\mathbb{N})$ is disconnected, 
then $M\sim M_{1}\oplus\cdots\oplus M_{k}$
where each $M_{i}$ is connected ($i=1,\cdots,k$).
Then, for $1\leq i\leq m$, there is $j$
($1\leq j\leq k$), such that $i$ is an index
of the rows (or columns) of $M_{j}$. By definition 3.1,
It is easy to see that

$$
(N\hookrightarrow_{(i,\iota_{i})}M)=
(M)_{[k]\setminus\{j\}}\oplus
(N\hookrightarrow_{(i,\iota_{i})}M_{j}).
$$
where $N$ is a connected adjacency matrix.
\end{itemize}	
\end{remark}

\begin{proposition}
	Let $M\in M_{adj}(m,\mathbb{N})$,
	$N\in M_{adj}(n,\mathbb{N})$, $M$ and $N$ be connected,
	$(N\hookrightarrow_{i,\iota_{i}}M)$ be the
	insertion of $N$ into $M$ at $i$ by $\iota_{i}$,
	where $1\leq i\leq m$, $\iota_{i}$ is the 
	decomposing map, $\sigma\in\mathbf{S}_{n}$,
	$\tau\in\mathbb{S}_{m}$. Then, we have
	
	\begin{equation}
	(N\hookrightarrow_{(i,\iota_{i})}M)\sim
	(\sigma(N)\hookrightarrow_{(\tau^{-1}(i),
	\sigma(\iota_{i}))}\tau(M)),
	\end{equation}
	where, based on the block matrix (3.1),
	$\sigma(\iota_{i})(m_{ji})=(a_{j\sigma(1)},
	\cdots,a_{j\sigma(n)})$.
\end{proposition}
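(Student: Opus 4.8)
The relation $\sim$ is by definition the orbit relation of the conjugation action of the permutation group, so proving (3.5) amounts to exhibiting a single permutation matrix $P$ of order $m+n-1$ with
$$(\sigma(N)\hookrightarrow_{(\tau^{-1}(i),\sigma(\iota_{i}))}\tau(M))=P\,(N\hookrightarrow_{(i,\iota_{i})}M)\,P^{T},$$
equivalently a relabelling of the $m+n-1$ vertices of the inserted graph carrying one block matrix to the other. The plan is to factor $P$ as a composite of two independent relabellings, one induced by $\sigma$ acting on the inserted copy of $N$, the other induced by $\tau$ acting on the ambient matrix $M$, and to verify that each preserves every entry of (3.1).

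First I would treat $\sigma$ alone and prove $(N\hookrightarrow_{(i,\iota_{i})}M)\sim(\sigma(N)\hookrightarrow_{(i,\sigma(\iota_{i}))}M)$. Since the position $i$ and the matrix $M$ are untouched, the natural candidate is conjugation by the block-diagonal permutation matrix $\mathbf{diag}(I_{i-1},P_{\sigma},I_{m-i})$, where $P_{\sigma}$ is the $n\times n$ permutation matrix of $\sigma$ normalised so that $P_{\sigma}XP_{\sigma}^{T}=\sigma(X)$ for every $X\in M_{adj}(n,\mathbb{N})$. Conjugating (3.1) leaves $M_{11},M_{13},M_{33}$ fixed, sends the central block $N$ to $P_{\sigma}NP_{\sigma}^{T}=\sigma(N)$, and sends $A_{12},A_{32}$ to $A_{12}P_{\sigma}^{T},A_{32}P_{\sigma}^{T}$. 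A direct computation gives $(A_{12}P_{\sigma}^{T})_{jk}=a_{j\sigma(k)}$, which is exactly the $k$-th component of $\sigma(\iota_{i})(m_{ji})$ by the definition of $\sigma(\iota_{i})$ in the statement; hence the conjugate is precisely $(\sigma(N)\hookrightarrow_{(i,\sigma(\iota_{i}))}M)$.

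Next I would treat $\tau$ alone and prove $(N\hookrightarrow_{(i,\iota_{i})}M)\sim(N\hookrightarrow_{(\tau^{-1}(i),\iota_{i})}\tau(M))$ for arbitrary $N$ and $\iota_{i}$. The key observation is that the vertex sitting at position $\tau^{-1}(i)$ of $\tau(M)$ is exactly the vertex $i$ of $M$, because $\tau(M)_{\tau^{-1}(i),q}=M_{i,\tau(q)}$; consequently the column of $\tau(M)$ that is blown up is a rearrangement of the $i$-th column of $M$, the source entry feeding the decomposition at position $p$ being $m_{\tau(p),i}$. I would then define the relabelling of the $m+n-1$ vertices that fixes the inserted block slotwise and sends the non-inserted vertex labelled by $M$-vertex $j$ (for $j\neq i$) on the left to the non-inserted vertex labelled by the same $M$-vertex $j$ on the right; concretely this is the bijection $[m]\setminus\{i\}\to[m]\setminus\{i\}$ induced by $\tau$ together with the positional shift caused by inserting $n$ vertices at $i$ versus at $\tau^{-1}(i)$. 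Checking that this relabelling is a graph isomorphism reduces to three verifications: the $M$--$M$ edges reproduce $M_{\tau(p)\tau(q)}=\tau(M)_{pq}$, the $N$--$N$ block is untouched, and the $M$--$N$ edges match because the source entry $m_{\tau(p),i}$ is decomposed by the same rule $\iota_{i}$ into $(a_{\tau(p),1},\dots,a_{\tau(p),n})$.

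Composing the two steps finishes the proof: applying the $\tau$-equivariance of the previous paragraph with $N$ replaced by $\sigma(N)$ and $\iota_{i}$ by $\sigma(\iota_{i})$ to the matrix produced by the $\sigma$-step yields exactly $(\sigma(N)\hookrightarrow_{(\tau^{-1}(i),\sigma(\iota_{i}))}\tau(M))$, and the product of the two permutation matrices is the desired $P$. The conceptual content is transparent — both sides describe the same multigraph obtained by expanding vertex $i$ into a copy of $N$ and distributing the incident edges — so I expect no real difficulty beyond bookkeeping. The one step that demands care is the index arithmetic in the $\tau$-equivariance: because $\tau$ may move $i$ to a different position, the inserted block occupies a different range of indices on the two sides and the surrounding $M$-vertices are interleaved differently, so one must confirm the proposed relabelling is genuinely a bijection before verifying that it preserves entries.
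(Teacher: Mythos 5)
Your proof is correct and takes essentially the same route as the paper: both arguments exhibit an explicit permutation-matrix conjugation carrying one block matrix of the form (3.1) to the other, with the key verification being that the connecting block transforms as $A\mapsto P_{1}AP^{T}$ (entrywise $a_{jk}\mapsto a_{\tau(j)\sigma(k)}$). The only difference is organizational: you factor the conjugating permutation into a $\sigma$-step and a $\tau$-step and compose by transitivity of $\sim$, whereas the paper carries out the single simultaneous computation.
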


\begin{proof}
	By definition, we know that
	$\tau(M)=(m_{\tau(a)\tau(b)})_{m\times m}$,
	the eatries on $i$th column of $M$ will
	be on $\tau^{-1}(i)$th column of $\tau(M)$.
	Pricisely, let $a=\tau^{-1}(i)$, we have
	
	$$
	\tau(M)=
	\begin{pmatrix}
	V_{11} &
	\begin{array}{c}
	m_{\tau(1)i} \\
	\vdots \\
	m_{\tau(a-1)\,i} 
	\end{array} 
	& V_{12}  \\
	m_{i\tau(1)}\cdots m_{i\tau(a-1)} &
	0 & m_{i\tau(a+1)}\cdots m_{i\tau(m)} \\
	V_{21} &
	\begin{array}{c}
	m_{\tau(a+1)\,i} \\
	\vdots \\
	m_{\tau(m)\,i} 
	\end{array}
	& V_{22}
	\end{pmatrix},
	$$
	where
	
	$$
	V_{11}=
	\begin{pmatrix}
	m_{\tau(1)\tau(1)} & \cdots & m_{\tau(1)\tau(a-1)} \\
	\vdots &\ddots & \vdots \\
	m_{\tau(a-1)\tau(1)} &\cdots & m_{\tau(a-1)\tau(a-1)}
	\end{pmatrix},
	$$
	
	$$
	V_{12}=
	\begin{pmatrix}
	m_{\tau(1)\tau(a+1)} &\cdots & m_{\tau(1)\tau(m)} \\
	\vdots & \ddots & \vdots \\
	m_{\tau(a-1)\tau(a+1)} &\cdots & m_{\tau(a-1)\tau(m)}
	\end{pmatrix},
	$$
	
	$$
	V_{22}=
	\begin{pmatrix}
	m_{\tau(a+1)\tau(a+1)} &\cdots & m_{\tau(a+1)\tau(m)} \\
	\vdots & \ddots & \vdots \\
	m_{\tau(m)\tau(a+1)} &\cdots & m_{\tau(m)\tau(m)}
	\end{pmatrix},
	$$
	$V_{21}=V_{12}^{T}$.
	If $\sigma(N)=PNP^{T}$, where $P$ is a $n\times n$
	permutation matrix, then we have
	
	$$
	(\sigma(N)\hookrightarrow_{(\tau^{-1}(i),
		\sigma(\iota_{i}))}\tau(M))=
	\begin{pmatrix}
	V_{11} & B_{12} & V_{12} \\
	B_{12}^{T} & PNP^{T} & B_{32}^{T} \\
	V_{21} & B_{32} & V_{22}
	\end{pmatrix}.
	$$
	If we express the decomposing map $\iota_{i}$
	as a matrix, i.e.
	
	$$
	A=
	\begin{pmatrix}
	a_{11} &\cdots & a_{1n} \\
	\vdots & \ddots & \vdots \\
	a_{i-1\,1} &\cdots & a_{i-1\,n} \\
	a_{i+1\,1} &\cdots & a_{i+1\,n} \\
	\vdots & \ddots & \vdots \\
	a_{m1} &\cdots & a_{mn} 
	\end{pmatrix}
	=
	\begin{pmatrix}
	A_{12} \\
	A_{32}
	\end{pmatrix},
	$$
	and let $\tau(M)=P_{1}MP_{1}^{T}$,
	where $A_{12}$ and $A_{32}$ are given in definition 3.1,
	$P_{1}$ is a $m\times m$ permutation matrix,
	then
	
	$$
	B=
	\begin{pmatrix}
	B_{12} \\
	B_{32}
	\end{pmatrix}
	=P_{1}AP^{T}=
	\begin{pmatrix}
	a_{\tau(1)\sigma(1)} &\cdots & a_{\tau(1)\sigma(n)} \\
	\vdots & \ddots & \vdots \\
	a_{\tau(a-1)\,\sigma(1)} &\cdots & a_{\tau(a-1)\,\sigma(n)} \\
	a_{\tau(a+1)\,\sigma(1)} &\cdots & a_{\tau(a+1)\,\sigma(n)} \\
	\vdots & \ddots & \vdots \\
	a_{\tau(m)\sigma(1)} &\cdots & a_{\tau(m)\sigma(n)} 
	\end{pmatrix}.
	$$
	Comparing the expression of $(\sigma(N)\hookrightarrow_{(\tau^{-1}(i),
	\sigma(\iota_{i}))}\tau(M))$ with the block matrix (3.1)
	we know that the formula (3.5) is valid.
	
\end{proof}

We denote the set $\{(\sigma(N)\hookrightarrow_{(\tau^{-1}(i),
\sigma(\iota_{i}))}\tau(M))\}_{\sigma\in\mathbf{S}_{n},
\tau\in\mathbf{S}_{m}}$ by $\{N\}\hookrightarrow_{(i,\iota_{i})}\{M\}$.
The formula (3.5) means that 

$$
\{(\sigma(N)\hookrightarrow_{(\tau^{-1}(i),
\sigma(\iota_{i}))}\tau(M))\}_{\sigma\in\mathbf{S}_{n},
\tau\in\mathbf{S}_{m}}\subset
\{(N\hookrightarrow_{i,\iota_{i}}M)\}.
$$
Thus, we do not distinguish 
$\{(\sigma(N)\hookrightarrow_{(\tau^{-1}(i),
\sigma(\iota_{i}))}\tau(M))\}_{\sigma\in\mathbf{S}_{n},
\tau\in\mathbf{S}_{m}}$ and
$\{(N\hookrightarrow_{(i,\iota_{i})}M)\}$.
In the sense of previous discussions, the insertion 
is a well defined map

$$
(\{N\},\{M\})\longrightarrow
\{(N\hookrightarrow_{(i,\iota_{i})}M)\}.
$$

\begin{proposition}
	Let $M\in M_{adj}(m,\mathbb{N})$,
	$N\in M_{adj}(n,\mathbb{N})$, both $M$ and $N$ be connected,
	then $(N\hookrightarrow_{(i,\iota_{i})}M)$ is also
	connected.
\end{proposition}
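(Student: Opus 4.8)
The plan is to realize $M$ itself as a quotient of the insertion and then invoke Proposition 2.5. Set $\tilde{M}=(N\hookrightarrow_{(i,\iota_{i})}M)\in M_{adj}(m+n-1,\mathbb{N})$, and let $I=\{i,i+1,\cdots,i+n-1\}$ be the block of indices occupied by the copy of $N$ sitting in the middle of the block form (3.1). By the very construction of the insertion, the diagonal submatrix $\tilde{M}_{I}$ is exactly $N$, so $\tilde{M}_{I}$ is connected by hypothesis. The entire statement will then follow once I show that $\tilde{M}\diagup\tilde{M}_{I}\sim M$, since Proposition 2.5 asserts that $\tilde{M}\diagup\tilde{M}_{I}$ is connected if and only if $\tilde{M}$ is connected.

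The key step is therefore to compute $\tilde{M}\diagup\tilde{M}_{I}$ directly from Definition 2.3 and identify it with $M$ up to equivalence. Writing $I^{c}=\{1,\cdots,i-1\}\cup\{i+n,\cdots,m+n-1\}$, the complementary submatrix $\tilde{M}_{I^{c}}$ is, by the block form (3.1), precisely
\[
\tilde{M}_{I^{c}}=
\begin{pmatrix}
M_{11} & M_{13} \\
M_{13}^{T} & M_{33}
\end{pmatrix}
=M_{[m]\setminus\{i\}},
\]
that is, $M$ with its $i$th row and column deleted. According to formula (2.8), the quotient $\tilde{M}\diagup\tilde{M}_{I}$ keeps this block unchanged and adjoins a single collapsed vertex $1^{\ast}$ whose edge to a vertex $j\in I^{c}$ equals the column sum over the $I$-block. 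For $j$ lying in $M_{11}$ this sum is $\sum_{k=1}^{n}a_{jk}$, the $j$th row sum of $A_{12}$, and for $j$ lying in $M_{33}$ it is the corresponding row sum of $A_{32}$. Here the decisive fact is the sum-preserving property of the decomposing map $\iota_{i}$, namely $\sum_{k=1}^{n}a_{jk}=m_{ji}$, which is exactly the original entry in the $i$th column of $M$. Hence $1^{\ast}$ is joined to each $j\neq i$ with multiplicity $m_{ji}$, so $\tilde{M}\diagup\tilde{M}_{I}$ is nothing but $M$ with the vertex $i$ relabelled $1^{\ast}$ and moved to the first position. This is a permutation of $M$, whence $\tilde{M}\diagup\tilde{M}_{I}\sim M$.

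Combining these observations finishes the argument: $\tilde{M}_{I}=N$ is connected, and $\tilde{M}\diagup\tilde{M}_{I}\sim M$ is connected by hypothesis, so the ``suppose $M\diagup M_{I}$ is connected'' direction of Proposition 2.5 (applied with $\tilde{M}$ in place of $M$ and the above $I$) yields that $\tilde{M}=(N\hookrightarrow_{(i,\iota_{i})}M)$ is connected. The only genuinely computational point, which I expect to be the main obstacle, is matching the orientations of the off-diagonal blocks $A_{12},A_{32}$ and $M_{13}$ in (3.1) against the summation prescribed by (2.8); once the row-sum identity $\sum_{k}a_{jk}=m_{ji}$ is used, everything reduces to a mechanical block comparison. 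As a sanity check one may also argue graph-theoretically via Remark 2.2: any simple path in $M$ from a vertex $w\neq i$ to $i$ has its final edge re-routed into some vertex of the connected block $N$, so every vertex is connected to $N$ and hence to everything else.
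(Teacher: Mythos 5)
Your proof is correct, but it takes a genuinely different route from the paper. The paper argues by contradiction: it assumes $(N\hookrightarrow_{(i,\iota_{i})}M)$ decomposes into connected components $O_{I_{1}}\oplus\cdots\oplus O_{I_{k}}$ ($k\geq 2$), observes that the copy of $N$ must sit properly inside one component, and then constructs explicit permutation matrices (block-diagonal, fixing the positions of $N$) to show that $M$ itself would then decompose as $M_{1}\oplus O_{I_{2}}\oplus\cdots\oplus O_{I_{k}}$, contradicting the connectivity of $M$. You instead run the argument through the quotient machinery of Section 2: you verify the identity $(N\hookrightarrow_{(i,\iota_{i})}M)\diagup N\sim M$ by direct computation with formula (2.8) and the row-sum property $\sum_{k}a_{jk}=m_{ji}$ of the decomposing map, and then pull connectivity back through Proposition 2.5 (quotient connected $\Rightarrow$ original connected, which rests on Proposition 2.4). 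Both ingredients you use are legitimately available: Proposition 2.5 precedes this statement and involves no insertion, so there is no circularity, and your hypothesis that $\tilde{M}_{I}=N$ is connected puts you squarely in the case Proposition 2.5 handles. What your approach buys is economy and reuse: the identity $(N\hookrightarrow_{(i,\iota_{i})}M)\diagup N\sim M$ is exactly the fact the paper later asserts as ``obvious'' at the start of Proposition 3.4, so your proof front-loads half of that result and replaces the paper's somewhat delicate permutation-matrix bookkeeping (including an under-justified ``we can assume $\pi$ fixes $1,\cdots,n$'' step) by a one-line appeal to an established proposition. What the paper's approach buys is self-containedness within Section 3: it needs nothing about quotients, only the definition of connectivity and the block form (3.1).
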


\begin{proof}
	We assume $(N\hookrightarrow_{(i,\iota_{i})}M)$ is
	disconnectd. We want to show that this assumption will result
	in a contradiction. Let $O=(N\hookrightarrow_{(i,\iota_{i})}M)$	
	for short, then $O$	adapts a decomposition
	
	$$
	O\sim O_{I_{1}}\oplus\cdots\oplus O_{I_{k}}.
	$$
	where $\{I_{i}\}_{i=1}^{k}\in\mathbf{Part}
	([m+n-1])$, each $O_{I_{i}}$ is
	connected ($i=1,\cdots,k,\,k\geq 2$).
	
	Without loss of generality, we assume $i=1$.
	Then, $(N\hookrightarrow_{(i,\iota_{i})}M)$ will be
	of the following form,
	
	$$
	(N\hookrightarrow_{(1,\iota_{1})}M)=
	\begin{pmatrix}
	N & A^{T} \\
	A & M_{I}
	\end{pmatrix},
	$$
	where $I=\{2,\cdots,m\}$. Because both $M$ and
	$N$ are connected, we know that $A\not=0$ and
	there is some $I_{i^{\prime}}$ such that 
	$[n]\subset I_{i^{\prime}}$,
	$[n]\not=I_{i^{\prime}}$.
	For simplicity, we assume $i^{\prime}=1$
	and identify $I$ with $\{n+1,\cdots,m+n-1\}$. 
	Then, we know that
	$I_{2}\cup\cdots\cup I_{k}\subset I$.
	Let
	
	$$
	\pi(O)=\mathbf{diag}(O_{I_{1}},\cdots,O_{I_{k}}),\,
	\pi\in\mathbf{S}_{m+n-1}.
	$$
	Noting $[n]\subset I_{1}$, we can
	assume
	
	$$
	O_{I_{1}}=
	\begin{pmatrix}
	N & B^{T} \\
	B & O_{I_{1}^{\prime}}
	\end{pmatrix},
	$$
	where $I_{1}^{\prime}=I_{1}\setminus[n]$.
	Above expression means that the permutation $\pi$
	keeps the positions of $1,\cdots,n$, 
	equivalently, $\pi$ is of the following form
	
	$$
	\pi=
	\begin{pmatrix}
	1&\cdots&n&n+1&\cdots&n+m-1 \\
	1&\cdots&n&\pi(n+1)&\cdots&\pi(n+m-1)
	\end{pmatrix}.
	$$
	Thus $\pi$ induces a permutation on $[m]$
	denoted by $\pi^{\prime}$,
	
	$$
	\pi^{\prime}=
	\begin{pmatrix}
	1&2&\cdots&m \\
	1&\pi(n+1)-n+1&\cdots&\pi(n+m-1)-n+1
	\end{pmatrix}.
	$$
	If $\pi^{\prime}$ corresponds to a $(m-1)\times(m-1)$
	permutation matrix $P$, then $\pi$ will
	correspond to the permutation matrix
	
	$$
	P_{1}=
	\begin{pmatrix}
	E_{n} & 0 \\
	0 & P
	\end{pmatrix},
	$$
	where $E_{n}$ denotes the unit matrix
	of order $n$. Therefore we have
	
	$$
	P_{1}OP_{1}^{T}=
	\begin{pmatrix}
	N & A^{T}P^{T} \\
	PA & PM_{I}P^{T}
	\end{pmatrix}
	=\mathbf{diag}(O_{I_{1}},\cdots,O_{I_{k}}).
	$$
	Above expression implies
	
	$$
	PM_{I}P^{T}=\mathbf{diag}(O_{I_{1}^{\prime}},
	O_{I_{2}},\cdots,O_{I_{k}}).
	$$
	By recovering $M$ from $P_{1}OP_{1}^{T}$ we know
	that
	
	$$
	M\sim M_{1}\oplus O_{I_{2}}\oplus\cdots\oplus O_{I_{k}}.
	$$
	Finally, we reach a contradiction.
	
\end{proof}

We now turn to the situation of
$\{((\bigoplus_{j}N_{j})\hookrightarrow
_{\{\ast\}}(\bigoplus_{i}M_{i}))\}$.

\begin{proposition}
Let $\{M_{i}\}$, $\{N_{j}\}$ be connected adjacency matrices
($i=1,\cdots,m,\,j=1,\cdots,n$). Then, there
a subset $I\subset[m]$ assigned to
a sequence of subsets of $[n]$, 
$\{J_{i}\}_{i\in I}\in\mathbf{Part}([n])$,
such that

\begin{equation}
\{((N)_{[n]}\hookrightarrow_
{(i_{[n]},\iota_{i_{[n]}})}
(M)_{[m]})\} 
=\{((N)_{[n]}\hookrightarrow_
{(i_{[n]},\iota_{i_{[n]}})}(M)_{I})\}
\oplus\{(M)_{I^{c}}\},
\end{equation}
and

$$
\{((N)_{[n]}\hookrightarrow_
{(i_{[n]},\iota_{i_{[n]}})}(M)_{I})\} 
=\bigoplus\limits_{i\in I}\{((N)_{J_{i}}
\hookrightarrow_{(i_{J_{i}},\iota_{i_{J_{i}}})}M_{i})\},
$$
where $I^{c}=[m]\setminus I$.

\end{proposition}

\begin{proof}
Let $M=\bigoplus_{i=1}^{m}M_{i}\in M_{adj}(p,\mathbb{N})$, 
then there is a sequence of the subsets 
$\{I_{i}\}_{1\leq i\leq m}\in\mathbf{Part}([p])$
such that $M_{I_{i}}=M_{i}$ ($1=1,\cdots,m$).	
We take

$$
I=\{i\in[m]|\,\exists j\in[n],
i_{j}\in I_{i}\}.
$$
Then, from definition 3.1 we have that

$$
\begin{array}{c}
\{((N)_{[n]}\hookrightarrow_
{(i_{[n]},\iota_{i_{[n]}})}
(M)_{[m]})\} \\
=\{((N)_{[n]}\hookrightarrow_
{(i_{[n]},\iota_{i_{[n]}})}
((M)_{I}\oplus(M)_{I^{c}}))\} \\
=\{((N)_{[n]}\hookrightarrow_
{(i_{[n]},\iota_{i_{[n]}})}
(M)_{I})\}\oplus\{(M)_{I^{c}}\}.
\end{array}
$$
If we take

$$
J_{i}=\{j\in[n]|\,i_{j}\in I_{i}\},
$$
then we we have

$$
\{((N)_{[n]}\hookrightarrow_
{(i_{[n]},\iota_{i_{[n]}})}
(M)_{I})\} \\
=\bigoplus\limits_{i\in I}\{((N)_{J_{i}}
\hookrightarrow_{(i_{J_{i}},\iota_{i_{J_{i}}})}M_{i})\}.
$$
Up to now we complete the proof of proposition.

\end{proof}

Regarding the insertion as inverse operation of the
quotient, we have the following conclusion.

\begin{proposition}
	Let $M=(m_{ij})_{m\times m}$, $N=(n_{ij})_{n\times n}$
	and $Q=(q_{ij})_{q\times q}$ be three connected
	adjacency matrices, $q=m+n-1$. Then,
	
	$$
	M\sim Q\diagup N
	$$
	if and only if there is a decompsing map
	$\iota_{i}:\{m_{ji}\}_{1\leq j\leq m,\,j\not=i}
	\to\mathbb{N}^{n}$ for some $i$ ($1\leq i\leq m$)
	such that
	
	$$
	Q\sim(N\hookrightarrow_{(i,\iota_{i})}M).
	$$
\end{proposition}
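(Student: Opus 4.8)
The plan is to exhibit the insertion of Definition 3.1 and the quotient of Definition 2.3 as mutually inverse operations, by matching the block-matrix description (3.1) of the insertion against the column-sum description (2.8) of the quotient. The essential observation is that the decomposing map $\iota_{i}$ splits each column entry $m_{ji}$ into a tuple whose coordinates sum back to $m_{ji}$, while the quotient reassembles such data by summing the corresponding columns of the collapsed block; these two operations are exactly inverse to one another. Throughout I would invoke Proposition 3.1 to pass freely to equivalence classes (so that I may relabel by a permutation and place the relevant diagonal submatrix as a consecutive block), Proposition 3.3 to guarantee that $(N\hookrightarrow_{(i,\iota_{i})}M)$ is connected, and Proposition 2.5 to guarantee that $M\sim Q\diagup N$ forces $Q$ to be connected, so that the connectivity hypotheses on all three matrices are consistent.

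For the implication $(\Leftarrow)$, suppose $Q\sim(N\hookrightarrow_{(i,\iota_{i})}M)$. After a permutation I may assume $Q$ has the block form (3.1) whose middle block is $N$, occupying the consecutive index set $I=\{i,i+1,\cdots,i+n-1\}$; thus $Q_{I}=N$. I would then compute $Q\diagup Q_{I}$ directly from (2.8): the collapsed vertex $1^{\ast}$ acquires, for each outside vertex $j$, the connection $\sum_{a\in I}q_{aj}$, which by the definition of $\iota_{i}$ equals $\sum_{k=1}^{n}a_{jk}=m_{ji}$, while the complementary block $Q_{I^{c}}$ is precisely the matrix obtained from $M$ by deleting its $i$th row and column. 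Reassembling, $Q\diagup N$ consists of a single vertex $1^{\ast}$ joined to the remaining vertices exactly as vertex $i$ is joined in $M$, so $Q\diagup N\sim M$, the equivalence absorbing the repositioning of the collapsed vertex from the first slot to position $i$.

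For the implication $(\Rightarrow)$, suppose $M\sim Q\diagup N$. By the meaning of the quotient on classes this provides a subset $I\subset[q]$ with $Q_{I}\sim N$ connected and $Q\diagup Q_{I}\sim M$. Relabeling by a permutation (Proposition 3.1), I may assume $I$ is a consecutive block, so $Q$ already wears the block form (3.1) with middle block $N$. The collapsed vertex occupies some position $i$ of $M$, and I would define $\iota_{i}$ by sending the column entry $m_{ji}=\sum_{a\in I}q_{aj}$ to the tuple $(q_{a_{1}j},\cdots,q_{a_{n}j})$, where $I=\{a_{1},\cdots,a_{n}\}$. With this choice the off-diagonal blocks $A_{12},A_{32}$ of the insertion coincide with the corresponding blocks of $Q$, the middle block is $N=Q_{I}$, and the corner blocks recover $Q_{I^{c}}$; hence $(N\hookrightarrow_{(i,\iota_{i})}M)\sim Q$.

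The step I expect to require the most care is the index bookkeeping that identifies the collapsed vertex $1^{\ast}$, which the formula (2.8) always places in the first position, with the insertion position $i$ in $M$, together with the verification that summing the decomposed tuple returns the original column entry $m_{ji}$. Once the correspondence $I\leftrightarrow(i,\iota_{i})$ is fixed, each direction reduces to reading off matching blocks of the two explicit matrices; but one must track the permutation realizing $\sim$ consistently on both sides, so that the positions of the three surrounding blocks $M_{11},M_{13},M_{33}$ align with $Q_{I^{c}}$ and the degree relation $\mathbf{deg}\,Q=\mathbf{deg}\,N+\mathbf{deg}\,M$ is respected.
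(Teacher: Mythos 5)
Your proposal is correct and takes essentially the same approach as the paper: in both directions you reduce by a permutation to the case where $N$ occupies a standard diagonal block of $Q$, and the decomposing map is constructed by splitting each column sum $m_{ji}=\sum_{a\in I}q_{aj}$ into the tuple of individual entries of $Q$ — exactly the paper's construction of $\iota_{1^{\ast}}$ mapping $(q_{21}^{\ast},\cdots,q_{m1}^{\ast})^{T}$ to the block $Q_{21}$. The only difference is presentational: the paper dismisses the $(\Leftarrow)$ direction as immediate from Definitions 2.3 and 3.1, whereas you write out the block verification explicitly.
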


\begin{proof}
	By definition 2.3 and definition 3.1, it is 
	obvious that we have
	
	$$
	(N\hookrightarrow_{(i,\iota_{i})}M)
	\diagup N\sim M.
	$$
	
	Now we assume $M\sim Q\diagup N$, then, there is a
	subset $I\subset[q]$ ($|I|=n$) such that $Q_{I}=N$.
	Recalling definition 2.3, we have
	
	$$
	M=
	\begin{pmatrix}
	0 & q_{12}^{\ast}\cdots q_{1m}^{\ast} \\
	\begin{array}{c}
	q_{21}^{\ast} \\
	\vdots \\
	q_{m1}^{\ast}
	\end{array}
	& Q_{I^{c}}
	\end{pmatrix}.
	$$
	Without loss of generality, we assume
	$I=[n]$, then $I^{c}=\{n+1,\cdots,q\}$.
	By definition 2.3, we know that
	$q_{j1}^{\ast}=\sum_{k=1}^{n}q_{j+n-1\,k}$,
	$j=2,\cdots,m$. We can now construct the decomposing
	map in the following way.
	
	$$
	\iota_{1^{\ast}}:
	\begin{pmatrix}
	q_{21}^{\ast} \\
	\vdots \\
	q_{m1}^{\ast}
	\end{pmatrix}
	\mapsto
	\begin{pmatrix}
	q_{n+1\,1} &\cdots & q_{n+1\,n} \\
	\vdots &\ddots & \vdots \\
	q_{q1} &\cdots & q_{qn}
	\end{pmatrix}
	=Q_{21},
	$$
	then
	we have
	
	$$
	(N\hookrightarrow_{(1^{\ast},\iota_{1^{\ast}})}M)=
	\begin{pmatrix}
	N & Q_{21}^{T} \\
	Q_{21} & Q_{I^{c}}
	\end{pmatrix}
	=Q.
	$$
	
\end{proof}

In the rest of this section we will discuss
the situation to make insertion repeatly.
Let $M=(m_{ij})_{m\times m}$, $N=(n_{ij})_{n\times n}$
and $Q=(q_{ij})_{q\times q}$ be three connected
adjacency matrices. there are two possible order
to make insertion twice, which are

$$
((N\hookrightarrow_{(i,\iota_{i})}M)
\hookrightarrow_{(j,\tau_{j})}Q)\,\, and
\,\,(N\hookrightarrow_{(a,\mu_{a})}
(M\hookrightarrow_{(b,\nu_{b})}Q)).
$$
Actually, we are interested in the situation of
$(N\hookrightarrow_{(a,\mu_{a})}
(M\hookrightarrow_{(b,\nu_{b})}Q))$ which is more
complicated than other.
In this situation, there is a subset 
$I\subset[m+q-1]$ such that
$(M\hookrightarrow_{(b,\nu_{b})}Q)_{I}=M$.
For the index $a$, there two possibilities
which are $a\in I$ or $a\notin I$. When
$a\in I$, it is easy to know that $a$ corresponds
to an index $a^{\prime}$ of the row (or column)
of $M$. Thus, in this situation, we say  
$a\in M$. Similarly, when $a\notin I$,
we say $a\notin M$. We have the 
following conclusion.

\begin{lemma}
Let $M=(m_{ij})_{m\times m}$, $N=(n_{ij})_{n\times n}$
and $Q=(q_{ij})_{q\times q}$ be three connected
adjacency matrices. About the insertion $(N\hookrightarrow
_{(a,\mu_{a})}(M\hookrightarrow_{(b,\nu_{b})}Q))$,
we have

$$
(N\hookrightarrow
_{(a,\mu_{a})}(M\hookrightarrow_{(b,\nu_{b})}Q)) 
=\left\{
\begin{array}{cc}
((N\hookrightarrow_{(a^{\prime},\mu_{a^{\prime}}^{\prime})}M)
\hookrightarrow_{(b,\nu_{b}^{\prime})}Q), & a\in M, \\
(N\oplus N\hookrightarrow_{(a,\mu_{a}),(b,\nu_{b})}Q),
&  a\notin M.
\end{array}\right.
$$
\end{lemma}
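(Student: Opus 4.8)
The plan is to unwind both sides of the claimed identity directly from Definition 3.1, reducing everything to block-matrix bookkeeping. The key observation is that $(M\hookrightarrow_{(b,\nu_{b})}Q)$ inserts $M$ into $Q$ at position $b$, creating a copy of $M$ sitting inside a larger matrix; let $I\subset[m+q-1]$ be the index set of that copy, so that $(M\hookrightarrow_{(b,\nu_{b})}Q)_{I}=M$. The subsequent insertion of $N$ at position $a$ is governed entirely by the $a$-th row/column of $(M\hookrightarrow_{(b,\nu_{b})}Q)$, which the decomposing map $\mu_{a}$ splits into the neighbours of the newly created $N$-vertex. The entire proof is a case split on whether $a\in I$ (written $a\in M$) or $a\notin I$ (written $a\notin M$).

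First I would treat the case $a\in M$. Here $a$ corresponds to an index $a^{\prime}$ of a row/column of the internal copy $M$, so inserting $N$ at $a$ only touches entries that already belong to $M$ together with the $A_{12},A_{32}$ blocks that $M$ inherited from the first insertion. The goal is to check that the resulting matrix equals $((N\hookrightarrow_{(a^{\prime},\mu^{\prime}_{a^{\prime}})}M)\hookrightarrow_{(b,\nu^{\prime}_{b})}Q)$, i.e. that one may first insert $N$ into $M$ at $a^{\prime}$ and only afterward insert the enlarged block into $Q$. The new decomposing maps $\mu^{\prime}_{a^{\prime}}$ and $\nu^{\prime}_{b}$ are read off directly: $\mu^{\prime}_{a^{\prime}}$ is the restriction of $\mu_{a}$ to those neighbours of $a$ lying inside $M$, while $\nu^{\prime}_{b}$ absorbs the couplings between $Q$ and both $M$ and the freshly inserted $N$. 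I would exhibit the common block form — a $5\times 5$ block matrix with $Q$, $N$, and the three pieces of $M$ along the diagonal — and verify that assembling it in either order produces the identical off-diagonal blocks. This reduces to matching the entries coming from $\mu_{a}$ and $\nu_{b}$ against those coming from $\mu^{\prime}_{a^{\prime}}$ and $\nu^{\prime}_{b}$, which is a routine but careful relabelling.

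Next I would treat the case $a\notin M$. Now $a$ indexes a row/column of the copy of $Q$ (more precisely, of the part of $Q$ lying outside the inserted $M$), so the two insertions target disjoint regions of $Q$. The decomposing map $\mu_{a}$ distributes the $a$-th column of $Q$, which is unaffected by the presence of $M$, so inserting $N$ at $a$ and inserting $M$ at $b$ are independent operations on $Q$. I would argue that the result is exactly $(N\oplus N\hookrightarrow_{(a,\mu_{a}),(b,\nu_{b})}Q)$ in the notation of the disconnected-insertion part of Definition 3.1 — here $N$ denotes the object inserted at $a$ and the second summand (written $N$ in the statement, though it should be $M$) is inserted at $b$. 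The point is simply that two insertions at distinct positions $a\neq b$ whose target rows/columns do not overlap commute and coincide with the simultaneous direct-sum insertion; this follows from Remark 3.2 and the block structure, since the two inserted blocks occupy disjoint diagonal slots and their coupling blocks to $Q$ are computed from disjoint sets of entries.

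The main obstacle is the first case, $a\in M$: one must verify that the couplings are correctly apportioned between the two stages, i.e. that the part of the $a$-th column of $(M\hookrightarrow_{(b,\nu_{b})}Q)$ coming from $Q$ (through the block $A_{12}$/$A_{32}$ introduced by the first insertion) is handled by $\nu^{\prime}_{b}$ rather than by $\mu^{\prime}_{a^{\prime}}$, while the part coming from $M$ itself is handled by $\mu^{\prime}_{a^{\prime}}$. This is precisely the bookkeeping that shows the two decomposing maps split cleanly, and it is where one must be most careful with indices; everything else is a direct comparison of block matrices. I expect the whole argument to follow the pattern of the proof of Proposition 3.1, where equivariance under permutations was established by explicitly writing out the block form and matching entries.
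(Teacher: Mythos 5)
Your treatment of the case $a\in M$ is essentially the paper's own proof. The paper normalizes to $a=b=1$, writes $O=(N\hookrightarrow_{(1,\mu_{1})}(M\hookrightarrow_{(1,\nu_{1})}Q))$ as a $3\times 3$ block matrix with diagonal blocks $N$, $M_{1}$, $Q_{1}$, and reads off exactly the two maps you describe: $\mu^{\prime}_{1}$ is the restriction of $\mu_{1}$ to the entries $\{m_{i1}\}_{2\leq i\leq m}$ of $M$, and $\nu^{\prime}_{1}$ sends the first column of $Q$ to the pair of coupling blocks $(B_{31},B_{32})$, i.e.\ it absorbs the couplings of $Q$ to both $N$ and the remainder of $M$ --- precisely your ``apportioning'' step. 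Your general-position version (five diagonal blocks: two pieces of $Q$, two pieces of $M$ (not three, as you wrote), and $N$) is the same computation without the normalization.

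Your argument for the case $a\notin M$, however, rests on a false premise. You assert that the two insertions act on ``disjoint regions'' of $Q$, that the column decomposed by $\mu_{a}$ ``is unaffected by the presence of $M$'', and that the coupling blocks ``are computed from disjoint sets of entries''. This overlooks the shared entry $q_{ab}=q_{ba}$. In $(M\hookrightarrow_{(b,\nu_{b})}Q)$ the map $\nu_{b}$ splits $q_{ab}$ into $m$ couplings of the vertex $a$ to the columns of $M$; those entries then sit in the $a$-th column which $\mu_{a}$ must decompose, and $\mu_{a}$ splits each of them into $n$ parts, producing a generically nonzero $n\times m$ coupling block between the inserted copies of $N$ and $M$. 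So the second insertion does see the first, and the two inserted blocks are not decoupled. The conclusion still holds, because the right-hand side --- $(N\oplus M\hookrightarrow_{(a,\mu_{a}),(b,\nu_{b})}Q)$; note that the statement's $N\oplus N$ is a typo, as you observed --- is itself defined in Definition 3.1 as an iterated insertion, whose second-stage decomposing map likewise splits entries created by the first stage. Matching the two sides therefore requires the same kind of bookkeeping you carry out for $a\in M$ (the doubly decomposed $q_{ab}$ yields an $n\times m$ block with the same total either way), not an appeal to independence; as written, your step would fail, since genuine independence would force the $N$--$M$ coupling block to vanish, which it does not. The paper sidesteps this issue entirely: its proof explicitly treats only the case $a\in M$.
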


\begin{proof}
Here we focus on the situation of $a\in M$.	
If $a\in M$, it is easy to see that
$a$ correspondes to an index $a^{\prime}$ of the rows 
(or columns) of $M$. For simplicity, we assume $a=b=1$. 
Then we have

$$
(M\hookrightarrow_{(1,\nu_{1})}Q)=
\begin{pmatrix}
M & A^{T} \\
A & Q_{1}
\end{pmatrix},
$$ 
and

$$
O=(N\hookrightarrow_{(1,\mu_{1})}
(M\hookrightarrow_{(1,\nu_{1})}Q))=
\begin{pmatrix}
N & B_{21}^{T} & B_{31}^{T} \\
B_{21} & M_{1} & B_{32}^{T} \\
B_{31} & B_{32} & Q_{1}
\end{pmatrix},
$$
where we let $O$ denote
$(N\hookrightarrow_{(1,\mu_{1})}
(M\hookrightarrow_{(1,\nu_{1})}Q))$ for short,
$M_{1}=(m_{ij})_{2\leq i,j\leq m}$,
$Q_{1}=(q_{ij})_{2\leq i,j\leq q}$.
The decomposing map $\nu_{1}$ is given by
$\nu_{1}:(q_{21},\cdots,q_{2q})^{T}\mapsto A
=(a_{ij})_{(q-1)\times m}$.
The decomposing map $\mu_{1}$ 
is given by

$$
\mu_{1}:
\begin{pmatrix}
(m_{21},\cdots,m_{m1})^{T} \\
(a_{21},\cdots,a_{q1})^{T}
\end{pmatrix}
\mapsto
\begin{pmatrix}
B_{21} \\
B_{31}
\end{pmatrix}
$$
If we take $I=[m+n-1]$ regarded as a
subset of $[m+n+q-2]$, then $O_{I}$
should be of the following form.

$$
O_{I}=
\begin{pmatrix}
N & B_{21}^{T} \\
B_{21} & M_{1}
\end{pmatrix}.
$$ 
Thus, $O_{I}=(N\hookrightarrow_{(1,\mu_{1}^{\prime})}M)$,
where $\mu_{1}^{\prime}=\mu_{1}
|_{\{m_{i1}\}_{2\leq i\leq m}}$, precisely, we have

$$
\mu_{1}^{\prime}:
\begin{pmatrix}
m_{21} \\
\vdots \\
m_{m1}
\end{pmatrix}
\mapsto B_{21}.
$$
The decomposing map $\nu_{1}^{\prime}$ should be
of the following form

$$
\nu_{1}^{\prime}:
\begin{pmatrix}
q_{21} \\
\vdots \\
q_{q1}
\end{pmatrix}
\mapsto 
\begin{pmatrix}
B_{31}, B_{32}
\end{pmatrix}.
$$

\end{proof}

Similarly, we can prove a more general conclusion as follows.

\begin{proposition}
Let $M_{i}$, $N_{j}$ and $Q$ be connected adjacency matrices
($i=1,\cdots,m,\,j=1,\cdots,n,\,n\geq m$).
Then we have

\begin{equation}
\begin{array}{c}
((N)_{[n]}\hookrightarrow_
{(a_{[n]},\iota_{a_{[n]}})}((M)_{[m]}
\hookrightarrow_{(q_{[m]},\tau_{q_{[m]}})}Q)) \\
=((N)_{\varLambda^{c}}\oplus((N)_{\varLambda}\hookrightarrow_
{(i_{\varLambda}^{\prime},\iota_{i_{\varLambda}^{\prime}}^{\prime})}
(M)_{\varGamma})\oplus(M)_{\varGamma^{c}})\\
\hookrightarrow_
{(q_{\varLambda^{c}},\iota_{q_{\varLambda^{c}}})\cup
(q_{\varGamma},\tau_{q_{\varGamma}}^{\prime})\cup
(q_{\varGamma^{c}},\tau_{q_{\varGamma^{c}}})}Q),
\end{array}
\end{equation}
where,

$$
\varLambda=\{j\in[n]|\,\exists i\in[m],\,
s.t.\,\,a_{j}\in M_{i}\},\,\,
\varLambda^{c}=[n]\setminus\varLambda,
$$

$$
\varGamma=\{i\in[m]|\,\exists j\in\varLambda,
a_{j}\in M_{i}\},\,\,
\varGamma^{c}=[m]\setminus\varGamma.
$$
\end{proposition}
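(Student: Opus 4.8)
The plan is to prove the identity by induction on $n$, peeling off one connected $N$-component at a time and reducing each step to Lemma 3.1, with Remark 3.1 used to restrict attention to the single connected block that the insertion position meets.

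I would first fix $P=((M)_{[m]}\hookrightarrow_{(q_{[m]},\tau_{q_{[m]}})}Q)$ and record its block structure. Inserting $M_1,\dots,M_m$ into the connected $Q$ one at a time, Proposition 3.2 guarantees that every intermediate matrix, hence $P$, is connected, while Definition 3.1 exhibits the index set of $P$ as the disjoint union of the blocks coming from $M_1,\dots,M_m$ and the block coming from $Q$. Thus each position $a_j$ lies in exactly one such block, which is precisely the dichotomy defining $\varLambda$ (those $j$ with $a_j$ inside some $M_i$, forcing $i\in\varGamma$) and $\varLambda^{c}$ (those $j$ with $a_j$ in the $Q$-block); correspondingly $\varGamma,\varGamma^{c}$ split $[m]$ into the $M_i$ that do and do not receive an $N$-component.

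The single-component reduction is the heart of the argument. Peeling off the innermost insertion, $N_1\hookrightarrow P=N_1\hookrightarrow((M)_{[m]}\hookrightarrow Q)$ has exactly the shape treated in Lemma 3.1, and Remark 3.1 lets me replace $(M)_{[m]}$ by the single connected block containing $a_1$. If $a_1$ lies in an $M_i$-block I apply the $a\in M$ case of Lemma 3.1 (with $M=M_i$ and the connected matrix $(M)_{[m]\setminus\{i\}}\hookrightarrow Q$ in the role of $Q$): this replaces $M_i$ by the connected matrix $N_1\hookrightarrow M_i$ and recombines the $M_i$-to-$Q$ and $N_1$-to-$Q$ links into one decomposing map, as $\nu^{\prime}$ does there. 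If $a_1$ lies in the $Q$-block I apply the other case, which turns $N_1$ into a direct insertion into $Q$ alongside the $M_i$. In either case the number of $N$-components applied on top of an insertion into $Q$ drops by one, so the inductive hypothesis applies; assembling the results, the $N$-components landing in the $M$'s repackage, via Proposition 3.4, into the single direct-sum insertion $((N)_{\varLambda}\hookrightarrow_{(i_{\varLambda}^{\prime},\iota_{i_{\varLambda}^{\prime}}^{\prime})}(M)_{\varGamma})$, those landing in $Q$ become direct insertions, and the $M_i$ with $i\in\varGamma^{c}$ remain direct summands, which is the asserted right-hand side.

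I expect the only genuine difficulty to be bookkeeping. Each reorganisation shifts row and column indices, so the decomposing maps must be re-indexed; in particular, when $N_j$ is pulled back into an $M_i$-block the part of $\iota_{a_j}$ recording $N_j$'s links to $Q$ must be merged with $\tau_{q_i}$ into the single map $\tau_{q_{\varGamma}}^{\prime}$. The one point needing care is that this merging is independent of the order in which the $N$-components are peeled off, but that independence is exactly the block-matrix identity already verified in Lemma 3.1, so the induction only propagates it and no essentially new computation is required.
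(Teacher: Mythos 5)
Your proposal follows essentially the same route as the paper's proof: both split $[n]$ according to whether each position $a_{j}$ lands in an $M_{i}$-block or in the $Q$-block (defining $\varLambda,\varGamma$), both invoke Lemma 3.1 (its $a\in M$ case for $j\in\varLambda$, the other case for $j\in\varLambda^{c}$) to commute each $N_{j}$ past the intermediate insertion, and both iterate over the $N$-components --- the paper's ``repeating above argument'' is exactly your induction on $n$. The only slip is cosmetic: the final repackaging of the $\varLambda$-insertions into $((N)_{\varLambda}\hookrightarrow_{(i_{\varLambda}^{\prime},\iota_{i_{\varLambda}^{\prime}}^{\prime})}(M)_{\varGamma})$ rests on Definition 3.1 and Proposition 3.3 (formula (3.6)), not on Proposition 3.4, which concerns the quotient.
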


\begin{proof}
Observong the insertion

$$
((N)_{[n]}\hookrightarrow_
{(a_{[n]},\iota_{a_{[n]}})}((M)_{[m]}
\hookrightarrow_{(q_{[m]},\tau_{q_{[m]}})}Q)),
$$
by definition 3.1, we know that for each
$N_{j}$ ($1\leq j\leq n$), there are two
possibilities which are $a_{j}\in M_{i}$
for some $i$, or, $a_{j}\notin M_{i}$ for any 
$i$ ($1\leq i\leq m$). Thus, we have a decomposition
$[n]=\varGamma\cup\varGamma^{c}$, where

$$
\varLambda=\{j\in[n]|\,\exists i\in[m],\,
s.t.\,\,a_{j}\in M_{i}\}.
$$
Similarly, for each $M_{i}$, there are two
possibilities of $i$, there is some $j$ such that
$a_{j}\in M_{i}$, or $a_{j}\notin M_{i}$ for
any $j\in[n]$. We can take

$$
\varGamma=\{i\in[m]|\,\exists j\in
[n],a_{j}\in M_{i}\}.
$$

When $j\in\varLambda^{c}$, $N_{j}$ inserts into $Q$,
thus, $i_{j}$ will be assigned to some $q_{j}$,
where $q_{j}$ is an index of the rows (or columns)
of $Q$. By definition 3.1, it is easy to see that,

$$
\begin{array}{c}
((N)_{[n]}\hookrightarrow_
{(a_{[n]},\iota_{a_{[n]}})}((M)_{[m]}
\hookrightarrow_{(q_{[m]},\tau_{q_{[m]}})}Q)) \\
=((N)_{\varLambda}\hookrightarrow_
{(a_{\varLambda},\iota_{a_{\varLambda}})}
(((N)_{\varLambda^{c}}\oplus(M)_{\varGamma}
\oplus(M)_{\varGamma^{c}})\hookrightarrow_
{(q_{\varLambda^{c}},\iota_{q_{\varLambda^{c}}})
\cup(q_{[m]},\tau_{q_{[m]}})}Q)) \\
=((N)_{\varLambda}\hookrightarrow_
{(a_{\varLambda},\iota_{a_{\varLambda}})}
((M)_{\varGamma}\hookrightarrow_
{(q_{\varGamma},\tau_{q_{\varGamma}})}O)),
\end{array}
$$
where

$$
O=(((N)_{\varLambda^{c}}\oplus(M)_{\varGamma^{c}})
\hookrightarrow_{(q_{\varLambda^{c}},\iota_{q_{\varLambda^{c}}})
\cup(q_{\varGamma^{c}},\tau_{q_{\varGamma^{c}}})}Q).
$$

When $j\in\varLambda$, there some 
$i^{\prime}\in[m]$
such that $N_{j}$ inserts into $M_{i^{\prime}}$ at $i_{j}$, 
thus $i_{j}$ will be assigned to
some $i_{j}^{\prime}$, where $i_{j}^{\prime}$ is
an index of the rows (or columns) of $M_{i^{\prime}}$.
In a way which is similar to one in the proof
of lemma 3.1, we can prove that

$$
\begin{array}{c}
(N_{j}\hookrightarrow_{(a_{j},\iota_{a_{j}})}
(M_{i^{\prime}}\hookrightarrow_
{(q_{i^{\prime}},\tau_{q_{i^{\prime}}})}
((M)_{[m]\setminus\{i^{\prime}\}}\hookrightarrow_
{(q_{i},\tau_{q_{i}})_{i\in([m]\setminus\{i^{\prime}\})}}O))) \\
=((N_{j}\hookrightarrow_
{(i_{j}^{\prime},\iota^{\prime}_{i_{j}^{\prime}})}
M_{i^{\prime}})\hookrightarrow_{(q_{i^{\prime}},\tau_{q_{i^{\prime}}}^{\prime})}
((M)_{[m]\setminus\{i^{\prime}\}}\hookrightarrow_
{(q_{i},\tau_{q_{i}})_{i\in([m]\setminus\{i^{\prime}\})}}O))
\end{array}
$$
Repeating above argument, we can prove the formula (3.7).

\end{proof}

\begin{proposition}
Let $M_{i}$, $N_{j}$ and $Q_{k}$
be connected adjacency matrices 
($i=1,\cdots,m,\,j=1,\cdots,n,\,n\geq m,
\,k=1,\cdots,q$). Then we have

\begin{equation}
\begin{array}{c}
((N)_{[n]}\hookrightarrow_
{(a_{[n]},\iota_{a_{[n]}})}
((M)_{[m]}\hookrightarrow_
{(q_{[m]},\tau_{q_{[m]}})}
(Q)_{[q]})) \\
=(Q)_{\varXi_{1}}\oplus((N)_{\varLambda_{1}}\hookrightarrow_
{(q_{\varLambda_{1}},\iota_{q_{\varLambda_{1}}})}
(Q)_{\varXi_{2}})\oplus((M)_{\varGamma_{1}}
\hookrightarrow_{(q_{\varGamma_{1}},
\tau_{q_{\varGamma_{1}}})}
(Q)_{\varXi_{3}})  
\oplus O_{\varLambda,\varGamma,\varXi},
\end{array}
\end{equation}
where

$$
\begin{array}{c}
O_{\varLambda,\varGamma,\varXi} 
=((N)_{\varLambda_{2}}\oplus((N)_{\varLambda_{3}}
\hookrightarrow_{(i_{\varLambda_{3}},\kappa_{i_{\varLambda_{3}}})}
(M)_{\varGamma_{2}})\oplus(M)_{\varGamma_{3}} \\
\hookrightarrow_
{(q_{\varLambda_{2}},\lambda_{q_{\varLambda_{2}}})
\cup(q_{\varGamma_{2}},\gamma_{q_{\varGamma_{2}}})
\cup(q_{\varGamma_{3}},\gamma_
{q_{\varGamma_{3}}})}(Q)_{\varXi_{4}}),
\end{array}
$$
and $\{\varLambda_{1},\varLambda_{2},
\varLambda_{3}\}\in\mathbf{Part}([n])$,
$\{\varGamma_{1},\varGamma_{2},\varGamma_{3}\}\in
\mathbf{Part}([m])$,
$\{\varXi_{1},\varXi_{2},\varXi_{3},\varXi_{4}\}
\in\mathbf{Part}([q])$.
\end{proposition}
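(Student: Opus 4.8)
The plan is to reduce the triple insertion into layers indexed by how the positions interact, exactly as Proposition 3.5 handled a single outer layer $Q$ and Proposition 3.6 handled the nested pair $((M)_{[m]}\hookrightarrow Q)$. The statement (3.8) is the three-level version: we are inserting $(N)_{[n]}$ into $(M)_{[m]}$ inserted into a \emph{disconnected} $(Q)_{[q]}=Q_{1}\oplus\cdots\oplus Q_{q}$. The governing principle, already established in Remark 3.1, is that inserting a connected piece at a position $i$ lying in the $j$-th connected block $Q_{j}$ produces a direct sum over the untouched blocks with the insertion performed only inside $Q_{j}$; so the outermost structure of (3.8) must be a decomposition of $[q]$ into blocks according to which of the $Q_{k}$ receive something.

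First I would set up the partition of $[q]$. Each outer position $q_{i}$ (for $i\in[m]$) lands in some connected block $Q_{k}$, and each position $q_{\lambda}$ arising from an $N$-piece that bypasses all of $M$ (i.e. $\lambda\in\varLambda^{c}$ in the language of Proposition 3.6) also lands in some $Q_{k}$. This sorts the indices $k\in[q]$ into four classes: blocks $\varXi_{1}$ that receive nothing, blocks $\varXi_{2}$ that receive only $N$-pieces directly, blocks $\varXi_{3}$ that receive only $M$-pieces directly, and blocks $\varXi_{4}$ that receive $M$-pieces into which $N$-pieces are further nested. Correspondingly I would partition $[n]$ into $\{\varLambda_{1},\varLambda_{2},\varLambda_{3}\}$ and $[m]$ into $\{\varGamma_{1},\varGamma_{2},\varGamma_{3}\}$: $\varLambda_{1}$ are the $N$-pieces going directly into $\varXi_{2}$-blocks, $\varGamma_{1}$ the $M$-pieces going into $\varXi_{3}$-blocks (receiving no nested $N$), and the remaining $\varLambda_{2},\varLambda_{3},\varGamma_{2},\varGamma_{3}$ feed the coupled term $O_{\varLambda,\varGamma,\varXi}$ over the $\varXi_{4}$-blocks. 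Here $\varLambda_{3}$ (resp.\ $\varGamma_{2}$) are the $N$-pieces nested inside $M$-pieces (resp.\ the $M$-pieces that receive such nesting), $\varLambda_{2}$ the $N$-pieces inserted directly at positions of the relevant $Q$-blocks, and $\varGamma_{3}$ the $M$-pieces with no nested $N$ but sharing a block with coupled pieces.

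Next I would apply the two established propositions as black boxes. By Remark 3.1 the whole insertion splits as a direct sum over the connected blocks of $(Q)_{[q]}$, giving the outer $\oplus$-decomposition into the $\varXi_{1},\varXi_{2},\varXi_{3},\varXi_{4}$ contributions. On the $\varXi_{2}$-part only $N$-pieces are inserted, yielding $((N)_{\varLambda_{1}}\hookrightarrow (Q)_{\varXi_{2}})$; on the $\varXi_{3}$-part only $M$-pieces, yielding $((M)_{\varGamma_{1}}\hookrightarrow (Q)_{\varXi_{3}})$; these are immediate from Definition 3.1. The genuinely coupled $\varXi_{4}$-term is where Proposition 3.6 enters: restricted to a single $\varXi_{4}$-block (or their direct sum) we are in the situation of $N$-pieces inserted into $M$-pieces already inserted into $Q$, and formula (3.7) of Proposition 3.6 rewrites this as the claimed $O_{\varLambda,\varGamma,\varXi}$, with the $\varLambda_{3}$-into-$\varGamma_{2}$ nesting performed first and the resulting matrices, together with $(N)_{\varLambda_{2}}$ and $(M)_{\varGamma_{3}}$, inserted into $(Q)_{\varXi_{4}}$. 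Proposition 3.1 (invariance under the symmetric group actions) guarantees that all these rewritings are well defined on equivalence classes, and Proposition 3.2 that every connected-into-connected insertion stays connected.

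\textbf{The main obstacle} I expect is purely bookkeeping: verifying that the four $\varXi$-classes, three $\varLambda$-classes, and three $\varGamma$-classes really do form the asserted partitions with no overlaps or omissions, and that the decomposing maps $\lambda,\gamma,\kappa$ assemble correctly from the original $\iota,\tau$ when positions are re-sorted across blocks. The conceptual content is entirely contained in Proposition 3.6 and Remark 3.1; the difficulty is tracking which position index belongs to which class and checking that reordering the insertions (justified by Proposition 3.1) does not alter the equivalence class. I would organize this as an induction on $n$, peeling off one $N$-piece at a time exactly as in the proof of Proposition 3.6, so that at each step I invoke Lemma 3.1 to decide whether the new piece couples to an $M$-piece ($a\in M$) or bypasses it ($a\notin M$), and correspondingly assign its index to the appropriate $\varLambda$-class. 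Once the index assignment is fixed, formula (3.8) follows by collecting the direct-sum terms block by block.
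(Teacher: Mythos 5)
Your proposal is correct and follows essentially the same route as the paper's proof: both classify the connected blocks $Q_{k}$ according to what they receive (nothing, only direct $N$-pieces, only $M$-pieces, or coupled $M$-and-$N$ insertions), split the total insertion across these blocks using the direct-sum compatibility of the insertion (Definition 3.1, Remark 3.1, formula (3.6)), and reduce the coupled part over $\varXi_{4}$ to the nested-pair formula (3.7). The paper arrives at the same four-fold partition of $[q]$ by successive binary refinements ($\varXi/\varXi^{c}$ according to $M$-insertions, then $\varLambda/\varLambda^{c}$, then $\varXi'/\varXi''$, etc.) rather than your direct classification organized as an induction on $n$, and your citation of the nested-pair formula (3.7) should point to Proposition 3.5 rather than 3.6 --- but these are organizational details, not differences of substance.
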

	
\begin{proof}
The proof of the formula (3.8) concerns the 
decomposition of $((N)_{[n]}\hookrightarrow_
{(a_{[n]},\iota_{a_{[n]}})}
((M)_{[m]}\hookrightarrow_
{(q_{[m]},\tau_{q_{[m]}})}
(Q)_{[q]}))$ according to the way
$N_{[n]}$ and $M_{[m]}$
insert into $Q_{\underline{q}}$, 
thus, concerns the decomposition
of $[n]$, $[m]$ and $[q]$. Firstly, 
recalling the formula (3.4),
we know that there is a obvious
decomposition of $[q]$,
$[q]=\varXi\cup\varXi^{c}$, where
$\varXi^{c}=[q]\setminus\varXi$, and 

$$
\varXi=\{k\in[q]|\,\exists 
q_{i}\,\,s.t.\,\,q_{i}\in Q_{k}\}.
$$
Thus we have

$$
\begin{array}{c}
((M)_{[m]}\hookrightarrow_
{(q_{[m]},\tau_{q_{[m]}})}
(Q)_{[q]}) \\
=((M)_{[m]}\hookrightarrow_
{(q_{[m]},\tau_{q_{[m]}})}
((Q)_{\varXi}\oplus(Q)_{\varXi^{c}})) \\
=((M)_{[m]}\hookrightarrow_
{(q_{[m]},\tau_{q_{[m]}})}
(Q)_{\varXi})\oplus(Q)_{\varXi^{c}}.
\end{array}
$$

Simiarly, the decomposition 
$[q]=\varXi\cup\varXi^{c}$ will
induces a decomposition of $[n]$,
$[n]=\varLambda\cup\varLambda^{c}$,
such that

$$
\begin{array}{c}
((N)_{[n]}\hookrightarrow_
{(a_{[n]},\iota_{a_{[n]}})}
((M)_{[m]}\hookrightarrow_
{(q_{[m]},\tau_{q_{[m]}})}
(Q)_{[q]})) \\
=((N)_{[n]}\hookrightarrow_
{(a_{[n]},\iota_{a_{[n]}})}
(((M)_{[m]}\hookrightarrow_
{(q_{[m]},\tau_{q_{[m]}})}
(Q)_{\varXi})\oplus(Q)_{\varXi^{c}})) \\
=((N)_{\varLambda}\hookrightarrow_
{(a_{\varLambda},\iota_{a_{\varLambda}})}
((M)_{[m]}\hookrightarrow_
{(q_{[m]},\tau_{q_{[m]}})}
(Q)_{\varXi}))
\oplus
((N)_{\varLambda^{c}}\hookrightarrow_
{(a_{\varLambda^{c}},\iota_{a_{\varLambda^{c}}})}
(Q)_{\varXi^{c}}).
\end{array}
$$
With the help of the formula (3.6) once more,
we have

$$
\begin{array}{c}
((N)_{\varLambda}\hookrightarrow_
{(a_{\varLambda},\iota_{a_{\varLambda}})}
((M)_{[m]}\hookrightarrow_
{(q_{[m]},\tau_{q_{[m]}})}
(Q)_{\varXi})) \\
=((N)_{\varLambda}\hookrightarrow_
{(a_{\varLambda},\iota_{a_{\varLambda}})}
(((M)_{\varGamma}\hookrightarrow_
{(q_{\varGamma},\tau_{q_{\varGamma}})}
(Q)_{\varXi^{\prime}})
\oplus((M)_{\varGamma^{c}}\hookrightarrow_
{(q_{\varGamma^{c}},\tau_{q_{\varGamma^{c}}})}
(Q)_{\varXi^{\prime\prime}}))) \\
=((N)_{\varLambda}\hookrightarrow_
{(a_{\varLambda},\iota_{a_{\varLambda}})}
((M)_{\varGamma}\hookrightarrow_
{(q_{\varGamma},\tau_{q_{\varGamma}})}
(Q)_{\varXi^{\prime}}))
\oplus((M)_{\varGamma^{c}}\hookrightarrow_
{(q_{\varGamma^{c}},\tau_{q_{\varGamma^{c}}})}
(Q)_{\varXi^{\prime\prime}}),
\end{array}
$$
In fact, by definition of $\varXi$, we know that
$\varXi$ induces a decomposition of
$[m]$, $\{I_{k}\}_{k\in\varXi}\in
\mathbf{Part}([m])$, where

$$
I_{k}=\{i\in[m]|\,q_{i}\in Q_{k}\},\,
k\in\varXi.
$$
Then, $\varXi^{\prime\prime}$ is able to be taken as

$$
\varXi^{\prime\prime}=\{k\in\varXi|\,a_{j}\notin
((M)_{I_{k}}\hookrightarrow_
{(q_{I_{k}},\tau_{q_{I_{k}}})}Q_{k}),
\forall j\in\varLambda\},
$$
$\varXi^{\prime}=\varXi\setminus\varXi^{\prime\prime}$.
Moreover, we have $\varGamma=
\bigcup_{k\in\varXi^{\prime}}I_{k}$,
$\varGamma^{c}=[m]\setminus\varGamma$.

Similarly, we have

$$
((N)_{\varLambda^{c}}\hookrightarrow_
{(a_{\varLambda^{c}},\iota_{a_{\varLambda^{c}}})}
(Q)_{\varXi^{c}})
=((N)_{\varLambda^{c}}\hookrightarrow_
{(a_{\varLambda^{c}},\iota_{a_{\varLambda^{c}}})}
(Q)_{\varXi_{c,N\hookrightarrow Q}})\oplus
(Q)_{\varXi_{c,Q}}.
$$

We now pay attention to the term
$((N)_{\varLambda}\hookrightarrow_
{(a_{\varLambda},\iota_{a_{\varLambda}})}
((M)_{\varGamma}\hookrightarrow_
{(q_{\varGamma},\tau_{q_{\varGamma}})}
(Q)_{\varXi^{\prime}}))$.
The decomposition 

$$
((M)_{\varGamma}\hookrightarrow_
{(q_{\varGamma},\tau_{q_{\varGamma}})}
(Q)_{\varXi^{\prime}})
=\bigoplus\limits_{k\in\varXi^{\prime}}
((M)_{I_{k}}\hookrightarrow_
{(q_{I_{k}},\tau_{q_{I_{k}}})}Q_{k})
$$
induces a decomposition of $\varLambda$,
which is $\{J_{k}\}_{k\in\varXi^{\prime}}$,
where

$$
J_{k}=\{j\in\varLambda|\,a_{j}\in
((M)_{I_{k}}\hookrightarrow_
{(q_{I_{k}},\tau_{q_{I_{k}}})}Q_{k})\}.
$$
By definiton of $\varXi^{\prime}$, it is
easy to see that $I_{k}\not=\emptyset$,
and $J_{k}\not=\emptyset$ ($k\in\varXi^{\prime}$),
and

$$
\begin{array}{c}
((N)_{\varLambda}\hookrightarrow_
{(a_{\varLambda},\iota_{a_{\varLambda}})}
((M)_{\varGamma}\hookrightarrow_
{(q_{\varGamma},\tau_{q_{\varGamma}})}
(Q)_{\varXi^{\prime}})) \\
=\bigoplus\limits_{k\in\varXi^{\prime}}
((N)_{J_{k}}\hookrightarrow_{(a_{J_{k}},\iota_{a_{J_{k}}})}
((M)_{I_{k}}\hookrightarrow_
{(q_{I_{k}},\tau_{q_{I_{k}}})}Q_{k})).
\end{array}
$$

Noting the formula (3.7), we have

$$
\begin{array}{c}
((N)_{J_{k}}\hookrightarrow_{(a_{J_{k}},\iota_{a_{J_{k}}})}
((M)_{I_{k}}\hookrightarrow_
{(q_{I_{k}},\tau_{q_{I_{k}}})}Q_{k})) \\
=((N)_{J_{k}^{\prime\prime}}\oplus((N)_{J_{k}^{\prime}}
\hookrightarrow_{(i_{J_{k}^{\prime}},
\kappa_{i_{J_{k}^{\prime}}})}(M)_{I_{k}^{\prime}})
\oplus(M)_{I_{k}^{\prime\prime}}
\hookrightarrow_{(q_{J_{k}^{\prime\prime}},
\iota_{q_{J_{k}^{\prime\prime}}})\cup
(q_{I_{k}},\tau^{\prime}_{q_{I_{k}}})}Q_{k}),
\end{array}
$$
where $k\in\varXi^{\prime}$, $J_{k}=J_{k}^{\prime}\cup
J_{k}^{\prime\prime}$, $J_{k}^{\prime}\cap
J_{k}^{\prime\prime}\not=\emptyset$,
$I_{k}=I_{k}^{\prime}\cup
I_{k}^{\prime\prime}$, $I_{k}^{\prime}\cap
I_{k}^{\prime\prime}\not=\emptyset$.
If we take $\varLambda^{\prime}=
\bigcup_{k\in\varXi^{\prime}}J_{k}^{\prime}$,
$\varLambda^{\prime\prime}=
\bigcup_{k\in\varXi^{\prime}}J_{k}^{\prime\prime}$,
$\varGamma^{\prime}=
\bigcup_{k\in\varXi^{\prime}}I_{k}^{\prime}$,
$\varGamma^{\prime\prime}=
\bigcup_{k\in\varXi^{\prime}}I_{k}^{\prime\prime}$
then we have

$$
\begin{array}{c}
((N)_{\varLambda}\hookrightarrow_
{(a_{\varLambda},\iota_{a_{\varLambda}})}
((M)_{\varGamma}\hookrightarrow_
{(q_{\varGamma},\tau_{q_{\varGamma}})}
(Q)_{\varXi^{\prime}})) \\
=(((N)_{\varLambda^{\prime\prime}}\oplus
((N)_{\varLambda^{\prime}}\hookrightarrow_
{(i_{\varLambda^{\prime}},\kappa_{i_{\varLambda^{\prime}}})}
(M)_{\varGamma^{\prime}})\oplus(M)_{\varGamma^{\prime\prime}})
\hookrightarrow_{(q_{\varLambda^{\prime\prime}},
\iota_{q_{\varLambda^{\prime\prime}}})
\cup(q_{\varGamma},\tau^{\prime}_{q_{\varGamma}})}
(Q)_{\varXi^{\prime}}).
\end{array}
$$	
Summarizing the previous discussions, we can
reach the formula (3.8).	
	
\end{proof}

\begin{remark}
In the formula (3.8), we can take
$\varLambda=\varLambda_{3}$, 
$\varLambda^{c}=\varLambda_{1}\cup\varLambda_{2}$,
$\varGamma=\varGamma_{2}$,
$\varGamma^{c}=\varGamma_{1}\cup\varGamma_{3}$,
$\varXi=\varXi_{2}\cup\varXi_{3}\cup\varXi_{4}$,
$\varXi^{c}=\varXi_{1}$, then we have

$$
\begin{array}{c}
((N)_{[n]}\hookrightarrow_
{(a_{[n]},\iota_{a_{[n]}})}
((M)_{[m]}\hookrightarrow_
{(q_{[m]},\tau_{q_{[m]}})}
(Q)_{[q]})) \\
=((N)_{\varLambda^{c}}\oplus((N)_{\varLambda}
\hookrightarrow_{(i_{\varLambda},\kappa_{i_{\varLambda}})}
(M)_{\varGamma})\oplus(M)_{\varGamma^{c}} \\
\hookrightarrow_
{(q_{\varLambda^{c}},\lambda_{q_{\varLambda^{c}}})
\cup(q_{\varGamma},\gamma_{q_{\varGamma}})
\cup(q_{\varGamma^{c}},\gamma_
{q_{\varGamma^{c}}})}(Q)_{\varXi})
\oplus(Q)_{\varXi^{c}}.
\end{array}
$$
\end{remark}

\section{The algebraic structure of $\mathcal{H}_{adj}^{\ast}$}

\subsection{Basic notations and the primitive elements}

Let 

$$
\mathcal{H}_{adj,n}=\mathbf{Span}_{\mathbb{C}}
\{\{M\}\in M_{adj}(+\infty,\mathbb{N})|
\mathbf{deg}\{M\}=n\},\,n\geq 0,
$$
where $\mathcal{H}_{adj,0}=\mathbb{C}\{0\}
\cong\mathbb{C}$.
Then each $\mathcal{H}_{adj,n}$ is finite
dimensional, and we have

$$
\mathcal{H}_{adj}=\bigoplus_{n=0}^{+\infty}
\mathcal{H}_{adj,n}.
$$
For $\{M_{i}\}\in\mathcal{H}_{adj,n_{i}}$
($i=1,2$), we have 

$$
\{M_{1}\}\oplus\{M_{2}\}\in\mathcal{H}_{adj,n_{1}+n_{2}}.
$$
On the other hand, it is easy to
check that about coproduct we have

$$
\bigtriangleup:\mathcal{H}_{adj,n}
\longrightarrow\bigoplus\limits_{p+q=n}
\mathcal{H}_{adj,p}\otimes\mathcal{H}_{adj,q}.
$$
Therefore, $\mathcal{H}_{adj}$ is a connected graded
Hopf algebra (see ?).

In this section we will discuss the dual
Hopf algebra in the following sense

\begin{equation}
\mathcal{H}_{adj}^{\ast}=\bigoplus_{n=0}^{+\infty}
\mathcal{H}_{adj,n}^{\ast}.
\end{equation}

It is well known that, by definition, the 
coproduct on $\mathcal{H}_{adj}^{\ast}$
is dual to the product on $\mathcal{H}_{adj}$,
i.e. for $f\in\mathcal{H}_{adj}^{\ast}$ we have

$$
<\bigtriangleup f,\{M_{1}\}\otimes\{M_{2}\}>
=<f,\{M_{1}\}\oplus\{M_{2}\}>,
$$
where $\{M_{1}\},\{M_{2}\}\in M_{adj}(+\infty,\mathbb{N})$.
Similarly, the product on $\mathcal{H}_{adj}^{\ast}$
is dual to the coproduct on $\mathcal{H}_{adj}$.
Thus, for $f,g\in\mathcal{H}_{adj}^{\ast}$
and $\{M\}\in M_{adj}(+\infty,\mathbb{N})$
we have

$$
<f\bullet g,\{M\}>=<f\otimes g,\bigtriangleup\{M\}>,
$$
where $\bullet$ denotes the product on 
$\mathcal{H}_{adj}^{\ast}$. Because the coproduct
on $\mathcal{H}_{adj}$ is not co-commutative, thus
the multiplication  $\bullet$ is not commutative.

Let  

$$
\{f_{\{M\}}|\{M\}\in M_{adj}(+\infty,\mathbb{N}),\{M\}\not=0\}
$$
denote the set of dual bases of
$\mathcal{H}_{adj}^{\ast}$, which means
each $f_{\{M\}}$ ($\{M\}\not=0$) satisfies

$$
<f_{\{M\}},\{N\}>=\left\{
\begin{array}{cc}
1, & \{N\}=\{M\}, \\
0, & others.
\end{array}
\right.
$$
About dual bases mentioned above we have,

\begin{proposition}
Let $\{M\}\in M_{adj}(+\infty,\mathbb{N})$,
($\{M\}\not=0$), $\{M\}=\bigoplus_{i=1}^{k}\{M_{i}\}$,
each $\{M_{i}\}$ be connected ($i=1,\cdots,k$).
Then we have

\begin{equation}
\bigtriangleup f_{\{M\}}=f_{\{M\}}\otimes\eta+\eta\otimes f_{\{M\}}
+\sum\limits_{I\subset[k],I\not=I,\emptyset}
f_{\bigoplus_{i\in I}\{M_{i}\}}\otimes 
f_{\bigoplus_{i\in I^{c}}\{M_{i}\}},
\end{equation}
where $\eta$ is the co-unit on $\mathcal{H}_{adj}$,
$I^{c}=[k]\setminus I$.
\end{proposition}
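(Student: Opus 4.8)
The plan is to read $\bigtriangleup f_{\{M\}}$ straight off the defining adjunction, since the coproduct on $\mathcal{H}_{adj}^{\ast}$ is by construction dual to the product $\oplus$ on $\mathcal{H}_{adj}$. For arbitrary classes $\{P\},\{Q\}\in M_{adj}(+\infty,\mathbb{N})$ this gives
$$
\langle\bigtriangleup f_{\{M\}},\{P\}\otimes\{Q\}\rangle
=\langle f_{\{M\}},\{P\}\oplus\{Q\}\rangle
=\begin{cases}1,&\{P\}\oplus\{Q\}=\{M\},\\0,&\text{otherwise.}\end{cases}
$$
Because each graded piece $\mathcal{H}_{adj,n}$ is finite dimensional and $\{f_{\{N\}}\}$ is the dual basis, these pairings are exactly the coordinates of $\bigtriangleup f_{\{M\}}$ in the basis $\{f_{\{P\}}\otimes f_{\{Q\}}\}$. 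Hence the first step produces the compact formula $\bigtriangleup f_{\{M\}}=\sum_{\{P\}\oplus\{Q\}=\{M\}}f_{\{P\}}\otimes f_{\{Q\}}$, the sum ranging over ordered pairs of classes (with the zero class permitted) whose direct sum is $\{M\}$. Everything then reduces to enumerating those pairs.

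Next I would invoke the unique factorization in the monoid $(M_{adj}(+\infty,\mathbb{N}),\oplus)$: by Proposition 2.1 together with Corollary 2.3 every class decomposes, uniquely up to order, into connected classes, so the monoid is free commutative on the connected classes. Applied to $\{M\}=\bigoplus_{i=1}^{k}\{M_{i}\}$ this shows that $\{P\}\oplus\{Q\}=\{M\}$ holds if and only if the connected components of $\{P\}$ and of $\{Q\}$ together exhaust $\{M_{1}\},\dots,\{M_{k}\}$, i.e. there is a subset $I\subseteq[k]$ with $\{P\}=\bigoplus_{i\in I}\{M_{i}\}$ and $\{Q\}=\bigoplus_{i\in I^{c}}\{M_{i}\}$. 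The extreme choices $I=[k]$ and $I=\emptyset$ contribute $f_{\{M\}}\otimes f_{\{0\}}$ and $f_{\{0\}}\otimes f_{\{M\}}$; identifying $f_{\{0\}}$ with the counit/unit $\eta$ (valid since $\langle\eta,\{N\}\rangle=\delta_{\{N\},\{0\}}$ by the definition of $\eta$) yields the two boundary terms of (4.3), while the proper nonempty $I$ supply the middle sum.

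The step I expect to be the main obstacle is the bookkeeping of repeated connected components, that is, turning the passage from ``pairs $(\{P\},\{Q\})$'' to ``subsets $I$'' into an honest identity of coefficients. The assignment $I\mapsto(\bigoplus_{i\in I}\{M_{i}\},\bigoplus_{i\in I^{c}}\{M_{i}\})$ is a bijection onto the admissible pairs precisely when the $\{M_{i}\}$ are pairwise distinct; if a connected class occurs with multiplicity, several subsets $I$ collapse to the same pair, so the coefficient attached to a given $f_{\{P\}}\otimes f_{\{Q\}}$ is the number of contributing $I$. I would therefore settle the multiplicity-free case first, where the bijection is immediate and (4.3) follows verbatim, and then handle the general case by grouping subsets that realize the same unordered factorization and recording the associated multiplicities. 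This is the only point where the argument is more than a transcription of the duality identity, and it is exactly where the meaning of the summation index in (4.3) has to be fixed.
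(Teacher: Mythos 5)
Your first two steps are, in substance, exactly the paper's proof: the paper likewise pairs $\bigtriangleup f_{\{M\}}$ against elements $\{N_{1}\}\otimes\{N_{2}\}$, observes that the pairing is nonzero precisely when $\{N_{1}\}\oplus\{N_{2}\}=\{M\}$, translates that condition via the decomposition into connected classes into a choice of subset $I\subset[k]$, writes $\bigtriangleup f_{\{M\}}=\sum_{I\subset[k]}g_{I}\otimes h_{I^{c}}$, and solves for the factors, with $f_{\{0\}}=\eta$ at the extremes. (Like you, the paper uses uniqueness of the connected factorization without proving it; your appeal to freeness of the monoid is no worse than what the paper does implicitly.)

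Where you go beyond the paper is the multiplicity caveat, and you are right to insist on it: the paper's proof silently assumes that distinct subsets $I$ give distinct pairs $\bigl(\bigoplus_{i\in I}\{M_{i}\},\bigoplus_{i\in I^{c}}\{M_{i}\}\bigr)$, which fails as soon as a connected class repeats, and then formula (4.3) as literally written is false, not merely ambiguous. Take $\{M\}=\{N\}\oplus\{N\}$ with $\{N\}$ connected: your Step 1 formula, which is the correct one since $\{f_{\{P\}}\otimes f_{\{Q\}}\}$ is a dual basis of each finite-dimensional bigraded piece, gives $\bigtriangleup f_{\{M\}}=f_{\{M\}}\otimes\eta+\eta\otimes f_{\{M\}}+f_{\{N\}}\otimes f_{\{N\}}$, whereas the subset sum in (4.3) counts $I=\{1\}$ and $I=\{2\}$ separately and produces the coefficient $2$ on $f_{\{N\}}\otimes f_{\{N\}}$. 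So the ``grouping'' you defer to the end cannot rescue (4.3) verbatim; the honest conclusion is that the proposition holds as stated only when the $\{M_{i}\}$ are pairwise distinct, and in general the middle sum must be read as running over the distinct ordered pairs, each taken once. One internal slip to fix in your write-up: the sentence ``the coefficient attached to a given $f_{\{P\}}\otimes f_{\{Q\}}$ is the number of contributing $I$'' is a true statement about the right-hand side of (4.3), not about $\bigtriangleup f_{\{M\}}$; by your own Step 1 the latter always has coefficient $1$. Make that distinction explicit, state the corrected general form, and your argument is complete --- and sharper than the paper's, which never notices the problem.
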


\begin{proof}
Recalling the definition of $f_{\{M\}}$,

$$
<f_{\{M\}},\{N\}>=\left\{
\begin{array}{cc}
1, & \{N\}=\{M\}, \\
0, & others,
\end{array}
\right.
$$
we know that when $\{N_{1}\}\oplus\{N_{2}\}=\{M\}$,

$$
<\bigtriangleup f_{\{M\}},\{N_{1}\}\otimes\{N_{2}\}>
=<f_{\{M\}},\{N_{1}\}\oplus\{N_{2}\}>=
<f_{\{M\}},\{M\}>\not=0,
$$
otherwise,

$$
<\bigtriangleup f_{\{M\}},\{N_{1}\}\otimes\{N_{2}\}>=0.
$$
The condition $\{N_{1}\}\oplus\{N_{2}\}=\{M\}$ means
that $\{N_{1}\}=\bigoplus_{i\in I}\{M_{i}\}$,
$\{N_{2}\}=\bigoplus_{i\in I^{c}}\{M_{i}\}$ for
some subset $I\subset[k]$. Therefore,
it is natural that $\bigtriangleup f_{\{M\}}$
should be of the form

$$
\bigtriangleup f_{\{M\}}=\sum\limits_
{I\subset[k]}g_{I}\otimes h_{I^{c}},
$$
where $g_{I},h_{I^{c}}\in\mathcal{H}_{adj}^{\ast}$
satisfying

$$
\begin{array}{c}
<g_{I}\otimes h_{I^{c}},\{N_{1}\}\otimes\{N_{2}\}> 
=<g_{I},\{N_{1}\}><h_{I^{c}},\{N_{2}\}> \\
=\left\{
\begin{array}{cc}
1, & \{N_{1}\}=\bigoplus_{i\in I}\{M_{i}\},
\{N_{2}\}=\bigoplus_{i\in I^{c}}\{M_{i}\}, \\
0 & others.
\end{array}
\right.
\end{array}
$$
Thus, $g_{I}$ and $h_{I^{c}}$ will be
$f_{\bigoplus_{i\in I}\{M_{i}\}}$ and
$f_{\bigoplus_{i\in I^{c}}\{M_{i}\}}$
respectively. Particularly, when $I=\emptyset$,
$g_{I}=\eta$, when $I^{c}=\emptyset$,
$h_{I^{c}}=\eta$.

\end{proof}

\begin{corollary}
Let $\{M\}\in M_{adj}(+\infty,\mathbb{N})$,
then $\{M\}$ is connected if and only if

$$
\bigtriangleup f_{\{M\}}=
f_{\{M\}}\otimes\eta+\eta\otimes f_{\{M\}}.
$$
\end{corollary}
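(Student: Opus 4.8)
The plan is to read this off directly from Proposition 4.1, so the work is entirely bookkeeping about the index set appearing in the sum. Writing $\{M\}=\bigoplus_{i=1}^{k}\{M_{i}\}$ for the decomposition into connected components (which exists and is essentially unique by Proposition 2.1 and Corollary 2.2), Proposition 4.1 gives
$$
\bigtriangleup f_{\{M\}}=f_{\{M\}}\otimes\eta+\eta\otimes f_{\{M\}}
+\sum\limits_{I\subset[k],\,I\not=[k],\emptyset}
f_{\bigoplus_{i\in I}\{M_{i}\}}\otimes f_{\bigoplus_{i\in I^{c}}\{M_{i}\}}.
$$
Hence the displayed identity in the corollary holds precisely when the residual sum over proper nonempty subsets $I\subset[k]$ vanishes, and the whole statement reduces to deciding when that happens, with $\eta$ playing the role of $f_{\{0\}}$ in the two boundary terms ($I=\emptyset$ and $I=[k]$).

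For the forward direction I would observe that, by definition, $\{M\}$ connected means its connected decomposition consists of a single block, i.e. $k=1$. Then there is no proper nonempty subset of $[1]$, the residual sum is empty, and the identity follows immediately.

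For the converse I would argue by contraposition. If $\{M\}$ is disconnected then $k\geq 2$, so $I=\{1\}$ is a proper nonempty subset of $[k]$ and contributes the term $f_{\{M_{1}\}}\otimes f_{\bigoplus_{i=2}^{k}\{M_{i}\}}$. The step I would be most careful about is showing that this term cannot be absorbed into or cancelled against the two boundary terms. This rests on the linear independence of the dual basis: distinct classes $\{N\}\in M_{adj}(+\infty,\mathbb{N})$ yield linearly independent functionals $f_{\{N\}}$, so the elementary tensors $f_{\{N_{1}\}}\otimes f_{\{N_{2}\}}$ are linearly independent in $\mathcal{H}_{adj}^{\ast}\otimes\mathcal{H}_{adj}^{\ast}$. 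Since $\{M_{1}\}$ is connected and nonzero we have $f_{\{M_{1}\}}\neq\eta$ and $\bigoplus_{i=2}^{k}\{M_{i}\}\neq 0$, so this extra tensor differs from both $f_{\{M\}}\otimes\eta$ and $\eta\otimes f_{\{M\}}$; as all coefficients in Proposition 4.1 equal $1$, no cancellation can occur. Therefore $\bigtriangleup f_{\{M\}}$ carries at least one further nonzero summand and the identity fails.

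The only genuinely substantive ingredient is thus the linear independence of these elementary tensors, together with the trivial observation that a proper nonempty component is neither $\{M\}$ nor $0$; everything else is immediate from Proposition 4.1. In particular no new computation with matrices, quotients, or insertions is required, since all of that machinery has already been packaged into the statement of Proposition 4.1.
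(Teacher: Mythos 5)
Your proposal is correct and follows exactly the route the paper intends: the corollary is stated as an immediate consequence of Proposition 4.1, with connectedness corresponding to $k=1$ (empty residual sum) and disconnectedness to $k\geq 2$ (at least one extra term). Your added justification via linear independence of the tensors $f_{\{N_{1}\}}\otimes f_{\{N_{2}\}}$ and positivity of the coefficients is precisely the detail the paper leaves implicit, and it is sound.
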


Let $f\in\mathcal{H}_{adj}^{\ast}$, it is well
known that, by the definition, if $f$ satisfies

$$
\bigtriangleup f=f\otimes\eta+\eta\otimes f,
$$
then it is called a primitive element in
$\mathcal{H}_{adj}^{\ast}$. Let
$\mathbf{P}(\mathcal{H}_{adj}^{\ast})$ denote
the set of all primitive elements of 
$\mathcal{H}_{adj}^{\ast}$. Then, with the help
of corollary 4.1, we have

\begin{equation}
\mathbf{P}(\mathcal{H}_{adj}^{\ast})=
\mathbf{Span}_{\mathbb{C}}(\{f_{\{M\}}
|\{M\}\,\,is\,\,connected\}).
\end{equation}

\subsection{The product on $\mathcal{H}_{adj}^{\ast}$}

About the product on $\mathcal{H}_{adj}^{\ast}$
we have the following formula.

\begin{proposition}
Let $M\in M_{adj}(m,\mathbb{N})$, 
$N\in M_{adj}(n,\mathbb{N})$ be two connected adjacency 
matrices. Then, we have

\begin{equation}
f_{\{N\}}\bullet f_{\{M\}}=\sum\limits_{i,\iota_{i}}
f_{\{(N\hookrightarrow_{i,\iota_{i}}M)\}}+
f_{\{M\}\oplus\{N\}}.
\end{equation}
\end{proposition}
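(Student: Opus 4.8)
The plan is to exploit the duality between $\bullet$ and $\bigtriangleup$ directly on the dual basis. Since $\langle f\bullet g,\{Q\}\rangle=\langle f\otimes g,\bigtriangleup\{Q\}\rangle$ and the $f_{\{Q\}}$ form a dual basis, the coefficient of $f_{\{Q\}}$ in $f_{\{N\}}\bullet f_{\{M\}}$ is precisely $\langle f_{\{N\}}\otimes f_{\{M\}},\bigtriangleup\{Q\}\rangle$, i.e. the multiplicity with which the pure tensor $\{N\}\otimes\{M\}$ occurs in $\bigtriangleup\{Q\}$. Thus (4.5) is equivalent to the combinatorial assertion that, for every class $\{Q\}$, this multiplicity equals the number of insertion data $(i,\iota_i)$ with $\{(N\hookrightarrow_{(i,\iota_i)}M)\}=\{Q\}$, augmented by $1$ when $\{Q\}=\{M\}\oplus\{N\}$. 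I would verify this class by class, splitting into the cases $Q$ connected and $Q$ disconnected.

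For $Q$ connected I would read off the multiplicity from the defining coproduct formula (Definition 2.4): since $M,N\neq 0$ the two boundary terms contribute nothing, so the multiplicity of $\{N\}\otimes\{M\}$ equals the number of subsets $I\subset[q]$ with $I\neq[q]$, taken in a fixed representative $Q$, satisfying $\{Q_I\}=\{N\}$ and $\{Q\diagup Q_I\}=\{M\}$. Because $N$ is connected, any such $I$ has $Q_I$ connected with $|I|=n$, and by Proposition 3.4 the representative $Q$ then has $m+n-1$ vertices. At this point Proposition 3.4 is the bridge: it states exactly that $Q_I\sim N$ together with $Q\diagup Q_I\sim M$ is equivalent to the existence of a position and decomposing map realizing $Q$ as $(N\hookrightarrow_{(i,\iota_i)}M)$. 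Hence the connected classes $\{Q\}$ that occur are precisely the insertion classes $\{(N\hookrightarrow_{(i,\iota_i)}M)\}$, which produces the first sum on the right of (4.5).

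For $Q$ disconnected I would instead use the multiplicative definition of $\bigtriangleup$ on $\oplus$-decomposable elements. Writing $Q\sim Q_1\oplus\cdots\oplus Q_r$ into its connected components and expanding $\bigtriangleup\{Q\}=\bigtriangleup\{Q_1\}\oplus\cdots\oplus\bigtriangleup\{Q_r\}$ with the tensor-direct-sum rule, a term $\{N\}\otimes\{M\}$ with both tensor factors connected can arise only by selecting, in each $\bigtriangleup\{Q_s\}$, one of the two boundary pieces $\{Q_s\}\otimes\{0\}$ or $\{0\}\otimes\{Q_s\}$; the unique factorization of $M_{adj}(+\infty,\mathbb{N})$ into connected classes (Proposition 2.2 and Corollary 2.2) then forces the chosen components forming the left factor to assemble to $\{N\}$ and the rest to $\{M\}$, i.e. $\{Q\}=\{M\}\oplus\{N\}$, with the cross term $(\{0\}\oplus\{N\})\otimes(\{M\}\oplus\{0\})$ supplying the summand $f_{\{M\}\oplus\{N\}}$. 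Since every insertion of a connected matrix into a connected matrix is again connected (Proposition 3.2), no insertion class coincides with this disconnected $\{Q\}$, so the two pieces of (4.5) do not interfere.

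The step I expect to be the real obstacle is the exact bookkeeping of multiplicities, in both cases: matching the count of subclasses $I$ of a representative $Q$ against the count of insertion data $(i,\iota_i)$ producing $Q$ (and, for the disconnected term, the number of ways of grouping the components of $Q$ into parts $\sim N$ and $\sim M$). These counts are taken relative to different representatives — $Q$ on one side, $M$ and $N$ on the other — and they agree only after one accounts for the permutation-orbit equivalence of insertions in Proposition 3.1 and the $\mathbf{S}_m$-equivariance of the quotient in Proposition 2.3 together with the orbit description in Proposition 3.3. In effect one must control the automorphism (symmetry) factors of $M$, $N$ and $Q$, and verifying that the bare sum $\sum_{i,\iota_i}$ reproduces the coproduct multiplicities — rather than over- or under-counting by such symmetry factors — is the crux of the argument. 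I would therefore set up the correspondence ``subset $I$'' $\leftrightarrow$ ``insertion datum $(i,\iota_i)$'' explicitly through Proposition 3.4, check that it is equivariant for the relevant symmetric-group actions, and confirm that it preserves multiplicities; this is the only place where care beyond routine verification is needed.
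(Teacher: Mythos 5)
Your proposal follows the same route as the paper's own proof: evaluate $\langle f_{\{N\}}\bullet f_{\{M\}},\{Q\}\rangle=\langle f_{\{N\}}\otimes f_{\{M\}},\bigtriangleup\{Q\}\rangle$ class by class, treat connected and disconnected $\{Q\}$ separately, use the quotient--insertion correspondence (Proposition 3.4) to see that the connected classes pairing nontrivially are exactly the insertion classes, and use multiplicativity of $\bigtriangleup$ together with decomposition into connected summands to see that the only disconnected contributor is $\{M\}\oplus\{N\}$. In structure you reproduce the paper's argument, and more explicitly so: the paper never invokes Propositions 3.2 or 3.4 by name, it simply declares which $\{Q\}$ are ``meaningful'' and asserts that each resulting pairing equals $1$.

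The step you defer --- checking that the bare sum $\sum_{i,\iota_{i}}$ reproduces the coproduct multiplicities --- is exactly the step the paper omits, and it is not merely delicate: carried out honestly, it fails. Take $M=N=\left(\begin{smallmatrix}0&1\\1&0\end{smallmatrix}\right)$, a single edge. There are four insertion data $(i,\iota_{i})$: $i\in\{1,2\}$ and, for each $i$, the two decompositions $(1,0)$ and $(0,1)$ of the unique nonzero entry of the $i$th column; every one of them yields the path on three vertices, so on the right side of the formula in the proposition the class of that path receives coefficient $4$. On the left side, taking the representative $Q=\left(\begin{smallmatrix}0&1&1\\1&0&0\\1&0&0\end{smallmatrix}\right)$, the only subsets $I$ with $\{Q_{I}\}=\{N\}$ and $\{Q\diagup Q_{I}\}=\{M\}$ are $\{1,2\}$ and $\{1,3\}$, so $\langle f_{\{N\}}\otimes f_{\{M\}},\bigtriangleup\{Q\}\rangle=2$, not $4$. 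The disconnected term has the same defect when $\{M\}=\{N\}$: expanding $\bigtriangleup(\{N\}\oplus\{M\})$ by the tensor-direct-sum rule produces the term $\{N\}\otimes\{M\}$ twice, so the pairing against $\{M\}\oplus\{N\}$ equals $2$, while the formula predicts $1$ (and the paper's proof asserts $1$). So the equivariance/multiplicity verification you propose as the final step would refute the identity as literally stated rather than confirm it; the identity is correct only at the level of which classes occur, and an exact identity requires automorphism (symmetry) factors relating the count of insertion data to the count of subsets of a representative of $\{Q\}$. In short, you have localized precisely the weak point of both the proposition and the paper's proof, but your plan cannot be completed as written, because its last step is false.
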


\begin{proof}
By the definition, the product 
$f_{\{N\}}\bullet f_{\{M\}}$
is defined by the following formula,

$$
<f_{\{N\}}\bullet f_{\{M\}},\{Q\}>
=<f_{\{N\}}\otimes f_{\{M\}},
\bigtriangleup\{Q\}>,\,\,\{Q\}\in
M_{adj}(+\infty,\mathbb{N}).
$$
It is easy to see that when $\{Q\}$ is
connected, the meaningful choice of
$\{Q\}$ shouls be $\{(N\hookrightarrow_{i,\iota_{i}}M)\}$.
Actually, we have

$$
\begin{array}{c}
\bigtriangleup\{(N\hookrightarrow_{i,\iota_{i}}M)\} \\
=\{(N\hookrightarrow_{i,\iota_{i}}M)\}\otimes 0+
0\otimes\{(N\hookrightarrow_{i,\iota_{i}}M)\}+
\cdots+\{N\}\otimes\{(N\hookrightarrow_{i,\iota_{i}}M)\diagup N\}
+\cdots.
\end{array}
$$
Thus

$$
<f_{\{N\}}\bullet f_{\{M\}}
,\{(N\hookrightarrow_{i,\iota_{i}}M)\}>
=<f_{\{N\}}\otimes f_{\{M\}},\{N\}\otimes\{M\}>=1.
$$
In the situation of $\{Q\}$ being disconnected,
the suitable choice of $\{Q\}$ should be
$\{N\}\oplus\{M\}$. It is obvious that

$$
<f_{\{N\}}\bullet f_{\{M\}},\{N\}\oplus\{M\}>=1.
$$
For other $\{Q\}$, we have

$$
<f_{\{N\}}\bullet f_{\{M\}},\{Q\}>=0.
$$
Up to now, we have proved the formula (4.4).

\end{proof}

Furthermore, we have a more general formula
about the product on $\mathcal{H}_{adj}^{\ast}$.

\begin{theorem}
Let $M_{i}$, $N_{j}$ be connected adjacency matrices
($i=1,\cdots,k,\,j=1,\cdots,l$). Then we have

\begin{equation}
\begin{array}{c}
f_{\{(N)_{[n]}\}}\bullet 
f_{\{(M)_{[m]}\}} \\
=\sum\limits_{\varLambda\subset[n],
\varLambda\not=\emptyset}\,\,
\sum\limits_{(i_{\varLambda},\iota_{i_{\varLambda}})}\, 
f_{\{(N)_{\varLambda^{c}}\}\oplus
\{((N)_{\varLambda}\hookrightarrow_
{(i_{\varLambda},\iota_{i_{\varLambda}})}
(M)_{[m]})\}}
+f_{\{(N)_{[n]}\}
\oplus\{(M)_{[m]}\}},
\end{array}
\end{equation}
where $\varLambda^{c}=[n]\setminus\varLambda$.
\end{theorem}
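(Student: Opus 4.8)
The plan is to compute $f_{\{(N)_{[n]}\}}\bullet f_{\{(M)_{[m]}\}}$ by pairing it against the dual basis and unwinding the definition of the product on $\mathcal{H}_{adj}^{\ast}$, exactly as in the connected case of Proposition 4.2, of which Theorem 4.1 is the many-component refinement. For any $\{Q\}\in M_{adj}(+\infty,\mathbb{N})$ the duality between $\bullet$ and $\bigtriangleup$ gives
$$
\langle f_{\{(N)_{[n]}\}}\bullet f_{\{(M)_{[m]}\}},\{Q\}\rangle=
\langle f_{\{(N)_{[n]}\}}\otimes f_{\{(M)_{[m]}\}},\bigtriangleup\{Q\}\rangle,
$$
so the coefficient of $f_{\{Q\}}$ in the product equals the number of subsets $I$ of a fixed representative $Q$ with $Q_{I}\sim(N)_{[n]}$ and $Q\diagup Q_{I}\sim(M)_{[m]}$; this count is independent of the representative by Proposition 2.3. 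The whole statement therefore reduces to identifying, for each class $\{Q\}$, these distinguished subsets $I$ and matching them against the index set of the right hand side of (4.5).

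The key structural step is to analyze such an $I$. Since $(N)_{[n]}=\bigoplus_{j}N_{j}$ with each $N_{j}$ connected, any $I$ with $Q_{I}\sim(N)_{[n]}$ splits as $I=\bigsqcup_{j}I_{j}$ whose induced pieces $Q_{I_{j}}$ are the connected components $N_{j}$. By convention (2.12) the quotient by this disconnected sub is the iterated quotient $Q\diagup(Q_{I_{j}})$, and by Lemma 2.2 together with (2.13) it distributes over the connected components of $Q$. Each piece then falls into exactly one of two cases: either $Q_{I_{j}}$ is an entire connected component of $Q$, in which case $Q_{I_{j}}\diagup Q_{I_{j}}=0$ by Definition 2.3 and the piece disappears; or $Q_{I_{j}}$ is a proper connected subgraph, which by connectivity collapses to a positive-degree vertex lying in a component of $Q\diagup Q_{I}\sim(M)_{[m]}$. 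Collecting the indices of the second type into a set $\varLambda\subset[n]$ and the first into $\varLambda^{c}$, Proposition 3.4, applied componentwise through Propositions 3.3, 3.5 and 3.6, forces
$$
\{Q\}=\{(N)_{\varLambda^{c}}\}\oplus\{((N)_{\varLambda}\hookrightarrow_{(i_{\varLambda},\iota_{i_{\varLambda}})}(M)_{[m]})\},
$$
while the degenerate case $\varLambda=\emptyset$ gives the disjoint union $\{(N)_{[n]}\}\oplus\{(M)_{[m]}\}$, the last term of (4.5).

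For the converse I would start from a class on the right hand side and exhibit the subset realizing it. Taking $I$ to consist of all the $N$-vertices, Proposition 3.3 decomposes the insertion into its $M$-components, Proposition 3.4 shows that each inserted block quotients back to the corresponding $M_{i}$, and the rule $C\diagup C=0$ of Definition 2.3 makes the disjoint components $(N)_{\varLambda^{c}}$ vanish under the quotient; hence $Q_{I}\sim(N)_{[n]}$ and $Q\diagup Q_{I}\sim(M)_{[m]}$, as required. Running the two directions together yields the correspondence between distinguished subsets $I$ of $Q$ and the pairs $(\varLambda,(i_{\varLambda},\iota_{i_{\varLambda}}))$ indexing the sum in (4.5), and comparison of coefficients then gives the formula.

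The main obstacle I expect is not the existence of this correspondence but its bookkeeping: one must verify that summing $f_{\{Q\}}$ over all insertion data $(i_{\varLambda},\iota_{i_{\varLambda}})$, without collecting equal classes, reproduces precisely the multiplicity $\#\{I\}$ computed on the dual side. This needs Proposition 3.1 so that insertion is well defined on equivalence classes, together with the iterated-insertion decompositions of Propositions 3.5 and 3.6 to guarantee that distinct insertion realizations match distinct subsets and conversely. The genuinely delicate point is the handling of repeated components: when several $N_{j}$ (or an $N_{j}$ and an $M_{i}$) are isomorphic, a single class $\{Q\}$ may admit several distinguished subsets $I$, and one must check that the right hand side carries the matching multiplicity. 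Controlling the interaction between the two quotient conventions (2.12) and (2.13) for the mixed connected/disconnected pieces, and tracking which component of $(M)_{[m]}$ each inserted $N_{j}$ lands in, is exactly where the argument must be carried out with care before the index bijection, and hence (4.5), is secured.
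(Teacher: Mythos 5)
Your overall strategy is the same as the paper's: pair $f_{\{(N)_{[n]}\}}\bullet f_{\{(M)_{[m]}\}}$ against a basis class $\{Q\}$, use the duality $\langle f\bullet g,\{Q\}\rangle=\langle f\otimes g,\bigtriangleup\{Q\}\rangle$, and classify the contributing $\{Q\}$ by splitting a subset $I$ with $Q_{I}\sim (N)_{[n]}$ into pieces that are whole connected components of $Q$ (accounting for $(N)_{\varLambda^{c}}$) and pieces that are proper connected subgraphs of components (accounting, via Proposition 3.4 applied componentwise through Proposition 3.3, for the insertion $((N)_{\varLambda}\hookrightarrow(M)_{[m]})$). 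That structural classification is correct, and it does establish that both sides of (4.5) are supported on the same set of classes $\{Q\}$; this is exactly as far as the paper's own proof goes.

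The gap is the step you postpone as ``bookkeeping'': identifying the dual-side multiplicity $\#\{I\}$ with the number of insertion data $(\varLambda,i_{\varLambda},\iota_{i_{\varLambda}})$. No such identification exists, so this step does not merely need care --- it fails. Take $n=m=1$ with $N=M$ the single edge on two vertices. All four insertion data $(i,\iota_{i})$ (two positions, two decomposing maps each) produce the path $P_{3}$ on three vertices, so the right-hand side of (4.4)/(4.5) carries coefficient $4$ on $f_{\{P_{3}\}}$; but $\bigtriangleup\{P_{3}\}$ contains the term $\{N\}\otimes\{M\}$ exactly twice (the vertex subsets $\{1,2\}$ and $\{2,3\}$; the subset $\{1,3\}$ induces the zero matrix), so the dual product has coefficient $2$. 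Now take $N$ the single edge and $M$ the double edge: exactly two insertion data yield the triangle, while the triangle has three edge-subgraphs whose quotient is the double edge, so the coefficients are $2$ versus $3$. Since the discrepancies go in opposite directions (ratios $1/2$ and $3/2$, equal to $|\mathrm{Aut}(Q)|/(|\mathrm{Aut}(N)||\mathrm{Aut}(M)|)$ in each case), no reindexing of the insertion sum, orbit-counting convention, or uniform rescaling can repair both examples: the correct structure constants of the dual product are the subgraph counts, and expressing them through insertions unavoidably introduces automorphism weights, as in the Connes--Kreimer insertion/elimination picture. You should also know that the paper's own proof contains exactly this flaw --- it asserts $\langle f_{\{N\}}\bullet f_{\{M\}},\{(N\hookrightarrow_{(i,\iota_{i})}M)\}\rangle=1$, which the triangle computation refutes --- so your proposal faithfully reproduces the published argument including the point where it breaks; but the issue you flag in your final paragraph is not a routine verification to be completed, it is a genuine obstruction to the formula as stated.
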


\begin{proof}
Recalling the definition of the product on
$\mathcal{H}_{adj}^{\ast}$, we have

$$
\begin{array}{c}
<f_{\{(N)_{[n]}\}}\bullet 
f_{\{(M)_{[m]}\}},\{Q\}> \\
=<f_{\{(N)_{[n]}\}}\otimes 
f_{\{(M)_{[m]}\}},\bigtriangleup\{Q\}>,
\,\,\{Q\}\in M_{adj}(+\infty,\mathbb{N}).
\end{array}
$$
In order to prove theorem 4.1, we need to choose
$\{Q\}$ such that 

$$
<f_{\{(N)_{[n]}\}}\otimes 
f_{\{(M)_{[m]}\}},
\bigtriangleup\{Q\}>\not=0.
$$
Here we are interested in the situation of $m\geq 2$.
Hence, $\{Q\}$ should be disconnected. Actually,
if $\{Q\}=\{Q_{1}\}\oplus\cdots\oplus\{Q_{p}\}$,
where each $\{Q_{k}\}$ is connected ($k=1,\cdots,p$),
then $p\geq m$. We focus on the right factors in
the tensor, then $\bigtriangleup\{Q_{k}\}$ 
($i=1,\cdots,p$) will be required to provide 
$\{M_{i}\}$ ($i=1,\cdots,m$) on the right factors.
By the same reason, $\bigtriangleup\{Q_{i}\}$
should provide $\{N_{j}\}$ ($j=1,\cdots,n$)
on their left factors. Therefore, there are
only three meaningful possibilities of $\{Q_{i}\}$
as follows.

\begin{itemize}
\item $\{Q_{k}\}=\{((N)_{J}
\hookrightarrow_{(i_{J},\iota_{i_{J}})}M_{a})\}$,
where $J\subset[n]$. Then $\bigtriangleup\{Q_{i}\}$
will contain the term 

$$
\{(N)_{J}\}\otimes\{M_{a}\}.
$$
\item $\{Q_{k}\}=\{M_{a}\}$, then

$$
\bigtriangleup\{Q_{k}\}=0\otimes\{M_{a}\}+\cdots.
$$
\item $\{Q_{k}\}=\{(N)_{J}\}$
for some $J\subset[n]$, then

$$
\bigtriangleup\{Q_{i}\}=
\{(N)_{J}\}\otimes 0+\cdots.
$$
\end{itemize}
The previous discussions show that the suitable
choices of $\{Q\}$ should be of the following form:

$$
\{Q\}=\{(N)_{\varLambda^{c}}\}\oplus
(\bigoplus\limits_{i\in I}\{((N)_{J_{i}}
\hookrightarrow_{\{(i_{J_{i}},\iota_{i_{J_{i}}})\}}M_{i})\}) 
\oplus\{(M)_{I^{c}}\},
$$
where $\Lambda\subset[n]$,
$\varLambda^{c}=[n]\setminus\varLambda$,
$\{J_{i}\}_{i\in I}\in\mathbf{Part}(\varLambda)$.
Comparing above expression with the formula (3.6), 
we know that $\{Q\}$ should
be taken to be

$$
\{Q\}=\left\{
\begin{array}{cc}
\{(N)_{\varLambda^{c}}\}\oplus
\{((N)_{\varLambda}\hookrightarrow_
{(i_{\varLambda},\iota_{i_{\varLambda}})}
(M)_{[m]})\},
 & \varLambda\not=\emptyset, \\
\{(N)_{[n]}\}\oplus
\{(M)_{[m]}\}. &
\end{array}\right.
$$

Above discussions mean that the formula (4.5) is valid.

\end{proof}

The formula (4.5) suggests us to define a new multiplication
on $\mathcal{H}_{adj}$.

\begin{definition}
Let $\{M_{i}\}$, $\{N_{j}\}$ be connected
($i=1,\cdots,m,\,j=1,\cdots,n$). We define the
multiplication $\bullet$ between 
$\{M_{1}\}\oplus\cdots\oplus\{M_{m}\}$
and $\{N_{1}\}\oplus\cdots\oplus\{N_{n}\}$ as follows:

\begin{equation}
\begin{array}{c}
\{(N)_{[n]}\}\bullet\{(M)_{[m]}\} \\
=\sum\limits_{\varLambda\subset[n],
\varLambda\not=\emptyset} \,\,
\sum\limits_{(i_{\varLambda},\iota_{i_{\varLambda}})}
\{(N)_{\varLambda^{c}}\}\oplus
\{((N)_{\varLambda}\hookrightarrow_
{(i_{\varLambda},\iota_{i_{\varLambda}})}
(M)_{[m]})\} \\
+(\{(N)_{[n]}\}\oplus\{(M)_{[m]}\}),
\end{array}
\end{equation}
where $\Lambda^{c}=[n]\setminus\varLambda$.
\end{definition}

It is easy to see the multiplication (4.6) is
non-commutative. We want to prove the associativity 
of the product $\bullet$.

\begin{theorem}
Let $\{M_{i}\}$, $\{N_{j}\}$ and $\{Q_{k}\}$
be connected ($i=1,\cdots,m,\,j=1,\cdots,n,\,
k=1,\cdots,q$). Then we have

\begin{equation}
\{(N)_{[n]}\}\bullet
(\{(M)_{[m]}\}\bullet\{(Q)_{[q]}\}) 
=(\{(N)_{[n]}\}\bullet\{(M)_{[m]}\})\bullet
\{(Q)_{[q]}\}.
\end{equation}
\end{theorem}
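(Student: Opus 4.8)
The plan is to prove associativity not by unfolding (4.6) by brute force, but by transporting it through the duality established in theorem 4.1. The key observation is that the linear map $\Phi:\mathcal{H}_{adj}\to\mathcal{H}_{adj}^{\ast}$ sending each basis element $\{M\}$ to its dual basis element $f_{\{M\}}$ is a linear isomorphism, and that it intertwines the multiplication $\bullet$ of definition 4.2 with the dual product $\bullet$ on $\mathcal{H}_{adj}^{\ast}$. Indeed, comparing formula (4.5) with formula (4.6) term by term shows
$$
\Phi(\{(N)_{[n]}\}\bullet\{(M)_{[m]}\})
=f_{\{(N)_{[n]}\}}\bullet f_{\{(M)_{[m]}\}}
=\Phi(\{(N)_{[n]}\})\bullet\Phi(\{(M)_{[m]}\}).
$$
Since every nonzero class decomposes into connected pieces (corollary 2.3) and theorem 4.1 covers exactly this general case, $\Phi$ is multiplicative on all basis pairs, hence an algebra homomorphism by bilinearity of both products. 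As $\Phi$ is bijective, it suffices to prove that the dual product on $\mathcal{H}_{adj}^{\ast}$ is associative and then to pull the identity back along $\Phi^{-1}$.

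The associativity on the dual side is where theorem 2.1 does the real work. For $f,g,h\in\mathcal{H}_{adj}^{\ast}$ and any $\{M\}\in M_{adj}(+\infty,\mathbb{N})$, the defining relation $\langle f\bullet g,\{M\}\rangle=\langle f\otimes g,\bigtriangleup\{M\}\rangle$ gives
$$
\langle (f\bullet g)\bullet h,\{M\}\rangle=\langle f\otimes g\otimes h,(\bigtriangleup\otimes 1)\bigtriangleup\{M\}\rangle,
$$
$$
\langle f\bullet(g\bullet h),\{M\}\rangle=\langle f\otimes g\otimes h,(1\otimes\bigtriangleup)\bigtriangleup\{M\}\rangle.
$$
By the coassociativity $(\bigtriangleup\otimes 1)\bigtriangleup=(1\otimes\bigtriangleup)\bigtriangleup$ of theorem 2.1, the two right-hand sides coincide for every $\{M\}$, whence $(f\bullet g)\bullet h=f\bullet(g\bullet h)$. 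Applying $\Phi^{-1}$ then yields (4.7). The only point requiring care is that this duality be genuine rather than merely formal: each $\mathcal{H}_{adj,n}$ is finite-dimensional and $\bigtriangleup$ maps $\mathcal{H}_{adj,n}$ into $\bigoplus_{p+q=n}\mathcal{H}_{adj,p}\otimes\mathcal{H}_{adj,q}$, so the pairings above are finite sums, $\bullet$ is well defined on the graded dual (4.1), and $\Phi$ restricts to an isomorphism in each degree.

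If one instead insists on a direct combinatorial verification that stays inside $\mathcal{H}_{adj}$, the route is to expand both sides of (4.7) by two applications of (4.6) and then rewrite the iterated insertions $((N)\hookrightarrow((M)\hookrightarrow(Q)))$ using proposition 3.6, that is formula (3.8). I expect the entire difficulty to sit here: one must exhibit a bijection between the sum over the partitions $\{\varLambda_i\}$ of $[n]$, $\{\varGamma_i\}$ of $[m]$ and $\{\varXi_i\}$ of $[q]$ produced by (3.8) on the left, together with their disjoint-union companions, and the sum produced by first forming $\{(N)_{[n]}\}\bullet\{(M)_{[m]}\}$ and then inserting into $(Q)_{[q]}$ on the right. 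This index bookkeeping is precisely what theorem 2.1 already organizes on the coalgebra side, which is why the duality argument is preferable: the combinatorial proof would amount to re-deriving coassociativity by hand through (3.8), rather than invoking it.
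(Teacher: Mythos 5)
Your proof is correct, but it follows a genuinely different route from the paper's. The paper proves (4.7) head-on: it expands both sides via definition 4.1 (formula (4.6)) and the iterated-insertion formulas (3.6)--(3.8), and then argues that the two resulting sums over partitions of $[n]$, $[m]$, $[q]$ and over all admissible insertions have the same general form. You instead transport the problem through duality: the map $\mathcal{M}$ of (4.8) (your $\Phi$) sends basis elements to dual basis elements, is bijective in each degree because each $\mathcal{H}_{adj,n}$ is finite dimensional, and is multiplicative on basis pairs by a term-by-term comparison of theorem 4.1 (formula (4.5)) with definition 4.1 (formula (4.6)); since the convolution product on the graded dual is associative as a formal consequence of the coassociativity of $\bigtriangleup$ (theorem 2.1), associativity pulls back along $\mathcal{M}^{-1}$. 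There is no circularity: theorem 2.1 and theorem 4.1 are both established before theorem 4.2 and neither depends on it; in effect you prove the multiplicativity half of proposition 4.3 first (the paper deduces proposition 4.3 from theorems 4.1 and 4.2, whereas you reverse that order). What your argument buys is rigor and economy: it replaces the paper's heavy and only sketched bookkeeping (``both sides have the same form'') by the standard fact that dualizing a coassociative coproduct yields an associative product. What the paper's argument buys is self-containedness inside $\mathcal{H}_{adj}$ and an explicit combinatorial matching of insertions, which your final paragraph correctly identifies as the hard content hidden in the direct route. Two small blemishes: the decomposition of a nonzero class into connected summands is proposition 2.2 (equivalently corollary 2.2), not corollary 2.3; and you should add one word about pairs involving the unit, where multiplicativity is immediate because $\mathcal{M}(\{0\})=\eta$ is the convolution unit.
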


\begin{proof}
The sum on the left side of (4.6) is over all
possible insertion. Therefore, to prove the formula 
(4.7) we need to know what types of the terms
will appear on both sides of (4.7).

$\mathbf{The\,\,situation\,\,of\,\,the\,\,right\,\,side:}$

First, we consider the right side of (4.7). 
By the formulas (4.6), (3.6), we know that 

$$
\begin{array}{c}
(\{(N)_{[n]}\}\bullet\{(M)_{[m]}\})\bullet
\{(Q)_{[q]}\} \\
=\sum\limits_{\varLambda\subset[n],\,
\varLambda\not=\emptyset} \,\,
\sum\limits_{(i_{\varLambda},
\iota_{i_{\varLambda}})}
(\{(N)_{\varLambda^{c}}\}\oplus
\{((N)_{\varLambda}\hookrightarrow_
{(i_{\varLambda},\iota_{i_{\varLambda}})}
(M)_{\varGamma})\}
\oplus\{(M)_{\varGamma^{c}}\}) 
\bullet\{(Q)_{[q]}\} \\
+(\{(N)_{[n]}\}\oplus
\{(M)_{[m]}\})\bullet
\{(Q)_{[q]}\}.
\end{array}
$$

We focus 
on the terms with the following form,

$$
\begin{array}{c}
(\{(N)_{\varLambda^{c}}\}\oplus
\{((N)_{\varLambda}\hookrightarrow_
{(i_{\varLambda},\iota_{i_{\varLambda}})}
(M)_{\varGamma})\}
\oplus\{(M)_{\varGamma^{c}}\}) 
\bullet\{(Q)_{[q]}\}.
\end{array}
\quad(\ast\ast\ast)
$$
In the expression ($\ast\ast\ast$) 

$$
\varGamma=\{i\in[m]|\,\exists j\in[m]
\,\,s.t.\,\,i_{j}\in M_{i}\},
$$
$\varLambda\not=\emptyset$, thus
$\varGamma\not=\emptyset$.

With the same reason due to the formula (4.6),
we have

$$
\begin{array}{c}
(\{(N)_{\varLambda^{c}}\}\oplus
\{((N)_{\varLambda}\hookrightarrow_
{(i_{\varLambda},\iota_{i_{\varLambda}})}
(M)_{\varGamma})\}\oplus\{(M)_{\varGamma^{c}}\}) 
\bullet\{(Q)_{[q]}\} \\
=\sum\limits_{\varLambda_{c,2},
\varLambda_{2},\varGamma_{2},\varGamma_{c,2}}
\{(N)_{\varLambda_{c,1}}\} 
\oplus\{(M)_{\varGamma_{c,1}}\} 
\oplus\{(Q)_{\varXi^{c}}\}  \\
\oplus\{((N)_{\varLambda_{1}}\hookrightarrow_
{(i_{\varLambda_{1}},
	\iota_{i_{\varLambda_{1}}})}
(M)_{\varGamma_{1}})\} 
\oplus   
\sum\limits_{\{\ast\}\cup\{\ast\}\cup\{\ast\}}
\{O_{\varLambda_{c,2},
\varLambda_{2},\varGamma_{2},
\varGamma_{c,2},
\varXi,\{\ast\}\cup\{\ast\}\cup\{\ast\}}\} \\
+\{(N)_{\varLambda^{c}}\}\oplus
\{((N)_{\varLambda}\hookrightarrow_
{(i_{\varLambda},\iota_{i_{\varLambda}})}
(M)_{\varGamma})\}\oplus\{(M)_{\varGamma^{c}}\} 
\oplus\{(Q)_{[q]}\},
\end{array}
$$
where $\varLambda_{c,2}\cup
\varLambda_{2}\cup\varGamma_{2}\cup
\varGamma_{c,2}\not=\emptyset$, and

$$
\begin{array}{c}
\{O_{\varLambda_{c,2},
\varLambda_{2},\varGamma_{2},
\varGamma_{c,2},
\varXi,\{\ast\}\cup\{\ast\}\cup\{\ast\}}\} \\
=\{[((N)_{\varLambda_{c,2}}
\oplus((N)_{\varLambda_{2}} 
\hookrightarrow_
{(i_{\varLambda_{2}},
\iota_{i_{\varLambda_{2}}})}
(M)_{\varGamma_{2}}) \\
\oplus(M)_{\varGamma_{c,2}})
\hookrightarrow_{\{\ast\}\cup\{\ast\}\cup\{\ast\}}
(Q)_{\varXi}]\},
\end{array}
$$
moreover,

\begin{itemize}
\item $\varLambda=\varLambda_{1}
\cup\varLambda_{2}$,
$\varLambda_{1}\cap\varLambda_{2}
=\emptyset$.
\item $\varLambda^{c}=\varLambda_{c,1}
\cup\varLambda_{c,2}$,
$\varLambda_{c,1}\cap\varLambda_{c,2}
=\emptyset$.
\item $\varGamma=\varGamma_{1}
\cup\varGamma_{2}$,$\varGamma_{1}\cap
\varGamma_{2}=\emptyset$.
\item $\varGamma^{c}=\varGamma_{c,1}
\cup\varGamma_{c,2}$,
$\varGamma_{c,1}\cup\varGamma_{c,2}=\emptyset$.
\item $[q]=\varXi\cup\varXi^{c}$, where
the choice of $\varXi$ depends on the
other decompositions mentioned above.
\end{itemize}

Addtionlly, we need to consider the term
$(\{(N)_{[n]}\}\oplus\{(M)_{[m]}\})
\bullet\{(Q)_{[q]}\}$. With the same reason
as above, we have,

$$
\begin{array}{c}
(\{(N)_{[n]}\}\oplus\{(M)_{[m]}\})
\bullet\{(Q)_{[q]}\} \\
=\sum\limits_{\varLambda\subset[n],
\varGamma\subset[m],
\varLambda\cup\varGamma\not=\emptyset}
\{(N)_{\varLambda^{c}}\}\oplus\{(M)_{\varGamma^{c}}\}
\oplus\{(Q)_{\varXi^{c}}\}\oplus \\
\{(((N)_{\varLambda}\oplus(M)_{\varGamma})
\hookrightarrow_{(q_{\varLambda},\iota_{q_{\varLambda}})
\cup(q_{\varGamma},\tau_{q_{\varGamma}})}(Q)_{\varXi})\} \\
+\{(N)_{[n]}\}\oplus\{(M)_{[m]}\}
\oplus\{(Q)_{[q]}\}.
\end{array}
$$
In summary, we get a general expression of
the right side of the formula (4.7) as
follows.

$$
\begin{array}{c}
(\{(N)_{[n]}\}\bullet\{(M)_{[m]}\})\bullet
\{(Q)_{[q]}\} \\
=\sum\limits_{\ast}\{(N)_{\varLambda_{1}}\}
\oplus\{(M)_{\varGamma_{1}}\} 
\oplus\{(Q)_{\varXi^{c}}\}  
\oplus\{((N)_{\varLambda_{2}}\hookrightarrow_
{(i_{\varLambda_{2}},
\iota_{i_{\varLambda_{2}}})}
(M)_{\varGamma_{2}})\} \\
\oplus\{[((N)_{\varLambda_{3}}
\oplus((N)_{\varLambda_{4}} 
\hookrightarrow_
{(i_{\varLambda_{4}},
\iota_{i_{\varLambda_{4}}})}
(M)_{\varGamma_{3}})
\oplus(M)_{\varGamma_{4}})
\hookrightarrow_{\{\ast\}\cup\{\ast\}\cup\{\ast\}}
(Q)_{\varXi}]\} \\
+\{(N)_{[n]}\}\oplus\{(M)_{[m]}\}
\oplus\{(Q)_{[q]}\},
\end{array}
$$
where the sum is over all possible choices
of $\{\varLambda_{i}\}_{i=1}^{4}$,
$\{\varGamma_{i}\}_{i=1}^{4}$ and $\Xi$,
$\{\varLambda_{i}\}_{i=1}^{4}\in
\mathbf{Part}([n])$,
$\{\varGamma_{i}\}_{i=1}^{4}\in
\mathbf{Part}([m])$, $\varLambda_{i}$
or $\varGamma_{j}$ is allowed to be emptyset
for some $i$ or $j$ ($1\leq i,j\leq 4$), and

$$
(\bigcup\limits_{i=1,2,3}\varLambda_{i})
\cup(\bigcup\limits_{i=1,2,3}\varGamma_{i})
\not=\emptyset.
$$

$\mathbf{The\,\,situation\,\,of\,\,the\,\,left\,\,side:}$

We now consider the left side of the formula
(4.7). Similarly, we need to focus on the 
terms with the following form,

$$
\{(N)_{[n]}\}\bullet 
(\{(M)_{\varGamma^{c}}\}
\oplus
\{((M)_{\varGamma}\hookrightarrow_
{(q_{\varGamma},\kappa_{q_{\varGamma}})}
(Q)_{\varXi})\}
\oplus(Q)_{\varXi^{c}}),
\quad(\ast\ast)
$$
where $\varGamma\subset[m]$,
$\varGamma\not=\emptyset$,

$$
\varXi=\{k\in[q]|\,\exists i\in\varGamma,
\,\,s.t.\,\,q_{i}\in Q_{k}\}.
$$
Precisely, $\varXi$ results in a decomposition
of $\varGamma$, $\{I_{k}\}_{k\in\varXi}\in
\mathbf{Part}(\varGamma)$, such that

$$
\begin{array}{c}
\{((M_{i})_{i\in\varGamma}\hookrightarrow_
{\{q_{i},\kappa_{q_{i}}\}_{i\in\varGamma}}
(\bigoplus\limits_{k\in\varXi}Q_{k}))\} \\
=\bigoplus\limits_{k\in\varXi}\{((M_{i})_{i\in I_{k}}
\hookrightarrow_{\{(q_{i},\kappa_{q_{i}})\}_{i\in I_{k}}}Q_{k})\},
\end{array}
$$
where $I_{k}=\{i\in\varGamma|q_{i}\in Q_{k}\}$.

Now we give a description of the expression ($\ast\ast$)
in detail based on the formula (4.6).
Due to the formula (4.6), we have

$$
\begin{array}{c}
\{(N)_{[n]}\}\bullet 
(\{(M)_{\varGamma^{c}}\}
\oplus
\{((M)_{\varGamma}\hookrightarrow_
{(q_{\varGamma},\kappa_{q_{\varGamma}})}
(Q)_{\varXi})\}
\oplus(Q)_{\varXi^{c}}) \\
=\sum\limits_{\varLambda\subset[n],
\varLambda\not=\emptyset}
\{(N)_{\varLambda^{c}}\}\oplus
\sum\limits_{(a_{\varLambda},\iota_{a_{\varLambda}})}
\{((N)_{\varLambda}\hookrightarrow_
{(a_{\varLambda},\iota_{a_{\varLambda}})}
((M)_{\varGamma^{c}}\oplus 
((M)_{\varGamma}\hookrightarrow_
{(q_{\varGamma},\kappa_{q_{\varGamma}})}
(Q)_{\varXi}) \\
\oplus(Q)_{\varXi^{c}}))\}+
\{(N)_{[n]}\}\oplus 
\{(M)_{\varGamma^{c}}\}\oplus
\{((M)_{\varGamma}\hookrightarrow_
{(q_{\varGamma},\kappa_{q_{\varGamma}})}
(Q)_{\varXi})\}
\oplus(Q)_{\varXi^{c}}.
\end{array}
$$
We focus on the term

$$
\{((N)_{\varLambda}\hookrightarrow_
{(a_{\varLambda},\iota_{a_{\varLambda}})}
((M)_{\varGamma^{c}}\oplus 
((M)_{\varGamma}\hookrightarrow_
{(q_{\varGamma},\kappa_{q_{\varGamma}})}
(Q)_{\varXi}) 
\oplus(Q)_{\varXi^{c}}))\}.
$$
We divide $\varLambda$ into three subsets 
$\varLambda_{N\hookrightarrow M}$, 
$\varLambda_{N\hookrightarrow M\hookrightarrow Q}$
and $\varLambda_{N\hookrightarrow Q}$ 
($\{\varLambda_{N\hookrightarrow M},
\varLambda_{N\hookrightarrow M\hookrightarrow Q},
\varLambda_{N\hookrightarrow Q}\}
\in\mathbf{Part}(\varLambda)$) such that
the above term can be divided into three parts.

$$
\begin{array}{c}
\{((N)_{\varLambda}\hookrightarrow_
{(a_{\varLambda},\iota_{a_{\varLambda}})}
((M)_{\varGamma^{c}}\oplus 
((M)_{\varGamma}\hookrightarrow_
{(q_{\varGamma},\kappa_{q_{\varGamma}})}
(Q)_{\varXi}) 
\oplus(Q)_{\varXi^{c}}))\} \\
=\{((N)_{\varLambda_{N\hookrightarrow M}}\hookrightarrow_
{(i_{\varLambda_{N\hookrightarrow M}},
\iota_{i_{\varLambda_{N\hookrightarrow M}}})}
(M)_{\varGamma^{c}})\}\oplus \\
\{((N)_{\varLambda_{N\hookrightarrow M\hookrightarrow Q}}
\hookrightarrow_{(a_{\varLambda_{N\hookrightarrow M\hookrightarrow Q}},
\lambda_{a_{\varLambda_{N\hookrightarrow M\hookrightarrow Q}}})}
((M)_{\varGamma}\hookrightarrow_
{(q_{\varGamma},\kappa_{q_{\varGamma}})}
(Q)_{\varXi})\} \\
\oplus
\{((N)_{\varLambda_{N\hookrightarrow Q}}\hookrightarrow_
{(q_{\varLambda_{N\hookrightarrow Q}},
\iota_{q_{\varLambda_{N\hookrightarrow Q}}})}
(Q)_{\varXi^{c}})\}.
\end{array}
$$
Furthermore, by the formula (3.6) we have:

\begin{itemize}
\item

$$
\begin{array}{c}
\{((N)_{\varLambda_{N\hookrightarrow M}}\hookrightarrow_
{(i_{\varLambda_{N\hookrightarrow M}},
\iota_{i_{\varLambda_{N\hookrightarrow M}}})}
(M)_{\varGamma^{c}})\} \\
=\{((N)_{\varLambda_{N\hookrightarrow M}}\hookrightarrow_
{(i_{\varLambda_{N\hookrightarrow M}},
\iota_{i_{\varLambda_{N\hookrightarrow M}}})}
(M)_{\varGamma_{c,N\hookrightarrow M}})\} 
\oplus\{(M)_{\varGamma_{c,M}}\},
\end{array}
$$
where $\varGamma^{c}=\varGamma_
{c,N\hookrightarrow M}\cup\varGamma_{c,M}$,
$\varGamma_{c,N\hookrightarrow M}\cap\varGamma_{c,M}
=\emptyset$,

\item

$$
\begin{array}{c}
\{((N)_{\varLambda_{N\hookrightarrow Q}}\hookrightarrow_
{(q_{\varLambda_{N\hookrightarrow Q}},
\iota_{q_{\varLambda_{N\hookrightarrow Q}}})}
(Q)_{\varXi^{c}})\} \\
=\{((N)_{\varLambda_{N\hookrightarrow Q}}\hookrightarrow_
{(q_{\varLambda_{N\hookrightarrow Q}},
\iota_{q_{\varLambda_{N\hookrightarrow Q}}})}
(Q)_{\varXi_{c,N\hookrightarrow Q}})\} 
\oplus\{(Q)_{\varXi_{c,Q}}\},
\end{array}
$$
where $\varXi=\varXi_{c,N\hookrightarrow Q}\cup
\varXi_{c,Q}$, $\varXi_{c,N\hookrightarrow Q}\cap
\varXi_{c,Q}=\emptyset$.

\item Recalling proposition 3.6 and remark 3.2 we have

$$
\begin{array}{c}
\{((N)_{\varLambda_{N\hookrightarrow M\hookrightarrow Q}}
\hookrightarrow_{(a_{\varLambda_{N\hookrightarrow M\hookrightarrow Q}},
\lambda_{a_{\varLambda_{N\hookrightarrow M\hookrightarrow Q}}})}
((M)_{\varGamma}\hookrightarrow_
{(q_{\varGamma},\kappa_{q_{\varGamma}})}(Q)_{\varXi})\} \\
=\{((N)_{\varLambda^{(1)}}
\oplus((N)_{\varLambda^{(2)}}
\hookrightarrow_{(i_{\varLambda^{(2)}}^{\prime},
\iota_{i_{\varLambda^{(2)}}^{\prime}}^{\prime})}(M)_{\varGamma^{(2)}})
\oplus(M)_{\varGamma^{(1)}} \\
\hookrightarrow_{(\ast\ast)\cup(\ast\ast)\cup(\ast\ast)}
(Q)_{\varXi})\}
\end{array}
$$

\end{itemize}

Additionally, we need to consider the term
$\{(N)_{[n]}\}\bullet
(\{(M)_{[m]}\}\oplus\{(Q)_{[q]}\})$

$$
\begin{array}{c}
\{(N)_{[n]}\}\bullet
(\{(M)_{[m]}\}\oplus\{(Q)_{[q]}\}) \\
=\sum\limits_{\varLambda\subset[n],
\varLambda\not=\emptyset}
\{(N)_{\varLambda^{c}}\}\oplus
\{(N)_{\varGamma^{c}}\}\oplus
\{(N)_{\varXi^{c}}\}\oplus
\{((N)_{\varLambda}\hookrightarrow_
{(a_{\varLambda},\iota_{a_{\varLambda}})}
((M)_{\varGamma}\oplus(Q)_{\varXi}))\} \\
+\{(N)_{[n]}\}\oplus
\{(M)_{[m]}\}\oplus\{(Q)_{[q]}\}.
\end{array}
$$

In summary, we know that the left side
of the formula (4.7) has same form as
the one of the right side. Noting that the 
expressions on the both sides are the sum
and direct sum for all possible insertion,
thus the formula (4.7) is valid.

\end{proof}

With the help of theorem 3.2, the discussions
concerning the product on $\mathcal{H}_{adj}^{\ast}$
can be reduced to the situation of $\mathcal{H}_{adj}$.
In our setting, we do not distinguish the zero matrix
with different order. By definition 3.1 we have

$$
\{0\}\bullet\{M\}=\{M\}\bullet\{0\}=\{M\}.
$$
Thus $(\mathcal{H}_{adj},\bullet,\{0\})$
is an unital algebra over $\mathbb{K}$. We define
a map $\mathcal{M}$ from $(\mathcal{H}_{adj},\bullet,\{0\})$ 
to $(\mathcal{H}_{adj}^{\ast},\bullet,\eta)$ as follows:

\begin{equation}
\mathcal{M}:\{M_{1}\}\oplus\cdots\oplus\{M_{m}\}
\mapsto f_{\{M_{1}\}\oplus\cdots\oplus\{M_{m}\}},\,
\mathcal{M}:\{0\}\mapsto\eta.
\end{equation}
In (4.8) each $\{M_{i}\}$ is connected ($i=1,\cdots,m$).

From definition 4.1, theorem 4.1 and theorem 4.2
we immidiately have the conclusion about $\mathcal{M}$.

\begin{proposition}
The map $\mathcal{M}$ defined by (4.8) is an algebraic
isomorphism from $(\mathcal{H}_{adj},\bullet,\{0\})$ to
$(\mathcal{H}_{adj}^{\ast},\bullet,\eta)$.
\end{proposition}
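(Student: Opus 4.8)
The plan is to verify that $\mathcal{M}$ is simultaneously a linear bijection, a unit-preserving map, and multiplicative; associativity of the two products need not be reproved, since it is already secured by Theorem 4.2 on the source side $(\mathcal{H}_{adj},\bullet,\{0\})$ and by the coassociativity of $\bigtriangleup$ (Theorem 2.1) on the dual side $(\mathcal{H}_{adj}^{\ast},\bullet,\eta)$, so both are genuine associative unital algebras. Linear bijectivity is immediate from the grading: because each homogeneous component $\mathcal{H}_{adj,n}$ is finite dimensional, the graded dual $\mathcal{H}_{adj}^{\ast}$ has $\{f_{\{M\}}\}$ as a basis, and $\mathcal{M}$ carries the basis $\{\{M\}\mid\{M\}\in M_{adj}(+\infty,\mathbb{N})\}$ of $\mathcal{H}_{adj}$ onto this basis in a one-to-one fashion. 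Unit preservation is built into the definition, namely $\mathcal{M}(\{0\})=\eta$.

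The heart of the argument is multiplicativity, and here the key observation is that Definition 4.1 was engineered precisely so that formula (4.6) is the verbatim image, under $\{X\}\mapsto f_{\{X\}}$, of formula (4.5) in Theorem 4.1. Concretely, I would fix two basis generators $\{(N)_{[n]}\}$ and $\{(M)_{[m]}\}$, each written as a direct sum of connected classes (which by Proposition 2.2 is the general shape of a nonzero basis element), apply $\mathcal{M}$ to the right-hand side of (4.6), and use linearity together with $\mathcal{M}:\{X\}\mapsto f_{\{X\}}$ to move $\mathcal{M}$ inside the sum over $\varLambda$ and the insertion data $(i_{\varLambda},\iota_{i_{\varLambda}})$. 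Term by term this converts (4.6) into exactly the right-hand side of (4.5), yielding
$$
\mathcal{M}(\{(N)_{[n]}\}\bullet\{(M)_{[m]}\})
=f_{\{(N)_{[n]}\}}\bullet f_{\{(M)_{[m]}\}}
=\mathcal{M}(\{(N)_{[n]}\})\bullet\mathcal{M}(\{(M)_{[m]}\}).
$$

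Finally I would extend this identity from generator pairs to arbitrary elements of $\mathcal{H}_{adj}$ by bilinearity of both products, since the generators span and both $\bullet$'s are bilinear. The one point that requires a word of care---the nearest thing to an obstacle---is confirming that the special-form generators appearing in (4.5) and (4.6) really exhaust every product that can occur: this is exactly where Proposition 2.2 enters, as it guarantees that each nonzero class decomposes uniquely into connected components, so every nonzero basis element has the shape $\{M_1\}\oplus\cdots\oplus\{M_k\}$ demanded by the hypotheses of Theorem 4.1 and Definition 4.1, while the zero class is disposed of separately by $\{0\}\bullet\{M\}=\{M\}\bullet\{0\}=\{M\}$ together with $\mathcal{M}(\{0\})=\eta$. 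With multiplicativity on generators, its bilinear extension, unit preservation, and bijectivity all assembled, $\mathcal{M}$ is the asserted algebra isomorphism.
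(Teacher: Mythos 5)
Your proposal is correct and takes essentially the same route as the paper: the paper states the proposition as an immediate consequence of Definition 4.1, Theorem 4.1 and Theorem 4.2, which is precisely your key observation that formula (4.6) is the verbatim image of formula (4.5) under $\{X\}\mapsto f_{\{X\}}$. You merely make explicit the routine supporting checks (basis-to-dual-basis bijectivity via the finite-dimensional grading, unit preservation, and bilinear extension) that the paper leaves implicit.
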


By definition of $\mathcal{H}_{adj}$, we know that
$M_{adj}(+\infty,\mathbb{N})$ plays the role of
the bases in $\mathcal{H}_{adj}$. On the other hand,
we know that

$$
M_{adj}(+\infty,\mathbb{N})=
\{\bigoplus\limits_{i=1}^{m}\{M_{i}\}|
\,m\in\mathbb{N},\,\{M_{i}\}\in 
M_{adj}(m_{i},\mathbb{N})\diagup\sim\,\,is\,\,
connected,\,\,1\leq i\leq m\}.
$$
Thus, the formula (4.2) suggests us to define a new
coproduct on $\mathcal{H}_{adj}$ in the following way.

\begin{definition}
Let $\{M\}=\bigoplus_{i=1}^{m}\{M_{i}\}$,
where each $\{M_{i}\}\in M_{adj}(m_{i},\mathbb{N})$
is connected ($i=1,\cdots,m$). Then we define
the coproduct to be

\begin{equation}
\bigtriangleup_{1}\{M\}=\{M\}\otimes\{0\}+
\{0\}\otimes\{M\} 
+\sum\limits_{I\subset[m],\,
I,I^{c}\not=\emptyset}\{(M)_{I}\}\otimes
\{(M)_{I^{c}}\},
\end{equation}
where $I^{c}=[m]\setminus I$.
Particularly, $\bigtriangleup_{1}\{0\}
=\{0\}\otimes\{0\}$.
\end{definition}

The product $\bullet$ can be exteneded to
the situation of $\mathcal{H}_{adj}\otimes\mathcal{H}_{adj}$.
Let $(M)_{[m]},(N)_{[n]},
(Q)_{[q]}, \\ (R)_{[r]}\in
M_{adj}(+\infty,\mathbb{N})$, we define

$$
((M)_{[m]}\otimes(N)_{[n]})\bullet
((Q)_{[q]}\otimes(R)_{[r]})
=((M)_{[m]}\bullet(Q)_{[q]})\otimes
((N)_{[n]}\bullet(R)_{[r]}).
$$ 
It is easy to check that the product defined above
is well defined.
 
It is obvious that $\bigtriangleup_{1}$
is co-commutative. Firstly, we will prove
$\bigtriangleup_{1}$ is co-associative.

\begin{theorem}
We have

\begin{equation}
(\bigtriangleup_{1}\otimes 1)\bigtriangleup_{1}=
(1\otimes\bigtriangleup_{1})\bigtriangleup_{1}.
\end{equation}
\end{theorem}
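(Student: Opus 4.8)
The plan is to recognize $\bigtriangleup_{1}$ as a deconcatenation-type coproduct on the free commutative monoid generated by the connected classes, so that both iterated coproducts collapse to one and the same sum indexed by ordered partitions of the labelling set $[m]$ into three blocks. First I would rewrite the definition uniformly: writing $(M)_{I}=\bigoplus_{i\in I}\{M_{i}\}$ and adopting the convention $(M)_{\emptyset}=\{0\}$, the two boundary terms $\{M\}\otimes\{0\}$ and $\{0\}\otimes\{M\}$ in (4.9) are exactly the contributions of $I=[m]$ and $I=\emptyset$, so that
$$
\bigtriangleup_{1}\{M\}=\sum\limits_{I\subset[m]}\{(M)_{I}\}\otimes\{(M)_{I^{c}}\}.
$$
The essential structural input is Corollary 2.3 (equivalently Proposition 2.1): under $\oplus$ the set $M_{adj}(+\infty,\mathbb{N})$ is a commutative monoid freely generated by the connected classes, so the connected decomposition of any class is unique up to the order of its summands. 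Consequently, for every $I\subset[m]$ the class $\{(M)_{I}\}$ has its connected components indexed precisely by $I$, the map $\bigtriangleup_{1}$ is well defined independently of the chosen labelling, and it satisfies
$$
\bigtriangleup_{1}\{(M)_{I}\}=\sum\limits_{J\subset I}\{(M)_{J}\}\otimes\{(M)_{I\setminus J}\}.
$$

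Next I would expand both sides. Applying $\bigtriangleup_{1}\otimes 1$ and substituting the inner expansion gives
$$
(\bigtriangleup_{1}\otimes 1)\bigtriangleup_{1}\{M\}=\sum\limits_{I\subset[m]}\sum\limits_{J\subset I}\{(M)_{J}\}\otimes\{(M)_{I\setminus J}\}\otimes\{(M)_{I^{c}}\}.
$$
Setting $A=J$, $B=I\setminus J$, $C=I^{c}$, the pairs $(I,J)$ range bijectively over ordered triples $(A,B,C)$ of pairwise disjoint subsets with $A\cup B\cup C=[m]$. Symmetrically, applying $1\otimes\bigtriangleup_{1}$ gives
$$
(1\otimes\bigtriangleup_{1})\bigtriangleup_{1}\{M\}=\sum\limits_{I\subset[m]}\sum\limits_{K\subset I^{c}}\{(M)_{I}\}\otimes\{(M)_{K}\}\otimes\{(M)_{I^{c}\setminus K}\},
$$
and setting $A=I$, $B=K$, $C=I^{c}\setminus K$ realizes the same index set of ordered triples. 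Hence both expressions equal $\sum_{A\sqcup B\sqcup C=[m]}\{(M)_{A}\}\otimes\{(M)_{B}\}\otimes\{(M)_{C}\}$, which is (4.10). The base case $\{M\}=\{0\}$ is immediate, since both sides reduce to $\{0\}\otimes\{0\}\otimes\{0\}$, consistent with the convention $(M)_{\emptyset}=\{0\}$.

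The only point requiring care — and the place I would spend the argument — is the well-definedness of $\bigtriangleup_{1}$ together with the inner expansion formula, namely that restricting to a subset $I$ of labels genuinely respects the connected decomposition and that distinct labels never collide. This rests entirely on the uniqueness of the connected factorization recorded in Corollary 2.3; once that is granted, everything else is a routine reindexing of sums over subsets, and no subtlety from the quotient or insertion structure enters. I would therefore present the uniqueness of the decomposition as the load-bearing step and treat the two reindexings as the mechanical verification that the two iterated coproducts agree term by term.
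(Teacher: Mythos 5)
Your proof is correct and takes essentially the same approach as the paper: the paper's proof consists precisely of the observation that both iterated coproducts expand to $\sum_{I_{1},I_{2},I_{3}}\{(M)_{I_{1}}\}\otimes\{(M)_{I_{2}}\}\otimes\{(M)_{I_{3}}\}$, summed over ordered triples of pairwise disjoint subsets (one or two possibly empty) with $I_{1}\cup I_{2}\cup I_{3}=[m]$, which is exactly your reindexing of the two double sums. Your explicit treatment of well-definedness via the uniqueness of the connected decomposition merely spells out what the paper's ``straightforward calculation'' leaves implicit.
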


\begin{proof}
Let $\{M\}=\bigoplus_{i=1}^{m}\{M_{i}\}$,
where each $\{M_{i}\}\in 
M_{adj}(m_{i},\mathbb{N})\diagup\sim$
is connected ($i=1,\cdots,m$).
By a straightforward calculation, we have

$$
\begin{array}{c}
(\bigtriangleup_{1}\otimes 1)\bigtriangleup_{1}\{M\}=
(1\otimes\bigtriangleup_{1})\bigtriangleup_{1}\{M\} \\
=\sum\limits_{I_{1},I_{2},I_{3}}
\{(M)_{I_{1}}\}\otimes\{(M)_{I_{2}}\}
\otimes\{(M)_{I_{3}}\},
\end{array}
$$
where $I_{1}\cup I_{2}\cup I_{3}=[m]$,
$I_{i}\cap I_{j}=\emptyset$ ($i\not= j$), one or two of
$I_{1}, I_{2}, I_{3}$ may be emptyset.

\end{proof}

The coproduct $\bigtriangleup_{1}$ and product
$\bullet$ are compatible.

\begin{theorem}
Let $\{M_{i}\},\{N_{j}\}\in M_{adj}(+\infty,\mathbb{N})$
be connected ($i=1,\cdots,\,j=1,\cdots,n$).
Then, we have

\begin{equation}
\bigtriangleup_{1}(\{(N)_{[n]}\}
\bullet\{(M)_{[m]}\})=
\bigtriangleup_{1}\{(N)_{[n]}\}
\bullet
\bigtriangleup_{1}\{(M)_{[m]}\}.
\end{equation}
\end{theorem}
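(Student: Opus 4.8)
The quickest route is to recognise the identity as the bialgebra compatibility axiom transported across the isomorphism $\mathcal{M}$ of Proposition 4.3, rather than to re-run the bulky combinatorics of Theorem 4.2. The plan is first to observe that, since $(\mathcal{H}_{adj},\oplus,u,\bigtriangleup,\eta)$ is a bialgebra and each graded piece $\mathcal{H}_{adj,n}$ is finite dimensional, the graded dual $\mathcal{H}_{adj}^{\ast}$ is automatically a bialgebra: its product is the transpose of $\bigtriangleup$, namely the $\bullet$ of Section 4.1, and its coproduct, which I denote $\bigtriangleup^{\ast}$ and which is computed in Proposition 4.1, is the transpose of $\oplus$. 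The compatibility $\bigtriangleup^{\ast}(f\bullet g)=\bigtriangleup^{\ast}f\bullet\bigtriangleup^{\ast}g$ then holds on $\mathcal{H}_{adj}^{\ast}$ for free, being precisely the transpose of the compatibility of $\oplus$ and $\bigtriangleup$ on $\mathcal{H}_{adj}$.

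Next I would check that $\mathcal{M}$ intertwines $\bigtriangleup_{1}$ with $\bigtriangleup^{\ast}$. This is exactly what Proposition 4.1 provides: for $\{M\}=\bigoplus_{i=1}^{k}\{M_{i}\}$ with the $\{M_{i}\}$ connected, applying $\mathcal{M}\otimes\mathcal{M}$ to the defining formula (4.9) for $\bigtriangleup_{1}\{M\}$ reproduces the expression (4.2) for $\bigtriangleup^{\ast}f_{\{M\}}$ term by term, the contributions $I=\emptyset$ and $I=[k]$ matching the $\{M\}\otimes\{0\}$ and $\{0\}\otimes\{M\}$ summands through $\mathcal{M}(\{0\})=\eta$. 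Hence $\mathcal{M}$ is simultaneously an algebra isomorphism for $\bullet$ (Proposition 4.3) and a coalgebra isomorphism from $(\mathcal{H}_{adj},\bigtriangleup_{1})$ onto $(\mathcal{H}_{adj}^{\ast},\bigtriangleup^{\ast})$, and since $\mathcal{M}$ is an algebra map so is $\mathcal{M}\otimes\mathcal{M}$ for the tensor-product $\bullet$ defined just before Theorem 4.3. Chaining these facts gives
\[
(\mathcal{M}\otimes\mathcal{M})\bigl(\bigtriangleup_{1}(\{(N)_{[n]}\}\bullet\{(M)_{[m]}\})\bigr)=\bigtriangleup^{\ast}\bigl(\mathcal{M}\{(N)_{[n]}\}\bullet\mathcal{M}\{(M)_{[m]}\}\bigr),
\]
which by the bialgebra axiom in $\mathcal{H}_{adj}^{\ast}$ equals $(\mathcal{M}\otimes\mathcal{M})\bigl(\bigtriangleup_{1}\{(N)_{[n]}\}\bullet\bigtriangleup_{1}\{(M)_{[m]}\}\bigr)$; as $\mathcal{M}\otimes\mathcal{M}$ is injective, cancelling it yields (4.11).

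If instead a self-contained combinatorial proof is wanted, I would expand both sides directly, as in Theorem 4.2. After substituting $\bigtriangleup_{1}\{(N)_{[n]}\}=\sum_{A\subset[n]}\{(N)_{A}\}\otimes\{(N)_{A^{c}}\}$ and the analogous splitting of $\{(M)_{[m]}\}$, the right-hand side becomes $\sum_{A,B}(\{(N)_{A}\}\bullet\{(M)_{B}\})\otimes(\{(N)_{A^{c}}\}\bullet\{(M)_{B^{c}}\})$, while the left-hand side forms the product by (4.6) and then splits the resulting connected components into two groups via $\bigtriangleup_{1}$. The heart of the matter is a bijection of indexing data: on the left, a choice of which components of $N$ are inserted, the insertion data, and a left/right assignment of every resulting component; on the right, a prior split of $N$ into $A\mid A^{c}$ and of $M$ into $B\mid B^{c}$ followed by independent products on each side. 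Sending an inserted copy of $N$ to the side of its host $M$-component and a non-inserted copy to its assigned side recovers $A$, $B$ and the insertion data uniquely. The main obstacle in this route is the bookkeeping that the insertion data on the left genuinely agrees with that produced by the two independent products on the right; this rests on the distributivity of insertion over a direct sum recorded in Remark 3.1 and Proposition 3.2, which guarantees each component of $N$ targets a single summand and that the $\bigtriangleup_{1}$-splitting respects the decomposition of the product into connected components. Since the duality argument sidesteps this matching altogether, I would present it as the primary proof and leave the bijection as a remark.
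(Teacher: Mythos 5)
Your primary argument is correct and takes a genuinely different route from the paper. The paper proves (4.11) by direct combinatorics: it expands the left side with (4.6), applies $\bigtriangleup_{1}$ to each resulting direct sum, splits the index sets $\varLambda,\varLambda^{c},\varGamma,\varGamma^{c}$ into left/right parts (noting that $\varGamma_{a}$ is determined by $\varLambda_{a}$), and regroups the whole sum as $\sum(\{(N)_{\varLambda^{(1)}}\}\bullet\{(M)_{\varGamma^{(1)}}\})\otimes(\{(N)_{\varLambda^{(2)}}\}\bullet\{(M)_{\varGamma^{(2)}}\})$, which is what the right side expands to --- i.e.\ exactly the bijection of indexing data that you sketch only as a fallback. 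Your transport argument instead combines three inputs, all of which precede Theorem 4.4 in the paper, so there is no circularity: the standard fact that the graded dual of a graded bialgebra with finite-dimensional homogeneous pieces is again a bialgebra, so $\bigtriangleup^{\ast}(f\bullet g)=\bigtriangleup^{\ast}f\bullet\bigtriangleup^{\ast}g$ holds automatically in $\mathcal{H}_{adj}^{\ast}$; Proposition 4.3, making $\mathcal{M}$ an algebra isomorphism; and the term-by-term identity of (4.2) with the $\mathcal{M}\otimes\mathcal{M}$-image of (4.9), making $\mathcal{M}$ a coalgebra isomorphism (including $\mathcal{M}\{0\}=\eta$ and $\bigtriangleup^{\ast}\eta=\eta\otimes\eta$). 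Cancelling $\mathcal{M}\otimes\mathcal{M}$ at the end is legitimate because $\mathcal{M}$ carries a basis to a basis, hence is injective. Your route is shorter, avoids re-running the insertion combinatorics of Section 3, and makes the conceptual point that $(\mathcal{H}_{adj},\bullet,\{0\},\bigtriangleup_{1},\eta)$ is precisely the graded dual bialgebra pulled back along $\mathcal{M}$, which is the real content of Section 4; the paper's route is self-contained at the level of matrices and produces the explicit correspondence of terms. The only caveat is that your proof is exactly as rigorous as Propositions 4.1 and 4.3 as stated: like the paper's own proof of this theorem, those propositions use the subset-sum convention and gloss over multiplicities when some connected components coincide (e.g.\ $\{M\}=\{A\}\oplus\{A\}$); this is a shared imprecision inherited from the paper, not a gap in your deduction.
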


\begin{proof}
To prove the formula (4.11), we need
to calculate the both sides of (4.11).

$\mathbf{The\,\,situation\,\,of\,\,the\,\,left\,\,side:}$

Recalling the formula (4.6) we have

$$
\begin{array}{c}
\{(N)_{\underline{n}}\}
\bullet\{(M)_{\underline{m}}\}
=\sum\limits_{\varLambda\subset[n],\,
\varLambda\not=\emptyset}
\sum\limits_{(i_{\varLambda},\iota_{i_{\varLambda}})}
\{(N)_{\varLambda^{c}}\}\oplus
\{((N)_{\varLambda}\hookrightarrow_
{(i_{\varLambda},\iota_{i_{\varLambda}})}
(M)_{\varGamma})\}\oplus\{(M)_{\varGamma^{c}}\}.
\end{array}
$$
Therefore

$$
\begin{array}{c}
\bigtriangleup_{1}(\{(N)_{[n]}\}
\bullet\{(M)_{[m]}\}) \\
=\sum\limits_{\varLambda\subset[n],\,
\varLambda\not=\emptyset}
\sum\limits_{(i_{\varLambda},\iota_{\varLambda})}
\bigtriangleup_{1}\{(N)_{\varLambda^{c}}\}\oplus
\bigtriangleup_{1}\{((N)_{\varLambda}\hookrightarrow_
{(i_{\varLambda},\iota_{\varLambda})}
(M)_{\varGamma})\}\oplus
\bigtriangleup_{1}\{(M)_{\varGamma^{c}}\} \\
=\sum\limits_{\varLambda\subset\underline{n},\,
\varLambda\not=\emptyset}
\sum\limits_{(i_{\varLambda},\iota_{\varLambda})}
(\sum\limits_{\varLambda_{c,1}\subset\varLambda^{c}}
\{(N)_{\varLambda_{c,1}}\}\otimes\{(N)_{\varLambda_{c,2}}\})
\oplus(\sum\limits_{\varGamma_{c,1}\subset\varGamma^{c}}
\{(M)_{\varGamma_{c,1}}\}\otimes\{(M)_{\varGamma_{c,2}}\}) \\
\oplus(\sum\limits_{\varGamma_{1}}
\{((N)_{\varLambda_{1}}\hookrightarrow_
{(i_{\varLambda_{1}},\iota_{\varLambda_{1}})}
(M)_{\varGamma_{1}})\}\otimes
\{((N)_{\varLambda_{2}}\hookrightarrow_
{(i_{\varLambda_{2}},\iota_{\varLambda_{2}})}
(M)_{\varGamma_{2}})\}) \\
=\sum\limits_{\varLambda\subset\underline{n},\,
\varLambda\not=\emptyset}
\sum\limits_{(i_{\varLambda},\iota_{\varLambda})}
\sum\limits_{\varLambda_{c,1}\subset\varLambda^{c}}
\sum\limits_{\varGamma_{c,1}\subset\varGamma^{c}}
(\{(N)_{\varLambda_{c,1}}\}\oplus
\{((N)_{\varLambda_{1}}\hookrightarrow_
{(i_{\varLambda_{1}},\iota_{\varLambda_{1}})}
(M)_{\varGamma_{1}})\} \\
\oplus\{(M)_{\varGamma_{c,1}}\}) 
\otimes(\{(N)_{\varLambda_{c,2}}\}\oplus
\{((N)_{\varLambda_{2}}\hookrightarrow_
{(i_{\varLambda_{2}},\iota_{\varLambda_{2}})}
(M)_{\varGamma_{2}})\}\oplus\{(M)_{\varGamma_{c,2}}\}),
\end{array}
$$
where $\{\varLambda_{c,1},\varLambda_{c,2},
\varLambda_{1},\varLambda_{2}\}
\in\mathbf{Part}([n])$,
$\varLambda_{c,1}\cup\varLambda_{c,2}=\varLambda^{c}$,
$\varLambda_{1}\cup\varLambda_{2}=\varLambda$,
$\{\varGamma_{c,1},\varGamma_{c,2},
\varGamma_{1},\varGamma_{2}\}\in\mathbf{Part}([m])$,
$\varGamma_{c,1}\cup\varGamma_{c,2}=\varGamma^{c}$,
$\varGamma_{1}\cup\varGamma_{2}=\varGamma$.
Recalling the proof of propossition 3.3,

$$
\varGamma_{a}=\{i\in\varGamma|\,\exists j\in\varLambda_{a},
\,\,s.t.\,\,i_{j}\in M_{i}\},\,\,a=1,2,
$$
thus $\varGamma_{a}$ is determined by
$\varLambda_{a}$ ($a=1,2$).

Now we take $\varLambda^{(a)}=\varLambda_{c,a}
\cup\varLambda_{a}$ ($a=1,2$), thus,
$\varGamma^{(a)}=\varGamma_{a}\cup\varGamma_{c,a}$
($a=1,2$). Then we have

$$
\begin{array}{c}
\bigtriangleup_{1}(\{(N)_{[n]}\}
\bullet\{(M)_{[m]}\}) \\
=\sum\limits_{\varLambda^{(1)},\varLambda^{(2)},
\varGamma^{(1)},\varGamma^{(2)},}
(\{(N)_{\varLambda^{(1)}}\}\bullet
\{(M)_{\varGamma^{(1)}}\})\otimes
(\{(N)_{\varLambda^{(2)}}\}\bullet
\{(M)_{\varGamma^{(2)}}\}),
\end{array}
$$
where $\varLambda^{(1)}$ or $\varLambda^{(2)}$
may be emptyset, for example, when
$\varLambda^{(1)}=\emptyset$, we define
$\{(N)_{\varLambda^{(1)}}\}=\{0\}$.

$\mathbf{The\,\,situation\,\,of\,\,the\,\,right\,\,side:}$

By definition 4.2 we have

$$
\bigtriangleup_{1}\{(N)_{[n]}\}
=\sum\limits_{\varLambda\subset[n]}
\{(N)_{\varLambda}\}\otimes\{(N)_{\varLambda^{c}}\},\,\,
\bigtriangleup_{1}\{(M)_{[m]}\}
=\sum\limits_{\varGamma\subset[m]}
\{(M)_{\varGamma}\}\otimes\{(M)_{\varGamma^{c}}\}.
$$
Therefore we have

$$
\begin{array}{c}
\bigtriangleup_{1}\{(N)_{[n]}\}\bullet
\bigtriangleup_{1}\{(M)_{[m]}\} \\
=\sum\limits_{\varLambda\subset[n],
\varGamma\subset[m]}
(\{(N)_{\varLambda}\}\bullet\{(M)_{\varGamma}\})\otimes
(\{(N)_{\varLambda^{c}}\}\bullet\{(M)_{\varGamma^{c}}\}).
\end{array}
$$
Comparing the expressions on the both 
sides of (4.11), we know that the formula
(4.11) is valid.

\end{proof}

Recalling the contents in section 2, we know that
the tuple $(\mathcal{H}_{adj},\oplus,\{0\},
\bigtriangleup,\eta)$ is a bialgebra. It is easy
to check that the tuple $(\mathcal{H}_{adj},\bullet,\{0\},
\bigtriangleup_{1},\eta)$ is also a bialgebra.
We consider the reduced coproduct 
$\overline{\bigtriangleup_{1}}$,

$$
\overline{\bigtriangleup_{1}}\{M\}=
\bigtriangleup_{1}\{M\}-\{M\}\otimes\{0\}
-\{0\}\otimes\{M\},\,\{M\}\in M_{adj}(+\infty,\mathbb{N}),
\{M\}\not=\{0\}.
$$
Due to the formula (4.2), there is a obvious 
conclusion as follows.

\begin{proposition}
For each $\{M\}\in M_{adj}(+\infty,\mathbb{N})$
($\{M\}\not=\{0\}$), there is a positive integer
$k$ such that

$$
\overline{\bigtriangleup_{1}}^{k}\{M\}=\{0\},
$$
where 

$$
\begin{array}{ccc}
\overline{\bigtriangleup_{1}}^{k+1}=
(\overline{\bigtriangleup_{1}}\otimes &
\underbrace{1\otimes\cdots\otimes 1} &)
\overline{\bigtriangleup_{1}}^{k}. \\
 & k-times &
\end{array}
$$
\end{proposition}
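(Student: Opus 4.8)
The plan is to track the number of connected components and to show that every application of $\overline{\bigtriangleup_1}$ breaks a given element into tensor factors, each of which still carries at least one connected component, so that the process must terminate. Write $\{M\}=\bigoplus_{i=1}^{m}\{M_i\}$ with each $\{M_i\}$ connected, so $\{M\}$ has exactly $m$ connected components. From Definition 4.2 the reduced coproduct is
$$\overline{\bigtriangleup_1}\{M\}=\sum_{I\subset[m],\,I,I^{c}\not=\emptyset}\{(M)_I\}\otimes\{(M)_{I^{c}}\},$$
and since both $I$ and $I^{c}=[m]\setminus I$ are nonempty, each factor $\{(M)_I\}$ (resp. $\{(M)_{I^{c}}\}$) is a direct sum of $|I|$ (resp. $|I^{c}|$) connected components, with $1\leq|I|,|I^{c}|\leq m-1$.

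First I would establish by induction on $k$ that
$$\overline{\bigtriangleup_1}^{\,k}\{M\}=\sum_{(I_1,\dots,I_{k+1})}\{(M)_{I_1}\}\otimes\cdots\otimes\{(M)_{I_{k+1}}\},$$
where the sum runs over all sequences $(I_1,\dots,I_{k+1})$ of pairwise disjoint \emph{nonempty} subsets of $[m]$ with $\bigcup_j I_j=[m]$. The base case $k=1$ is exactly the displayed formula for $\overline{\bigtriangleup_1}$. For the inductive step I follow the recursion in the statement, applying $\overline{\bigtriangleup_1}\otimes 1\otimes\cdots\otimes 1$ to such a term: this splits only the first factor $\{(M)_{I_1}\}$ into $\{(M)_{I_1'}\}\otimes\{(M)_{I_1''}\}$ ranging over all partitions of $I_1$ into two nonempty blocks $I_1',I_1''$, refining the sequence into one with $k+2$ nonempty blocks and reproducing the claimed formula at level $k+1$. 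Co-associativity of $\bigtriangleup_1$ (Theorem 4.3) guarantees that $\overline{\bigtriangleup_1}^{\,k}$ is well defined independently of where the factors are split.

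With the iterated formula in hand the conclusion is immediate: the set $[m]$ has $m$ elements and hence admits no decomposition into more than $m$ pairwise disjoint nonempty blocks. Therefore whenever $k+1>m$ the index set of the sum is empty and $\overline{\bigtriangleup_1}^{\,k}\{M\}=\{0\}$. In particular the positive integer $k=m$, the number of connected components of $\{M\}$, satisfies $\overline{\bigtriangleup_1}^{\,m}\{M\}=\{0\}$, which proves conilpotency.

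The only genuine work is the inductive identification of $\overline{\bigtriangleup_1}^{\,k}$ with the sum over ordered set-partitions of $[m]$ into nonempty blocks, where one must verify that the reduced coproduct never produces an empty block; this is precisely the point at which the explicit form of $\bigtriangleup_1$ from (4.9) and its co-associativity are invoked. Once that bookkeeping is settled, the finiteness of the number of connected components forces termination, so I anticipate no obstacle beyond the careful indexing of the iterated partitions.
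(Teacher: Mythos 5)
Your proof is correct and follows the same route the paper intends: the paper states this proposition as an immediate consequence of the explicit form of $\bigtriangleup_{1}$ (formula (4.9)), since the reduced coproduct only redistributes the finitely many connected components of $\{M\}$ into nonempty tensor factors. Your inductive identification of $\overline{\bigtriangleup_{1}}^{\,k}\{M\}$ with the sum over ordered partitions of $[m]$ into $k+1$ nonempty blocks, forcing vanishing once $k\geq m$, is exactly the bookkeeping the paper leaves implicit.
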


Proposition 4.4 means that
$(\mathcal{H}_{adj},\bullet,\{0\},
\bigtriangleup_{1},\eta)$ is a conilpotent
bialgebra, therefore, a Hopf algebra.
Similar to the situation of $\mathcal{H}_{adj}^{\ast}$,
the formula (4.2) of the coproduct $\bigtriangleup_{1}$
shows that $\{M\}\in M_{adj}(+\infty,\mathbb{N})$
is connected if and only if

$$
\bigtriangleup_{1}\{M\}=\{M\}\otimes\{0\}
+\{0\}\otimes\{M\}.
$$
Therefore, we have

$$
\mathbf{P}(\mathcal{H}_{adj})
=\mathbf{Span}_{\mathbb{C}}
\{\{M\}\in M_{adj}(+\infty,\mathbb{N})|
\,\{M\}\,\,is\,\,connected\},
$$
where $\mathbf{P}(\mathcal{H}_{adj})$ denotes
the set of the all primitive elements of
$(\mathcal{H}_{adj},\bullet,\{0\},
\bigtriangleup_{1},\eta)$. Let $\{M\},\{N\}\in
M_{adj}(+\infty,\mathbb{N})$ be connected,
then the product $\bullet$ induces a Lie bracket
as follows,

\begin{equation}
[\{M\},\{N\}]=\{M\}\bullet\{N\}
-\{N\}\bullet\{M\}.
\end{equation}
By the formula (4.4) we have

\begin{equation}
[\{M\},\{N\}]=\sum\limits_
{(j,\tau_{j})}\{(M\hookrightarrow_{(j,\tau_{j})}N)\}-
\sum\limits_{(i,\iota_{i})}
\{(N\hookrightarrow_{(i,\iota_{i})}M)\}.
\end{equation}
The formula (4.13) implies that
$[\{M\},\{N\}]\in\mathbf{P}(\mathcal{H}_{adj})$
for $\{M\},\{N\}\in\mathbf{P}(\mathcal{H}_{adj})$.
Hence $\mathbf{P}(\mathcal{H}_{adj})$ is
a Lie algebra. According to Milnor-Moore
theorem (see ?) we know that 

$$
\mathcal{H}_{adj}\cong 
U(\mathbf{P}(\mathcal{H}_{adj})),
$$
i.e. as a Hopf algebra, 
$(\mathcal{H}_{adj},\bullet,\{0\},
\bigtriangleup_{1},\eta)$ is isomorphic
to the enveloping algebra of
$\mathbf{P}(\mathcal{H}_{adj})$,
$U(\mathbf{P}(\mathcal{H}_{adj}))$.
Actually, with the help of the formula
(4.6), we can directly prove that
$\{(M)_{[m]}\}$ can be expressed
by a polynormial of the elements in
$\mathbf{P}(\mathcal{H}_{adj})$.
Precisely, let $\{(M)_{[m]}\}=
\bigoplus_{i=1}^{m}\{M_{i}\}$, each $\{M_{i}\}$
is connected ($i=1,\cdots,m$). Then,
by induction on $m$, we can prove that
$\bigoplus_{i=1}^{m}\{M_{i}\}$ can be
expressed as a polynoremial of $\{M_{i}\}$
($i=1,\cdots,m$) and their insertions 
under the multiplication $\bullet$.

\begin{remark}
Based on the correspondence between the
adjacency matrices and Feynman diagrams,
the Hopf algebra $(\mathcal{H}_{adj},\bullet,\{0\},
\bigtriangleup_{1},\eta)$ means there
is another Hopf algebra structure on the
set of Feynman diagrams induced from the
dual of Connes-Kreimer hopf algebra.
\end{remark}

\end{document}